\documentclass[a4paper,11pt]{article}
\usepackage{amsfonts,amssymb,amsmath,mathrsfs,graphicx}
\usepackage{booktabs}
\usepackage{tabularx}
\usepackage{array}
\usepackage{titlesec}
\usepackage{amsthm}
\usepackage{amsmath}
\usepackage{amssymb}
\usepackage{color}
\usepackage{mathtools}
\usepackage{geometry}
\usepackage[normalem]{ulem}
\allowdisplaybreaks

\usepackage{algorithm2e}

\usepackage{verbatim}
\numberwithin{equation}{section}

\newtheorem{theorem}{Theorem}[section]

\newtheorem{lemma}[theorem]{Lemma}

\theoremstyle{definition}

\newtheorem{remark}[theorem]{Remark}
\newcommand{\bfs}[1]{{\boldsymbol #1}}
\newcommand{\bfb}{\boldsymbol{b}}

\newcommand{\bfss}{\boldsymbol{s}}
\newcommand{\bfx}{\boldsymbol{x}}
\newcommand{\bfz}{\boldsymbol{z}}
\newcommand{\bfy}{\boldsymbol{y}}
\newcommand{\bfzr}{\boldsymbol{0}}
\newcommand{\bfv}{\boldsymbol{v}}
\newcommand{\bfn}{\boldsymbol{n}}
\newcommand{\bft}{\boldsymbol{t}}
\newcommand{\bfu}{\boldsymbol{u}_o}
\newcommand{\bfuu}{\boldsymbol{u}}
\newcommand{\bff}{\boldsymbol{f}}
\newcommand{\bfF}{\boldsymbol{F}}
\newcommand{\bfM}{\boldsymbol{M}}
\newcommand{\bfw}{\boldsymbol{w}}

\newcommand{\bfc}{\boldsymbol{c}}

\newcommand{\bbr}{\mathbb R}

\usepackage{array}
\newcolumntype{C}[1]{>{\centering\arraybackslash}m{#1}}
\usepackage{graphics}
\usepackage{amssymb, latexsym, amsmath, amsfonts, epsfig}
\usepackage{bm}
\usepackage{graphicx}
\usepackage{color}
\usepackage{epstopdf}
\usepackage{comment}
\usepackage{soul}
\usepackage[normalem]{ulem}
\usepackage{float}
\usepackage[sort&compress,numbers]{natbib}

\def\d{{  \rm d}}

\newcommand{\T}{\mathcal{T}}

\newcommand{\E}{\mathcal{E}}

\usepackage{caption}
\date{\vspace{-6ex}}

\begin{document}

\newcommand{\Question}[1]{{\marginpar{\color{blue}\footnotesize #1}}}
\newcommand{\blue}[1]{{\color{blue}#1}}
\newcommand{\red}[1]{{\color{red} #1}}

\newif \ifNUM \NUMtrue

\title{Particle, kinetic and hydrodynamic models for sea ice floes. Part II: Rotating floes with environmental forces} 

\author{Quanling Deng\thanks{Yau Mathematical Sciences Center, Tsinghua University, Beijing, 100084, China (qldeng@tsinghua.edu.cn); School of Computing, Australian National University, Canberra, ACT 2601, Australia (quanling.deng@anu.edu.au).}
\and Seung-Yeal Ha\thanks{Department of Mathematical Sciences and Research Institute of Mathematics, Seoul National University, Seoul 08826, Republic of Korea (syha@snu.ac.kr).}
\and Jaemoon Lee\thanks{Department of Mathematical Sciences, Seoul National University, Seoul 08826, Republic of Korea (dlwoans0001@snu.ac.kr).}
\and Alberto Alberello\thanks{School of Engineering, Mathematics and Physics, University of East Anglia, NR4 7TJ, Norwich, UK (alberto.alberello@outlook.com).} 
\and Luke G. Bennetts\thanks{School of Mathematics and Statistics, The University of Melbourne, Parkville, VIC 3010, Australia (luke.bennetts@unimelb.edu.au).}
}

\maketitle

\begin{abstract}
This paper builds on the multiscale modeling framework introduced in Part I (Deng and Ha, \textit{Physica D: Nonlinear Phenomena 483 (2025) 134951}) for sea-ice floe dynamics with non-rotating floes to the case with rotational floes, nonlinear contact interactions, Coriolis force and ocean tilt. Building on the particle–kinetic–hydrodynamic hierarchy developed for non-rotating floes, we generalize the particle model to describe ice floes as rigid bodies characterized by position, linear velocity, angular velocity, size, and moment of inertia. The interaction rules now include nonlinear contact forces and torques arising from short-range compression, restitution, and tangential friction laws, together with both oceanic and atmospheric drags that couple translational and rotational motions. These particle descriptions lead to an enriched Vlasov-type kinetic equation posed on an extended phase space, whose moments yield a hydrodynamic system for mass, momentum, and angular-momentum balances. Compared with Part I, the resulting macroscopic equations feature additional hydrodynamic and stress contributions, rotational transport, and dissipative mechanisms stemming from nonlinear collisions. The proposed framework provides a more accurate description of sea-ice floe dynamics and offers a systematic pathway toward multiscale modeling of sea-ice rheology under complex environmental forcing.
%
\end{abstract}
%
%

\paragraph*{Keywords:}
Sea ice floe dynamics, Hertz contact mechanics, energy dissipation, mean-field approximation, hydrodynamic limit.

\section{Introduction} 
Sea ice plays a fundamental role in Earth's polar climate system by regulating heat exchange, momentum transfer, and biogeochemical processes at the ocean--atmosphere interface \cite{lund2020arctic,bennetts2024closing,massom2025influence}. Its dynamics emerge from the complex interplay between environmental forcing, including wind, ocean currents, Coriolis, and waves, and mechanical interactions among heterogeneous floes (cf., \cite{parkinson1979large, feltham2008sea,blockley2020future, hunke2011multiphase, roach2018emergent, meylan2021floe, alberello2020drift, alberello2022three}). In the Marginal Ice Zone (MIZ), where the ice cover is highly fragmented, individual floe-floe interactions play a vital role in the mechanical behavior of the ice pack, giving rise to phenomena that are often described by particle floes (cf., \cite{bennetts2022marginal,bennetts2022theory}).

Continuum-based models, such as elastic-plastic  \cite{coon1974modeling}, viscous--plastic \cite{hibler1979dynamic,shen1987role}, elastic--viscous--plastic \cite{hunke1997elastic}, and Maxwell-type rheological models \cite{dansereau2016maxwell}, have been developed for simulating the large-scale behavior of compact ice packs (for ice sheets and glaciers, we refer to, e.g., \cite{nye1959motion}). However, the continuum assumption can break down in both large and small scales in the MIZ or other areas \cite{weiss2017linking}, where collisions, heterogeneity, and anisotropic deformation fields produce granular-like behavior that cannot be adequately represented by homogenized constitutive laws \cite{marquart2025wice}. These limitations have motivated the development of particle-based (discrete element) approaches \cite{hopkins2004discrete, lindsay2004new, wilchinsky2006modelling, de2024modelling, damsgaard2018application, manucharyan2022subzero, bateson2025simulating}, which model sea ice as a collection of rigid bodies and capture processes such as collisions, ridging, rafting, and fracture patterns.
In earlier model developments, Gudkovich \textit{et al.} \cite{gudkovich1963} studied the dynamics of an individual floe in 1963, based on the observations of Nansen \cite{nansen1902norwegian} in 1902, and developed a model to describe the speed and direction of drift that depended on floe size and shape. 
We refer to \cite{hunke2010sea,blockley2020future,golden2020modeling} for further discussions on sea ice modelling.

One of the challenges in particle-based modeling lies in accurately representing contact interactions \cite{herman2016discrete,damsgaard2018application}. Nonlinear contact mechanics, including Hertzian-type normal forces, velocity-dependent restitution, tangential friction laws, and torque transfer at contact points, play a critical role in determining floe trajectories and rotational dynamics. Such nonlinear collision models, widely used in granular and rigid-body simulations~\cite{johnson1987contact, cundall1979discrete}, have recently been applied to sea ice floes \cite{herman2016discrete}. Rotational degrees of freedom further enrich the dynamics: collisions generate torques; frictional impulses modify angular momentum; and ocean drag acts simultaneously on translational and rotational motions. These effects contribute to complex behaviors such as spin alignment, rotational clustering, collision-induced jamming, and enhanced dissipation, all observed in high-resolution DEM studies~\cite{herman2016discrete}.

In Part~I of this series~\cite{deng2025particle}, the first two authors developed a particle--kinetic--hydrodynamic hierarchy for non-rotating cylindrical floes governed by linear contact forces. The formulation established a systematic pathway from Newtonian-type particle models to Vlasov-type kinetic descriptions and further to hydrodynamic equations based on velocity moments. While the non-rotating framework provides conceptual clarity, it neglects nonlinear contact mechanics, rotational effects, and complex environmental forcings such as oceanic and atmospheric drags that are essential for sea-ice dynamics in the MIZ.

The present paper extends the multiscale framework of Part~I by incorporating floe rotation, nonlinear contact force, atmospheric drag, Coriolis force, and ocean tilt into the model in a realistic setting enlightened by \cite{coon1974modeling, feltham2008sea, alberello2020drift}. This provides a pathway to couple particle and continuum models within a multiscale framework \cite{deng2024particle}, together with their integration into data assimilation schemes \cite{deng2025lemda,chen2022superfloe,chen2022efficient}, enabling more accurate predictions. At the particle level, each floe is modeled as a rigid body characterized by its position, orientation, linear velocity, angular velocity, size, and moment of inertia. Collisions generate both forces and torques, described through physically realistic nonlinear force laws and frictional interactions. This leads to a richer kinetic formulation defined on an extended phase space that includes angular velocities and orientations. Suitable moment closures then yield hydrodynamic equations that track mass, linear momentum, and angular momentum, together with stress and couple-stress contributions arising from torque-generating interactions.
The objective of Part~II is to establish a rigorous theoretical foundation for such a rotating particle--kinetic--hydrodynamic hierarchy. Our main contributions include:
\begin{itemize}
    \item a detailed study on the behavior of total momentum and energy of a particle model for rotating floes with nonlinear contact force, atmospheric drag, Coriolis force, and ocean tilt in a realistic setting;
    \item the derivation of a kinetic model in an extended phase space with rotational dynamics, followed by a study on the momentum and energy;
    \item the development and study of a hydrodynamic system that captures macroscopic evolution of both linear and angular momentum and energies;
    \item numerical simulations for the comparison of multiscale descriptions, including a realistic scenario with observed buoy trajectories and environmental (atmospheric, oceanic, Coriolis, and ocean tilt) forcing.

\end{itemize}

\begin{table}[ht]
\centering
\small
\setlength{\tabcolsep}{4pt}
\renewcommand{\arraystretch}{1.2}

\begin{tabularx}{\textwidth}{
|>{\raggedright\arraybackslash}p{0.18\textwidth}
|>{\raggedright\arraybackslash}p{0.30\textwidth}
|>{\raggedright\arraybackslash}X|
}
\hline
Feature & Part I & Present Part II \\
\hline

Floe motion 
& Translational motion only 
& Translational and rotational motion \\
\hline

Phase variables 
& $x,v,r,h$ 
& $x,v,\theta,\omega,r,h$ \\
\hline

Contact model 
& Linear  
& Nonlinear with tangential friction \\
\hline

Torque 
& Not included 
& Contact/Drag-induced torque and rotation \\
\hline

Wind drag
& Not included 
& Quadratic wind drag \\
\hline

Coriolis force
& Not included 
& Included \\
\hline

Ocean tilt
& Not included 
& Included \\
\hline

Balance laws 
& Mass, momentum, and energy 
& Includes rotational parts \\
\hline

Kinetic limit 
& Translational Vlasov--McKean equation 
& Vlasov--McKean equation with rotational dynamics \\
\hline

Hydrodynamic closure 
& Non-rotating closure 
& Closure with angular-momentum transport \\
\hline
\end{tabularx}

\caption{Comparison between Part I and the present work.}
\label{tab:1}
\end{table}

The main differences between Parts I and II are summarized in Table~\ref{tab:1}. 
The main distinction lies in the incorporation of rotational dynamics and environmental forcing into the particle model, which leads to additional rotational contributions in the balance laws, as well as in the kinetic and hydrodynamic limits. 
For the theory development, the primary challenge arises from the rotational mechanisms, which introduce additional nonlinear coupling, functional structures, and energy–momentum terms that are significantly more difficult to analyze, particularly in establishing rigorous energy dissipation properties and related estimates.

\color{black}

The rest of the paper is organized as follows.  
In Section \ref{sec:pmodel}, we introduce the rotating sea-ice floe particle model with nonlinear contact laws and environmental forces in Subsection \ref{sec:dem}, followed by a study on the asymptotic behavior of total momentum and energy in Subsection \ref{sec:pmodelab}. 
In Section~\ref{sec:kmodel} and Section~\ref{sec:hmodel}, we present the corresponding kinetic description using the mean-field limit, followed by a hydrodynamic model with monokinetic closure. 
We also study the asymptotic behavior on momentum and energies of these models. 
In Section~\ref{sec:num}, we provide several numerical examples to validate our theoretical findings.
Finally, Section~\ref{sec:conclusion} is devoted to the summary of the paper with a discussion of the main findings and future directions.

\section{Particle description of ice floes} \label{sec:pmodel}

\subsection{The particle model} \label{sec:dem}

We introduce a particle-based model (often referred to as a discrete element model, DEM) for the dynamics of sea ice floes in the MIZ. 
Each floe is modeled as a rigid cylindrical particle that can translate and rotate in the horizontal plane with the following features

\begin{itemize}
    \item floe--floe interactions with nonlinear Hertz contact forces and Coulomb-limited tangential friction, allowing both translation and rotational slip during collisions. 
    
    \item the motion of each floe is coupled to the atmospheric winds via the quadratic drag law acting on the translational velocity and rotational vorticity. 
    
    \item oceanic forcing is included similarly through ocean (quadratic) drag. 
    
    \item Coriolis force due to Earth's rotation is incorporated to capture the large-scale deflection of floe motion in polar regions. 
    
    \item the model accounts for the effect of sea-surface ocean tilt through a gravity-slope force acting along the gradient of the free-surface elevation. 

\end{itemize}

We refer to Figure \ref{fig:floe} for a schematic for the floe particle model. 
\begin{figure}[ht]
    \centering \includegraphics[width=13cm]{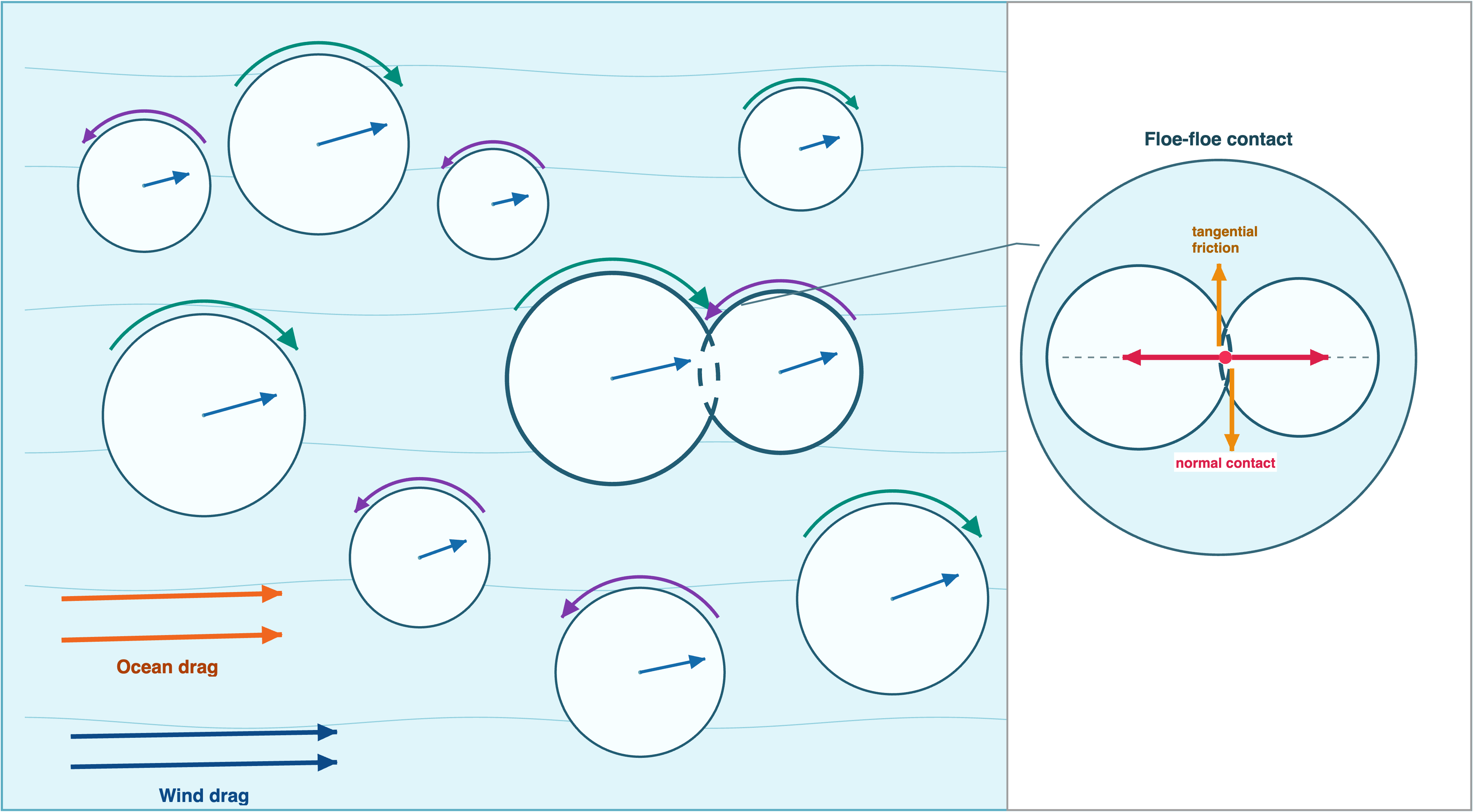} 
    \caption{A schematic for rotating and interacting floes with environmental forces.}
    \label{fig:floe}
\end{figure}

Specifically, consider colliding ice floes modeled as cylinders with motion in the $(x,y)$ plane. 
Given a system of $n$ floes, 
we denote by $r^i$ the radius and $h^i$ the thickness (or height) of the $i$-th floe with $i \in [n] := \{1,2, \cdots, n \}$. 
The radius and thickness characterize the floe size. 
In a realistic sea ice floe setting, the floe size has been observed to follow a power law distribution \cite{stern2018seasonal}, while the floe thickness distribution follows a Gamma distribution (see, for example, \cite{thorndike1975thickness, bourke1987sea, toppaladoddi2015theory} for the Arctic region and \cite{toyota2011size} for the Antarctic region). 
The mass of the $i$-th floe is $m^i=\rho_{ice}\pi (r^i)^2 h^i$, where the constant $\rho_{ice}$ is the density of sea ice floes. 
We assume that the mass of each floe does not change over time (thus, no melting, freezing, or fracturing).
We denote by $I^i$ the moment of inertia.
The floe position is denoted by $\bfx^i=(x^i,y^i)^T$ and the floe velocity is $\bfv^i=(u^i,v^i)^T$. 
We denote the floe angular displacement as $\theta^i$ and angular velocity as $ \omega^i\hat{\bfs{z}}$. Herein, $\hat{\bfs{z}}$ is the unit vector along the $z$-axis (perpendicular to the $(x,y)$ plane). 
We consider the angular velocity as a scalar in this dimension and omit the multiplication of $\hat{\bfs{z}}$ for simplicity.
Thus, if the result of the cross product is a vector along the $z$-axis, we consider it as a $z$-value scalar as in \eqref{dem_t} and \eqref{dem_w} below. 

Let $\bfu = \bfu(x,y)$ be the given ocean surface velocity.
The governing equations of rotating-colliding sea ice floe dynamics are given by Newton's equation such that \cite{damsgaard2018application,deng2024particle}
\begin{subequations}\label{eq:dem}
\begin{align}
  \frac{\d\bfx^i}{\d t} &= \bfv^i, \quad i \in [n], \label{dem_x}\\
  \frac{\d\theta^i}{\d t}  &= \omega^i,\label{dem_t}\\
  m^i\frac{\d\bfv^i}{\d t} &=   \frac{1}{n} \sum_{j=1}^{n}(\bff_{\bfn}^{ij} + \bff_{\bft}^{ij}) + \alpha^i_o\left(\bfu^i -\bfv^i\right)\left| \bfu^i -\bfv^i\right| + \alpha^i_w\left(\bfw^i -\bfv^i\right)\left| \bfw^i -\bfv^i\right| \nonumber \\
  & \quad - m^i f_E \hat{\bfs{z}}\times \bfv^i  + m^i \bfuu_g 
  =: \bfF^i_c + \bfF^i_o + \bfF^i_w + \bfF^i_E + \bfF^i_g := \bfF^i,\label{dem_v}\\
  I^i\frac{\d\omega^i}{\d t} &= \frac{1}{n} \sum_{j=1}^{n} ( r^i\bfn^{ij}\times \bff_{\bft}^{ij})  + \beta^i_o \left(\nabla\times \bfu^i/2-\omega^i \right)\left|\nabla\times \bfu^i/2-\omega^i \right| \nonumber \\
  & \quad +  \beta^i_w \left(\nabla\times \bfw^i/2-\omega^i \right)\left|\nabla\times \bfw^i/2-\omega^i \right|
  =: T^i_c + T^i_o +T^i_w =: T^i,\label{dem_w}
\end{align}  
\end{subequations}
where $|\cdot|$ is the $\ell^2$-norm and
\begin{subequations}\label{eq:demparm}
\begin{align} 
    \bff_{\bfn}^{ij} & = \big( \kappa^{ij}_1 \delta^{ij} + \kappa^{ij}_2 (\bfv^i  - \bfv^j )\cdot\bfn^{ij} \big) \bfn^{ij}, \quad
     \bff_{\bft}^{ij}  = \zeta^{ij} \tilde{\bff}_{\bft}^{ij}, \quad  \tilde{\bff}_{\bft}^{ij} = \kappa^{ij}_3 \sigma^{ij} \bft^{ij}, \label{eq:prmfnft} \\
     \kappa^{ij}_1 & = \pi E_e \chi^{ij} h^{ij}_e g\left(\frac{ \delta^{ij} r_e^{ij}}{2(h^{ij}_e)^2} \right), \quad g(\xi) = \frac{0.9117\xi^2 - 0.2722\xi + 0.003324}{\xi^2 - 1.524\xi + 0.03159} \label{eq:prmk1g} \\
     \kappa^{ij}_2 & = \eta_0 \sqrt{ 5 \kappa^{ij}_1 m^{ij}_e}, \quad \kappa^{ij}_3 = 6 \frac{G_e}{E_e}\kappa^{ij}_1,  \label{eq:prmk2k3} \\
     \delta^{ij} & = d^{ij} - (r^i+r^j), 
    \quad\mbox{with}\quad d^{ij} = |\bfx^i-\bfx^j|, \label{eq:prmdelt} \\
     \sigma^{ij}_t & = \frac{\d \sigma^{ij}}{\d t} = (\bfv^j - \bfv^i)\cdot\bft^{ij} - r^j\omega^j - r^i\omega^i, \label{eq:prmsigt} \\ 
     \sigma^{ij} & = \int_{t_\text{start}}^t \sigma^{ij}_t(s) \d s  = t_c^{ij} \sigma^{ij}_t, \label{eq:prmsig} \\
     t_c^{ij} & \approx 2.94 \Big(\frac{m_e^{ij}}{\kappa_1^{ij}} \Big)^{2/5} |\bfv_i - \bfv_j|^{-1/5}.  \label{eq:prmtc}
\end{align}
\end{subequations}
For simplicity, we denote the environmental force and torque for floe $i$ at time $t$ as
\begin{equation}
    \label{eq:envft}
    \bfF^i_e = \bfF^i_o + \bfF^i_w + \bfF^i_E + \bfF^i_g, \qquad T^i_e = T^i_o +T^i_w.
\end{equation}
Herein, for a rigid circular floe with torque acting on the center of mass, the torques induced by the Coriolis force and ocean tilt are zero. 
The parameter 
$f_E$ denotes the Coriolis force associated with Earth's rotation, defined by
\[
f_E = 2\Omega \sin\phi,
\]
where $\Omega$ is the angular rotational velocity of the Earth (\( \approx 7.2921 \times 10^{-5}\) radians per second) and $\phi$ is the latitude depending on position $\bfx$. 
The Coriolis force $m^i f_E \hat{\bfs{z}}\times \bfv^i$ acts perpendicular to the floe velocity and accounts for the deflection of sea ice motion in the rotating reference frame of the Earth, which becomes important for large-scale sea ice drift in polar regions \cite{cushman2011introduction,alberello2020drift}.
The term $m^i \bfuu_g$ in \eqref{dem_v} describes the horizontal gravitational pull induced by a tilted ocean surface, where 
\[
\bfuu_g = U_g f_E (-\sin(f_E t), \cos(f_E t))^T
\]
with $U_g = 0.125$ the amplitude of the near-inertial oscillations estimated from the mean amplitude of the oscillations in 12–14 hr identified utilizing a band-pass filter \cite{alberello2020drift, cushman2011introduction, lepparanta2005drift}.
The material properties of sea ice are
\begin{equation} \label{eq:EG}
E_e = \frac{E}{2(1 - \nu^2)}, \quad \text{and} \quad G_e = \frac{E}{4(2+\nu)(1-\nu)},
\end{equation}
where $E$ is Young's modulus, and $\nu$ is Poisson's ratio (identical for all floes).

In the integration for $\sigma^{ij}$ defined in \eqref{eq:prmsig}, $t_\text{start}$ specifies the starting time of the collision, and $t_c^{ij}$ in \eqref{eq:prmtc} is an approximate contact duration time that compensates the integration \cite{gugan2000inelastic}.
$\chi$ in \eqref{eq:prmk1g} is a characteristic function
\begin{equation} \label{eq:chi}
\chi^{ij} = 
\begin{cases}
1, \quad \text{when} \ i\ne j\text{ and }\delta^{ij}<0, \\
0, \quad \text{otherwise},
\end{cases}
\end{equation}
which characterizes whether floes $i$ and $j$ are in contact or not. 
We multiply this by $\kappa_1^{ij}$ for notational simplicity. 
$\bff_{\bfn}^{ij}$ and $\bff_{\bft}^{ij}$ in \eqref{eq:prmfnft} are normal and tangential components of the contact force, respectively.
For notational convenience, we denote
$
\bff_c^{ij} = \bff_{\bfn}^{ij} + \bff_{\bft}^{ij}
$
with the auxiliary parameter $\zeta$ defined as
\begin{equation} \label{eq:zeta}
\zeta^{ij} = 
\begin{cases}
1, \quad & \text{when} \ |\tilde{\bff}_{\bft}^{ij}| \leq \mu |\bff_{\bfn}^{ij}|, \\
\frac{\mu |\bff_{\bfn}^{ij}|}{|\tilde{\bff}_{\bft}^{ij}| }, \quad & \text{otherwise}.
\end{cases}
\end{equation}
This imposes the Coulomb friction law that plays an important role in limiting the tangential contact force relative to the magnitude of the normal contact force \cite{hopkins2004discrete} such that
\begin{equation*} 
|\bff_{\bft}^{ij}| \leq \mu|\bff_{\bfn}^{ij}|,
\end{equation*}
 where $\mu$ is the coefficient of friction that characterizes the condition of the surfaces of two floes in contact.

We assume zero tangential damping as in \cite{cundall1979discrete,damsgaard2018application}. The floe contact force $\bfF^i$ and torque $T^i$ are non-zero only when two floes are in contact, i.e., $\delta^{ij}<0$ (see \eqref{eq:prmdelt}).
We also remark that for $\delta^{ij}<0$, we have $\xi<0$ and  $g(\xi)>0$. This guarantees the correct sign of $\kappa^{ij}_1 \delta^{ij}$ in the term $\bff_{\bfn}^{ij}$. 
We follow the Hertz contact theory \cite{hertz1882ueber,puttock1969elastic} and adopt the model for the normal contact force (see the supplementary material of \cite{herman2016discrete}).
 We assume $\delta^{ij}=0$ when $i=j$ (hence, $\bff_{\bfn}^{ij} = \bff_{\bft}^{ij}= \bf0$; by default, $i\ne j, i,j \in [n]$) for notational simplicity. 
$\sigma_t^{ij}$ in \eqref{eq:prmsigt} is the slip-rate along the tangential direction caused by both translation and rotation (relative contact velocity projected in the tangential direction). 
$\sigma$ is the rotational deformation, i.e., the tangential shear deformation. 
$\rho_o$ is ocean density. 
$\alpha^i_s, \beta^i_s,$ where $ s = o, w$ are drag coefficients. 

The vector 
\begin{align} \label{eq:n}
    \bfn^{ij} =  
    \begin{cases}
        \frac{\bfx^j-\bfx^i}{|\bfx^j-\bfx^i|}, & i\ne j,\\
        \bfzr, &  i=j
    \end{cases}
\end{align}
 is the unit normal vector pointing from the center of floe $i$ to floe $j$, 
$\bft^{ij}$ is the unit vector along the tangential direction (rotating $\bfn^{ij}$ counterclockwise by $90^\circ$),  $r^j$ is the radius of the $j$-th floe,
$E_e$ is the effective contact modulus, $h^{ij}_e:= \min\{h^i, h^j\}$ is the effective contact thickness (part of the thickness in contact), $m^{ij}_e := \frac{m^i m^j}{m^i + m^j}$ is the effective mass, $r^{ij}_e := \frac{r^i r^j}{r^i + r^j}$ is the effective radius, and in \eqref{eq:prmk2k3}
\begin{equation} \label{eq:beta}
\eta_0 := \frac{\ln e_r}{\sqrt{\ln^2 e_r + \pi^2} } < 0.
\end{equation}
Here $e_r$ is the restitution coefficient in $(0, 1)$. The restitution coefficient for ice floes is a measure of kinetic energy loss during floe collisions. The coefficient is usually a positive real number between 0 and 1. A value of 0 indicates a perfectly inelastic collision, while a value of 1 indicates a perfectly elastic collision. The typical values of $e_r$  for sea ice are between 0.1 and 0.3 \cite{li2020laboratory}. In sea ice studies, values between 0 and 1 are often employed; see, for example \cite{herman2019wave}. 
 Details are referred to equations (1), (8), (20), and (26) in the supplementary material of \cite{herman2016discrete}.
We scale the summation of the contact force by the factor of $1/n$ to apply the mean-field theory for deriving the kinetic and hydrodynamic models.
For a system with a fixed number of floes, this scaling factor may be understood as a factor of the mass and drag coefficient.

\color{black}

\subsection{Asymptotic behavior} \label{sec:pmodelab}

To study the asymptotic behavior of the particle model  \eqref{eq:dem}, we first define the strain and kinetic energies. 
The normal strain energy for two colliding floes is defined as 
\begin{equation}\label{eq:nse}
    M_{2,\bfx}^{ij} := \int_0^{\delta^{ij}} \bff_{\bfn,s}^{ij} \d s = \int_0^{\delta^{ij}}  \kappa^{ij}_1(s) s \d s.
\end{equation} 
 %
 This normal strain energy evolves with respect to time. Using the Leibniz integral rule, we have
  \begin{equation} \label{eq:nsedt}
 \frac{\d M_{2,\bfx}^{ij}}{\d t} 
 = \frac{\d}{\d t} \int_0^\delta   \kappa^{ij}_1(s) s \d s  =   \kappa^{ij}_1(\delta^{ij}) \delta^{ij} \frac{\d \delta^{ij}}{\d t}.
 \end{equation}


\noindent
We define the floe moments as
\begin{equation} \label{eq:Ms}
\begin{aligned}
    M_0 & = \sum_{i=1}^n m^i, \qquad \bfM_{1,\bfv} = \sum_{i=1}^n m^i \bfv^i, 
    \qquad M_{1,\omega} = \sum_{i=1}^n (m^i \bfx^i \times \bfv^i + I^i \omega^i), \\
    M_2 & = M_{2,\bfv} + M_{2,\bfx} + M_{2,\omega}, \\
    M_{2,\bfv} &= \frac{1}{2} \sum_{i=1}^n m^i | \bfv^i|^2, \qquad M_{2,\bfx} = \frac{1}{2n}\sum_{i,j=1}^n M_{2,\bfx}^{ij}, \qquad    M_{2,\omega}  = \frac{1}{2} \sum_{i=1}^n I^i (\omega^i)^2,
    \end{aligned}
\end{equation}
where $M_0$ is the zero-order moment, i.e., total mass. 
$\bfM_{1,\bfv} $ is the total momentum.
$M_{1,\omega}$ is the total angular momentum, which includes the orbital angular momentum (due to translational motion) and the spin angular momentum (due to rotation about the center of mass). 
$M_{2,\bfx}$,  $M_{2,\bfv},\ M_{2,\omega}$, and  $M_2$ represent the total normal strain energy, the total translational kinetic energy, the total rotational kinetic energy, and the total energy, respectively. 
The scaling $\frac{1}{n}$ in $M_{2,\bfx}$ ensures that total strain energy remain $\mathcal{O}(1)$ in the mean-field limit as $n\to \infty.$
Note that the energies are positive by definition.

\begin{lemma}[Total momentum balances] \label{lem:sumparticleMV}
Let $(\bfx^i, \bfv^i, \theta^i, \omega^i), i\in [n],$ be a global solution to the system \eqref{eq:dem}. The following assertions hold.
\begin{enumerate}
\item The total linear momentum satisfies
\begin{align}
\begin{aligned} \label{eq:dtM1v}
\frac{\d\bfM_{1,\bfv}}{\d t}  = \sum_{i=1}^{n} \bfF^i_e.
\end{aligned}
\end{align}

\item The total angular momentum satisfies
\begin{align}
\begin{aligned} \label{eq:dtM1w}
\frac{\d M_{1,\omega}}{\d t}  = \sum_{i=1}^{n} \big[ \bfx^i \times \bfF^i_e
      + T^i_e \big].
\end{aligned}
\end{align}
\end{enumerate}
\end{lemma}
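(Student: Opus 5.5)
The plan is a direct computation: differentiate the moments in \eqref{eq:Ms} along solutions of \eqref{eq:dem}, then show that all floe--floe contact contributions cancel in pairs. This cancellation rests on the pairwise antisymmetry of the contact force. So the first step is a symmetry lemma for the coefficients in \eqref{eq:demparm}.

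\emph{Step 1 (symmetry).} Under the swap $i\leftrightarrow j$:
\begin{itemize}
\item By \eqref{eq:n}, $\bfn^{ji}=-\bfn^{ij}$. Since $\bft$ is $\bfn$ rotated by $90^\circ$, also $\bft^{ji}=-\bft^{ij}$.
\item $d^{ij}$, $\delta^{ij}$, $\chi^{ij}$, $h_e^{ij}$, $r_e^{ij}$ and $m_e^{ij}$ are symmetric. Hence $\kappa_1^{ij},\kappa_2^{ij},\kappa_3^{ij}$ and $t_c^{ij}$ are symmetric.
\item The normal relative velocity $(\bfv^i-\bfv^j)\cdot\bfn^{ij}$ is symmetric, because both factors change sign.
\item The slip rate is symmetric: $\sigma_t^{ji}=(\bfv^i-\bfv^j)\cdot\bft^{ji}-r^i\omega^i-r^j\omega^j=\sigma_t^{ij}$. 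Hence $\sigma^{ij}$ is symmetric.
\end{itemize}
It follows that $\bff_{\bfn}^{ji}=-\bff_{\bfn}^{ij}$ and $\tilde\bff_{\bft}^{ji}=-\tilde\bff_{\bft}^{ij}$. The Coulomb factor $\zeta^{ij}$ in \eqref{eq:zeta} depends only on the norms, so it is symmetric, and therefore $\bff_{\bft}^{ji}=-\bff_{\bft}^{ij}$. Altogether $\bff_c^{ji}=-\bff_c^{ij}$.

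\emph{Step 2 (linear momentum).} Since the masses are constant, $\frac{\d}{\d t}\bfM_{1,\bfv}=\sum_i\bfF^i=\frac1n\sum_{i,j}\bff_c^{ij}+\sum_i\bfF_e^i$. The double sum vanishes by relabeling $i\leftrightarrow j$ and using Step 1, which gives \eqref{eq:dtM1v}.

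\emph{Step 3 (angular momentum).} Differentiating and using $\bfv^i\times\bfv^i=0$ gives
\[
\frac{\d M_{1,\omega}}{\d t}=\sum_i\big(\bfx^i\times\bfF^i+T^i\big).
\]
The environmental parts give exactly the right-hand side of \eqref{eq:dtM1w}. The contact part is
\[
\frac1n\sum_{i,j}\big(\bfx^i\times\bff_c^{ij}+r^i\bfn^{ij}\times\bff_{\bft}^{ij}\big).
\]
Symmetrize this over $(i,j)$ and $(j,i)$ and use Step 1:
\begin{itemize}
\item The orbital terms combine to $(\bfx^i-\bfx^j)\times\bff_c^{ij}=-d^{ij}\,\bfn^{ij}\times\bff_c^{ij}=-d^{ij}\,\bfn^{ij}\times\bff_{\bft}^{ij}$, since $\bff_{\bfn}^{ij}\parallel\bfn^{ij}$.
\item The spin terms combine to $(r^i+r^j)\,\bfn^{ij}\times\bff_{\bft}^{ij}$.
\end{itemize}
Their sum is therefore
\[
-\frac{1}{2n}\sum_{i,j}\delta^{ij}\,\bfn^{ij}\times\bff_{\bft}^{ij}.
\]

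\emph{Main obstacle.} This residual term is not identically zero. It vanishes only if the lever arms of the two torques add up to the centre distance $d^{ij}$, i.e. if the torques are taken about a common contact point. With the lever arm $r^i$ used in \eqref{dem_w}, the total angular momentum is conserved only up to an $\mathcal O(|\delta^{ij}|)$ error, which is small for slight overlaps. To get \eqref{eq:dtM1w} exactly, I would first interpret the contact torque with a lever arm placing the contact point on the segment between the centres, for instance $d^{ij}r^i/(r^i+r^j)$ in place of $r^i$, so that the two lever arms sum to $d^{ij}$. Alternatively, I would state explicitly that the overlap correction $\delta^{ij}\bfn^{ij}\times\bff_{\bft}^{ij}$ is neglected, consistent with the small-deformation Hertz regime. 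With either convention the contact contribution cancels pairwise, exactly as in Step 2, and the lemma follows.
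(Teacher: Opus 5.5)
Your argument is correct and follows the paper's proof. Antisymmetry of $\bff_c^{ij}$ removes the contact sum from the linear momentum. For the angular momentum, the paper symmetrizes over $(i,j)$ as you do and then closes by postulating a common contact point $\bfc^{ij}=\bfx^i+r^i\bfn^{ij}=\bfc^{ji}$. That is exactly your second convention: since $\bfc^{ij}-\bfc^{ji}=-\delta^{ij}\bfn^{ij}$, the term the paper discards is precisely your residual $-\frac{1}{2n}\sum_{i,j}\delta^{ij}\,\bfn^{ij}\times\bff_{\bft}^{ij}$.

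The paper concedes in the remark after the proof that this amounts to setting $\delta^{ij}=0$ in the lever arm as a small-overlap approximation, so your diagnosis of the obstacle is accurate. Your Step~1, which checks the symmetry of $\sigma_t^{ij}$, $t_c^{ij}$ and $\zeta^{ij}$, is also more careful than the paper's one-line appeal to antisymmetry.
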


\begin{proof}
For the first assertion, we note the relations 
\[ \delta^{ij} = \delta^{ji}, \quad  \bfn^{ij} = -\bfn^{ji}, \quad  \bft^{ij} = -\bft^{ji}, \quad i, j \in [n],
\]
to see the anti-symmetry of the floe-floe contact forces
\begin{equation} \label{eq:cfnt}
\bff_{\bfn}^{ij} = -\bff_{\bfn}^{ji}, \quad \bff_{\bft}^{ij} = -\bff_{\bft}^{ji}, \quad i, j \in [n].
 \end{equation}
Now, we use \eqref{eq:dem}, \eqref{eq:envft}, \eqref{eq:Ms}, and  \eqref{eq:cfnt} to find 
\begin{align*}
\begin{aligned}
    \frac{\d \bfM_{1,\bfv}}{\d t} & = \frac{\d}{\d t} \sum_{i=1}^n m^i \bfv^i  = \sum_{i=1}^n m^i \frac{\d \bfv^i}{\d t}  =  \sum_{i=1}^n \Big( \frac{1}{n} \sum_{j=1}^{n} (\bff_{\bfn}^{ij} + \bff_{\bft}^{ij}) + \bfF^i_e \Big) \\
    &= \frac{1}{n} \sum_{i, j=1}^{n} (\bff_{\bfn}^{ij} + \bff_{\bft}^{ij})  + \sum_{i=1}^{n} \bfF^i_e =\sum_{i=1}^{n} \bfF^i_e. 
\end{aligned}
\end{align*}

\noindent
For the second assertion, using \eqref{eq:dem} and $\bfv^i \times \bfv^i=0$, we calculate
\begin{align*}
\begin{aligned}
    \frac{\d M_{1,\omega}}{\d t} & = \frac{\d}{\d t} \sum_{i=1}^n (m^i \bfx^i \times \bfv^i + I^i \omega^i) \\
    & = \sum_{i=1}^n \Big(m^i \frac{\d \bfx^i}{\d t}  \times \bfv^i + m^i \bfx^i \times \frac{\d \bfv^i}{\d t} + I^i \frac{\d \omega^i}{\d t} \Big) \\
    & = \sum_{i=1}^n \Big( \bfx^i \times m^i \frac{\d \bfv^i}{\d t} + I^i \frac{\d \omega^i}{\d t} \Big) \\
    & = \sum_{i=1}^n \Big( \bfx^i \times \big(\frac{1}{n} \sum_{j=1}^{n}(\bff_{\bfn}^{ij} + \bff_{\bft}^{ij}) + \bfF^i_e \big) +  \big( \frac{1}{n} \sum_{j=1}^{n} ( r^i\bfn^{ij}\times \bff_{\bft}^{ij})  + T^i_e \big) \Big) \\
    & =: \T+ \sum_{i=1}^{n} \big( \bfx^i \times \bfF^i_e
      + T^i_e  \big),
\end{aligned}
\end{align*}
where 
\begin{equation*}
\T  = \frac{1}{n}  \sum_{i,j=1}^n \big( \bfx^i \times (\bff_{\bfn}^{ij} + \bff_{\bft}^{ij}) + ( r^i\bfn^{ij}\times \bff_{\bft}^{ij}) \big).
\end{equation*}
We denote the contact point of floes $i$ and $j$ as $\bfc^{ij}$. Then we have 
\begin{equation}\label{eq:cij}
    \bfc^{ij} = \bfx^i + r^i \bfn^{ij}, \qquad \bfc^{ji} = \bfx^j + r^j \bfn^{ji}, \qquad \bfc^{ij} = \bfc^{ji}.
\end{equation}

Using \eqref{eq:cij}, anti-symmetry, and $(\bfx^i - \bfx^j) \times \bfn^{ij}= 0$, we calculate 
\begin{align}\label{eq:Ta}
\begin{aligned}
\T & = \frac{1}{n}  \sum_{i,j=1}^n \bfx^i \times \bff_{\bfn}^{ij} + \frac{1}{n}  \sum_{i,j=1}^n \big( \bfx^i \times \bff_{\bft}^{ij} + ( r^i\bfn^{ij}\times \bff_{\bft}^{ij}) \big) \\
& = - \frac{1}{n}  \sum_{i,j=1}^n \bfx^j \times \bff_{\bfn}^{ij} + \frac{1}{n}  \sum_{i,j=1}^n \bfc^{ij}\times \bff_{\bft}^{ij} \\
& = \frac{1}{2n}  \sum_{i,j=1}^n (\bfx^i - \bfx^j) \times \bff_{\bfn}^{ij} + \frac{1}{2n}  \sum_{i,j=1}^n \big( \bfc^{ij}\times \bff_{\bft}^{ij} + \bfc^{ji}\times \bff_{\bft}^{ji} \big) \\
& = 0.
\end{aligned}
\end{align}
This leads to the desired result for the balance law of total angular momentum.
\end{proof}
\begin{remark}
In the derivation of \eqref{eq:Ta}, we used the contact point assumption/equality \eqref{eq:cij}, which implies $\delta^{ij} =0$. This distinguishes it from the contact deformation assumption for computing the normal contact force. 
That is, we assume that the floe particles deform upon contact, and the overlap (deformation) \( \delta^{ij} \) is used to compute the contact forces \( \bm{f}_{\bm{n}}^{ij} \) and \( \bm{f}_{\bm{t}}^{ij} \). However, when computing torques, we assume that the contact occurs at a point located \emph{on the undeformed geometry}, i.e., at a distance \( r^i \bm{n}^{ij} \) from the center.
This introduces a modeling approximation, which is justified by \cite{poschel2005computational,luding2008introduction}:
\begin{enumerate}
  \item \textbf{Deformations are small:} We assume that overlaps are very small compared to particle size, so undeformed geometry introduces only second-order errors. 
  \item \textbf{Approximate consistency:} The total torque from each force pair cancels under Newton's third law, ensuring conservation of angular momentum.
\end{enumerate}
\end{remark}

\begin{lemma}[Total energy balance] \label{lem:sumparticleE}
Let $(\bfx^i, \bfv^i, \theta^i, \omega^i), i\in [n],$ be a global solution to system \eqref{eq:dem}. The following  relation holds:
\begin{align}
\begin{aligned} \label{eq:dtM2}
\frac{\d M_2}{\d t}  & =  \frac{1}{2n} \sum_{i, j=1}^n  \kappa^{ij}_2 |(\bfv^i - \bfv^j )\cdot\bfn^{ij} |^2 
- \frac{1}{2n}  \sum_{i, j=1}^n  \zeta^{ij} \kappa^{ij}_3 t_c^{ij} (\sigma^{ij}_t)^2 \\
& \quad + \sum_{i=1}^{n} \bfF^i_e \cdot \bfv^i + \sum_{i=1}^{n} T^i_e \omega^i.
\end{aligned}
\end{align}
\end{lemma}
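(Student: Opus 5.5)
We differentiate the three parts of $M_2 = M_{2,\bfv}+M_{2,\bfx}+M_{2,\omega}$ separately, substitute the equations of motion \eqref{dem_v}--\eqref{dem_w}, and then symmetrize the double sums in $(i,j)$ using the anti-symmetry relations \eqref{eq:cfnt} from the proof of Lemma~\ref{lem:sumparticleMV}.

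\textbf{Translational and rotational kinetic energy.} From \eqref{dem_v},
\[
\frac{\d M_{2,\bfv}}{\d t}=\frac1n\sum_{i,j}(\bff_{\bfn}^{ij}+\bff_{\bft}^{ij})\cdot\bfv^i+\sum_i \bfF_e^i\cdot\bfv^i,
\]
and since $\bff^{ij}=-\bff^{ji}$, the double sum equals $-\frac{1}{2n}\sum_{i,j}(\bff_{\bfn}^{ij}+\bff_{\bft}^{ij})\cdot(\bfv^j-\bfv^i)$. From \eqref{dem_w},
\[
\frac{\d M_{2,\omega}}{\d t}=\frac1n\sum_{i,j} r^i(\bfn^{ij}\times\bff_{\bft}^{ij})\,\omega^i+\sum_i T_e^i\omega^i .
\]
Because $\bff_{\bft}^{ij}$ is parallel to $\bft^{ij}$ and $\bft^{ij}$ is $\bfn^{ij}$ rotated by $90^\circ$, we have $\bfn^{ij}\times\bff_{\bft}^{ij}=\zeta^{ij}\kappa_3^{ij}\sigma^{ij}$. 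Swapping indices, which leaves this scalar symmetric, gives $\frac{1}{2n}\sum_{i,j}\zeta^{ij}\kappa_3^{ij}\sigma^{ij}(r^i\omega^i+r^j\omega^j)$.

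\textbf{Normal strain energy.} From \eqref{eq:nsedt}, $\dot M_{2,\bfx}=\frac{1}{2n}\sum_{i,j}\kappa_1^{ij}\delta^{ij}\dot\delta^{ij}$. We compute $\dot\delta^{ij}=\dot d^{ij}=(\bfv^j-\bfv^i)\cdot\bfn^{ij}$. This cancels the elastic part $\kappa_1\delta\,\bfn$ of the normal force in the kinetic-energy term.

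\textbf{Assembling the terms.}
\begin{itemize}
\item The remaining normal contribution is $-\frac{1}{2n}\sum\kappa_2^{ij}((\bfv^i-\bfv^j)\cdot\bfn^{ij})((\bfv^j-\bfv^i)\cdot\bfn^{ij})=+\frac1{2n}\sum\kappa_2^{ij}|(\bfv^i-\bfv^j)\cdot\bfn^{ij}|^2$, which matches the first term of \eqref{eq:dtM2}. It is dissipative because $\kappa_2<0$ by \eqref{eq:beta}.
\item The tangential contributions combine to $-\frac{1}{2n}\sum\zeta^{ij}\kappa_3^{ij}\sigma^{ij}\big[(\bfv^j-\bfv^i)\cdot\bft^{ij}-r^i\omega^i-r^j\omega^j\big]=-\frac1{2n}\sum\zeta^{ij}\kappa_3^{ij}\sigma^{ij}\sigma^{ij}_t$.
\item Using $\sigma^{ij}=t_c^{ij}\sigma_t^{ij}$ from \eqref{eq:prmsig}, this becomes the second term of \eqref{eq:dtM2}.
\end{itemize}

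\textbf{Main obstacle.} The delicate step is sign and orientation bookkeeping in the tangential/torque part. One must confirm that $\bfn^{ij}\times\bft^{ij}=+1$ and that the torque sign is compatible with the sign convention for $r\omega$ in \eqref{eq:prmsigt}, so that the rotational terms assemble exactly into $-\sigma_t^{ij}$ times the force magnitude. If a sign mismatch appears, I would recheck the symmetric index swap: $\bft^{ij}=-\bft^{ji}$ while $\sigma^{ij}=\sigma^{ji}$. I would also treat $\sigma^{ij}$ as given by the closure $t_c\sigma_t$ rather than differentiating it, since it enters only through the force and not through a tangential strain energy.
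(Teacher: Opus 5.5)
Your proposal is correct and follows essentially the same route as the paper. You substitute the equations of motion into $\frac{\d}{\d t}(M_{2,\bfv}+M_{2,\omega})$, cancel the elastic normal part against $\frac{\d M_{2,\bfx}}{\d t}$ via $\frac{\d \delta^{ij}}{\d t}=\bfn^{ij}\cdot(\bfv^j-\bfv^i)$, and symmetrize the damping and tangential/torque double sums using the anti-symmetry of the contact forces (with $\bfn^{ij}\times\bft^{ij}=1$, $\sigma^{ij}=\sigma^{ji}$, $\bft^{ij}=-\bft^{ji}$), before closing with $\sigma^{ij}=t_c^{ij}\sigma_t^{ij}$; this is exactly the paper's computation of $\T_{11}$, $\T_{12}$ and $\T_{13}+\T_{14}$.
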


\begin{proof}
We take the product of \eqref{dem_v} with $\bfv^i$, and sum up the resulting equations over $i \in [n]$ to obtain
\begin{align}
    \frac{\d M_{2,\bfv}}{\d t} & = \sum_{i=1}^n m^i\frac{\d\bfv^i}{\d t} \cdot \bfv^i =  \sum_{i=1}^n \Big( \frac{1}{n}\sum_{j=1}^{n} (\bff_{\bfn}^{ij} + \bff_{\bft}^{ij})  + \bfF^i_e \Big) \cdot \bfv^i \nonumber  \\
    & = \frac{1}{n}  \sum_{i, j=1}^n    \big( \kappa^{ij}_1 \delta^{ij} + \kappa_2^{ij} (\bfv^i  - \bfv^j )\cdot\bfn^{ij} \big) \bfn^{ij} \cdot \bfv^i  +  \frac{1}{n}  \sum_{i, j=1}^n \bff_{\bft}^{ij} \cdot \bfv^i + \sum_{i=1}^{n} \bfF^i_e \cdot \bfv^i  \nonumber \\
    & =:  \T_{11} + \T_{12} +\T_{13} +   \sum_{i=1}^{n}\bfF^i_e \cdot \bfv^i.  \label{eq:dtM2v}
\end{align}
Below, we estimate the terms $\T_{1i},~i=1,2,$ one by one. For $\T_{13}$, we estimate it combined with the angular velocity component afterwards. 

\noindent $\bullet$~Case A.1:~We use \eqref{dem_x} and the relations
\[ \delta^{ij} = \delta^{ji}, \quad  \bfn^{ij} = -\bfn^{ji}  \]
to rewrite 
\begin{align}
\begin{aligned} \label{eq:T11}
    \T_{11} & =  \frac{1}{n} \sum_{i, j=1}^n \kappa^{ij}_1 \delta^{ij}\bfn^{ij} \cdot \frac{\d\bfx^i}{\d t}  =  -\frac{1}{n} \sum_{i, j=1}^n \kappa^{ij}_1 \delta^{ij}\bfn^{ij} \cdot \frac{\d\bfx^j}{\d t}\\
    & = - \frac{1}{2n} \sum_{i, j=1}^n \kappa^{ij}_1 \delta^{ij}  \bfn^{ij} \cdot \frac{\d (\bfx^j - \bfx^i)}{\d t}  \\
    & =  - \frac{1}{2n} \sum_{i, j=1}^n \kappa^{ij}_1 \delta^{ij} \cdot \frac{\d \delta^{ij}}{\d t} = - \frac{\d M_{2,\bfx}}{\d t},   
\end{aligned}
\end{align}
where we used \eqref{eq:nsedt} and the identity
\begin{align*}
\begin{aligned}
    \frac{\d \delta^{ij}}{\d t} = \frac{\d}{\d t} \big(|\bfx^i-\bfx^j|- (r^i+r^j)\big)  = \frac{\d}{\d t}  \sqrt{(x^j - x^i)^2 + (y^j - y^i)^2} = \bfn^{ij} \cdot \frac{\d (\bfx^j - \bfx^i)}{\d t}.   
\end{aligned}
\end{align*}

\noindent $\bullet$~Case A.2:  Similarly, we have
\begin{align}
\begin{aligned} \label{eq:T12}
    \T_{12} &  =  \frac{1}{n} \sum_{i, j=1}^n \kappa^{ij}_2  \big( (\bfv^i - \bfv^j )\cdot\bfn^{ij}\big) \bfn^{ij}   \cdot \bfv^i = - \frac{1}{n} \sum_{i, j=1}^n \kappa^{ij}_2  \big( (\bfv^i - \bfv^j )\cdot\bfn^{ij}\big) \bfn^{ij}   \cdot \bfv^j   \\
    &  =  \frac{1}{2n} \sum_{i, j=1}^n \kappa^{ij}_2   \big( (\bfv^i - \bfv^j )\cdot\bfn^{ij} \big)^2. 
\end{aligned}   
\end{align}

%
%

For rotational kinetic energy, we take an inner product of \eqref{dem_w} with $\omega^i$, and sum up the resulting equations over $i \in [n]$ to get 
\begin{align}
\begin{aligned} \label{eq:dtM2w}
    \frac{\d M_{2,\omega}}{\d t} & = \sum_{i=1}^n I^i\frac{\d\omega^i}{\d t} \cdot \omega^i \\
    & =  \sum_{i=1}^n \Big( \frac{1}{n} \sum_{j=1}^{n} ( r^i\bfn^{ij}\times \bff_{\bft}^{ij}) + T^i_e \Big) \cdot \omega^i   \\
    & = \frac{1}{n}  \sum_{i, j=1}^n  r^i\bfn^{ij}\times \bff_{\bft}^{ij} \omega^i + \sum_{i=1}^{n} T^i_e \omega^i    \\
    & =:  \T_{14} + \sum_{i=1}^{n} T^i_e \omega^i.
\end{aligned}
\end{align}

Using the definition \eqref{eq:demparm} and that $\bfn^{ij}\times \bft^{ij}=1$ and $\bft^{ij}\cdot \bft^{ij}=1$, we now calculate
\begin{align}
\begin{aligned} \label{eq:i3i4}
    \T_{1,3} + \T_{1,4} & = \frac{1}{n}  \sum_{i, j=1}^n \bff_{\bft}^{ij} \cdot \bfv^i + \frac{1}{n}  \sum_{i, j=1}^n  \bfn^{ij}\times \bff_{\bft}^{ij} r^i \omega^i \\
   & =   -\frac{1}{n}  \sum_{i, j=1}^n \bff_{\bft}^{ij} \cdot \bfv^j + \frac{1}{n}  \sum_{i, j=1}^n  \bfn^{ij}\times \bff_{\bft}^{ij} r^j \omega^j \\
   & =  \frac{1}{2n}  \sum_{i, j=1}^n \bff_{\bft}^{ij} \cdot (\bfv^i - \bfv^j) + \frac{1}{2n}  \sum_{i, j=1}^n  \bfn^{ij}\times \bff_{\bft}^{ij} (r^i \omega^i +r^j \omega^j) \\
   & =  \frac{1}{2n}  \sum_{i, j=1}^n \bff_{\bft}^{ij} \cdot \big[ (\bfv^i - \bfv^j) \cdot \bft^{ij} \big] \bft^{ij} + \frac{1}{2n}  \sum_{i, j=1}^n \bff_{\bft}^{ij} \cdot (r^i \omega^i +r^j \omega^j)  \bft^{ij}\\
    & = - \frac{1}{2n}  \sum_{i, j=1}^n \bff_{\bft}^{ij} \cdot \sigma^{ij}_t \bft^{ij} \\
    & = - \frac{1}{2n}  \sum_{i, j=1}^n  \zeta^{ij} \kappa^{ij}_3 t_c^{ij} (\sigma^{ij}_t)^2.
\end{aligned}
\end{align}

Summing up the above estimates in \eqref{eq:dtM2v} with terms \eqref{eq:T11}, \eqref{eq:T12}, and \eqref{eq:dtM2w} with term \eqref{eq:i3i4} gives the desired estimate for the total energy. 
\end{proof}

\begin{lemma}[Total energy lower bound] \label{lem:sumparticleEb}
Let $(\bfx^i, \bfv^i, \theta^i, \omega^i)$ be a global solution to the system \eqref{eq:dem}. 
Assume a uniform bound on the duration of contact time, i.e., $t_c^{ij} < t_c^{\max}$.
Under zero environmental force and torque, i.e. $\bfF^i_e = \bfzr,\ T^i_e=0, \forall~i \in [n],$ there exist positive constants $A_0$ and $A_1$ such that 
\[ M_2(t)  \ge M_{2}(0) e^{-A_0 t} + \frac{A_1}{A_0} | \bfM_{1,\bfv}(0)|^2  \Big( 1 - e^{-A_0 t} \Big). \]
\end{lemma}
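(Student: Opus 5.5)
Starting from Lemma~\ref{lem:sumparticleE} with $\bfF^i_e=\bfzr$ and $T^i_e=0$, the plan is to derive a differential inequality of the form
\[
\frac{\d M_2}{\d t} \ge -A_0 M_2 + A_1 |\bfM_{1,\bfv}(0)|^2
\]
and then apply Gr\"onwall's lemma. Lemma~\ref{lem:sumparticleMV} gives conservation of linear momentum in this setting, so $\bfM_{1,\bfv}(t)=\bfM_{1,\bfv}(0)$ for all $t$. The lower bound on $M_2$ will come from Cauchy--Schwarz:
\[
|\bfM_{1,\bfv}|^2 = \Big|\sum_i m^i \bfv^i\Big|^2 \le M_0 \sum_i m^i|\bfv^i|^2 = 2M_0 M_{2,\bfv} \le 2M_0 M_2 .
\]

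First I will bound the two dissipation terms in \eqref{eq:dtM2} by $M_{2,\bfv}+M_{2,\omega}$. The normal term carries $\kappa_2^{ij}$, which is negative because $\eta_0<0$; it is a true loss. Using $|(\bfv^i-\bfv^j)\cdot\bfn^{ij}|^2\le 2|\bfv^i|^2+2|\bfv^j|^2$ and a uniform bound $|\kappa_2^{ij}|\le K_2$, this term is at least $-\frac{2K_2}{n}\sum_{i,j}|\bfv^i|^2 \ge -C_1 M_{2,\bfv}$, with $C_1$ depending on $\min_i m^i$. The tangential term satisfies $0\le\zeta^{ij}\le1$ and $t_c^{ij}<t_c^{\max}$. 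Together with $(\sigma_t^{ij})^2\le 4(|\bfv^i|^2+|\bfv^j|^2+(r^j\omega^j)^2+(r^i\omega^i)^2)$, it is at least $-C_2(M_{2,\bfv}+M_{2,\omega})$. Since $M_{2,\bfx}\ge0$, both losses are then bounded below by $-A_0M_2$ with $A_0=C_1+C_2$. Note that the $1/n$ prefactor cancels the $j$-sum, so the constants are $n$-independent.

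The main obstacle is the uniform boundedness of $\kappa_1^{ij}$, hence of $\kappa_2^{ij}$ and $\kappa_3^{ij}$. This requires $g(\xi)$ to be bounded for $\xi<0$ in the attained range. The denominator $\xi^2-1.524\xi+0.03159$ is positive for $\xi<0$, and $g\to0.9117$ as $\xi\to-\infty$, so $g$ is bounded on $(-\infty,0]$. The factors $E_e$, $h_e$ and $m_e$ are fixed by the floe data, so this gives the constants. The caveat is that $\kappa_3 t_c$ involves $|\bfv_i-\bfv_j|^{-1/5}$; the assumption $t_c^{ij}<t_c^{\max}$ is exactly what handles that.

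Finally, to produce the $A_1$ term, I write $\frac{\d M_2}{\d t}\ge -A_0M_2 = -(A_0+A_1')M_2 + A_1'M_2$ and use $M_2\ge|\bfM_{1,\bfv}(0)|^2/(2M_0)$. Redefining $A_0$ as the enlarged rate and setting $A_1 = A_1'/(2M_0)$ gives the inequality $\frac{\d M_2}{\d t}\ge -A_0 M_2 + A_1|\bfM_{1,\bfv}(0)|^2$. Multiplying by $e^{A_0 t}$ and integrating from $0$ to $t$ yields the stated bound.
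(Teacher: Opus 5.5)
Your proof is correct, but it produces the $A_1$ term by a different mechanism from the paper.

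The paper bounds the normal damping term using $|\kappa_2^{ij}|\le\kappa_2^{\max}$, $m^im^j\ge m_{\min}^2$, and the identity $\sum_{i,j}m^im^j|\bfv^j-\bfv^i|^2=2M_0\sum_i m^i|\bfv^i|^2-2|\bfM_{1,\bfv}|^2$ together with $M_0\le n m_{\max}$. The positive contribution $A_1|\bfM_{1,\bfv}(0)|^2$, with $A_1=\kappa_2^{\max}/(n m_{\min}^2)$, then comes directly from the fact that contact damping acts only on relative velocities. The tangential term is bounded crudely, much as you do, using $I^i=\tfrac12 m^i(r^i)^2$.

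You instead bound all the dissipation crudely by $-A_0M_2$. You then manufacture the $A_1$ term from the floor $M_2\ge M_{2,\bfv}\ge|\bfM_{1,\bfv}(0)|^2/(2M_0)$, obtained from Cauchy--Schwarz plus momentum conservation, at the price of enlarging the decay rate.

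\emph{What each route buys.} Your route is shorter. It shows that $A_0$ and $A_1$ can be traded against each other, and that the momentum term can be obtained from that elementary floor alone. The paper's asymptotic level $\frac{A_1}{A_0}|\bfM_{1,\bfv}(0)|^2$ does not exceed this floor anyway, since its constants satisfy $A_1/A_0\le 1/(2nm_{\max})\le 1/(2M_0)$. The paper's route, by contrast, ties $A_1$ to the Galilean-invariant structure of the dissipation.

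\emph{What you add.} You supply something the paper leaves implicit: time-uniform bounds on $\kappa_1^{ij},\kappa_2^{ij},\kappa_3^{ij}$, via boundedness of $g$ on $(-\infty,0]$. The paper simply writes $\max_{i,j}|\kappa_2^{ij}|$ without checking this.

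\emph{What to fix.} You should state how $\sum_i(r^i\omega^i)^2$ is controlled by $M_{2,\omega}$. This needs a relation such as $I^i=\tfrac12 m^i(r^i)^2$, or simply $\max_i (r^i)^2/I^i<\infty$ for the fixed system.
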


\begin{proof}
With zero environmental force and torque, the equalities \eqref{eq:dtM1v} and \eqref{eq:dtM1w} reduce to 
\begin{equation} \label{eq:dtM1vw0} 
\frac{\d \bfM_{1,\bfv}}{\d t} = \bfzr, 
\quad \frac{\d M_{1,\omega}}{\d t} = 0 
\end{equation}
while the relation \eqref{eq:dtM2} reduces to 
\begin{align*}
\begin{aligned} 
\frac{\d M_2}{\d t}  & =  \frac{1}{2n} \sum_{i, j=1}^n \kappa^{ij}_2 |(\bfv^i - \bfv^j )\cdot\bfn^{ij} |^2 - \frac{1}{2n}  \sum_{i, j=1}^n  \zeta^{ij} \kappa^{ij}_3 t_c^{ij} (\sigma^{ij}_t)^2.
\end{aligned}
\end{align*}

Using definition \eqref{eq:demparm}, $\kappa^{ij}_2 <0$ (since $\eta_0<0$ as in \eqref{eq:beta}), and positivity of the other parameters, we observe
\[
\frac{\d M_2}{\d t} \leq 0, \quad \mbox{i.e., total energy is dissipative.}
\]
Now, we use the Cauchy–Schwarz inequality and boundedness of the mass of floe particles to see that there exist positive constants, as the floe particle radii, thickness, and masses are fixed for a fixed system with $n$ floes, such that 
\begin{align}
\begin{aligned} \label{eq:dtM2v-inq}
&\frac{1}{2n} \sum_{i, j=1}^n |\kappa^{ij}_2|  \big( (\bfv^j  - \bfv^i )\cdot\bfn^{ij} \big)^2  \leq \frac{1}{2n}\kappa^{\max}_2  \sum_{i, j=1}^n |\bfv^j - \bfv^i |^2 \\
&\hspace{2cm} \leq \frac{\kappa^{\max}_2}{2n m^2_{\min}}  \sum_{i, j=1}^n m^im^j |\bfv^j - \bfv^i |^2 \\
&\hspace{2cm} =  \frac{\kappa^{\max}_2}{2n m^2_{\min}}  \sum_{i, j=1}^n m^im^j \Big( |\bfv^j |^2 + |\bfv^i |^2 - 2  \bfv^i \cdot \bfv^j \Big)  \\
&\hspace{2cm}\le  \frac{\kappa^{\max}_2}{2n  m^2_{\min}} \Big( 2n m_{\max} \sum_{i=1}^n m^i |\bfv^i |^2 - 2 \Big| \sum_{i=1}^n   m^i\bfv^i \Big|^2 \Big )  \\
&\hspace{2cm}= \frac{\kappa^{\max}_2 m_{\max}}{ m^2_{\min}} \sum_{i=1}^n m^i |\bfv^i |^2 -   \frac{\kappa^{\max}_2}{n m^2_{\min}}  \Big| \sum_{i=1}^n m^i  \bfv^i \Big|^2  \\
&\hspace{2cm} := \T_{21} - \T_{22},
\end{aligned}
\end{align}
where $\kappa^{\max}_2 = \max_{i,j}  |\kappa^{ij}_2|$ and $m_{\min} = \min_{i} m^i,\ m_{\max} = \max_{i} m^i.$

By definition, we bound $\T_{21}$ as
\begin{align}
\begin{aligned} \label{eq:T21-inq}
\T_{21} = \frac{2\kappa^{\max}_2 m_{\max}}{ m^2_{\min}} M_{2,\bfv} \leq \frac{2\kappa^{\max}_2 m_{\max}}{ m^2_{\min}} M_{2}.
\end{aligned}
\end{align}
We use \eqref{eq:dtM1vw0} to see
\begin{equation} \label{eq:T22-inq}
 \T_{22} = \frac{\kappa^{\max}_2}{n m^2_{\min}}  | \bfM_{1,\bfv}(0)|^2.
\end{equation}


\noindent
To bound the term $\frac{1}{2n}  \sum_{i, j=1}^n  \zeta^{ij} \kappa^{ij}_3 t_c^{ij} (\sigma^{ij}_t)^2$,  
we recall
\[
\sigma^{ij}_t
=
(\bfv^j-\bfv^i)\cdot\bft^{ij}-r^j\omega^j-r^i\omega^i,
\qquad |\bft^{ij}|=1.
\]
Using $(a+b+c)^2\le 3(a^2+b^2+c^2)$ and $|(\bfv^j-\bfv^i)\cdot\bft^{ij}|\le|\bfv^j-\bfv^i|$,

\begin{equation*}
(\sigma_t^{ij})^2
\le
3\Big(|\bfv^j-\bfv^i|^2 + (r^j\omega^j)^2 + (r^i\omega^i)^2\Big).
\end{equation*}
With $\kappa_3^{ij}\le \kappa_3^{\max} := \max_{i,j}\kappa_3^{ij}$ and $0\le \zeta^{ij}\le 1$ by \eqref{eq:zeta}, we then have
\begin{align}
\sum_{i,j=1}^n  \zeta^{ij} \kappa^{ij}_3  t_c^{ij} (\sigma^{ij}_t)^2
&\le
\kappa_3^{\max} t_c^{\max}\sum_{i,j=1}^n (\sigma^{ij}_t)^2 \nonumber\\
&\le
3\kappa_3^{\max} t_c^{\max}
\sum_{i,j=1}^n\Big(|\bfv^j-\bfv^i|^2 + (r^j\omega^j)^2 + (r^i\omega^i)^2\Big).
\label{eq:vrw}
\end{align}
We use
\[
\sum_{i,j=1}^n |\bfv^j-\bfv^i|^2
=
2n\sum_{i=1}^n|\bfv^i|^2
-2\Big|\sum_{i=1}^n \bfv^i\Big|^2
\le
2n\sum_{i=1}^n|\bfv^i|^2,
\]
and
\[
\sum_{i,j=1}^n\Big((r^j\omega^j)^2+(r^i\omega^i)^2\Big)
=
2n\sum_{i=1}^n (r^i\omega^i)^2.
\]
We substitute into \eqref{eq:vrw} to find
\begin{equation}\label{eq:dsum}
\sum_{i,j=1}^n  \zeta^{ij} \kappa^{ij}_3  t_c^{ij} (\sigma^{ij}_t)^2
\le
6n \kappa_3^{\max} t_c^{\max}
\Big(
\sum_{i=1}^n |\bfv^i|^2
+
\sum_{i=1}^n (r^i\omega^i)^2
\Big).
\end{equation}

Similarly, we have
\[
M_{2,\bfv}=\frac12\sum_{i=1}^n m^i|\bfv^i|^2 \ge \frac{m_{\min}}{2}\sum_{i=1}^n|\bfv^i|^2
\quad\Rightarrow\quad
\sum_{i=1}^n|\bfv^i|^2 \le \frac{2}{m_{\min}} M_{2,\bfv}.
\]
Moreover $I^i=\frac{1}{2}m^i(r^i)^2\ge \frac{1}{2}m_{\min}(r^i)^2$, hence we have
\[
I^i(\omega^i)^2 \ge \frac{1}{2}m_{\min}(r^i\omega^i)^2
\quad\Longrightarrow\quad
\sum_{i=1}^n(r^i\omega^i)^2
\le
\frac{2}{m_{\min}}\sum_{i=1}^n I^i(\omega^i)^2
=
\frac{4}{m_{\min}} M_{2,\omega}.
\]
We put these into \eqref{eq:dsum} to get
\[
\frac{1}{2n}\sum_{i,j=1}^n  \zeta^{ij} \kappa^{ij}_3  t_c^{ij} (\sigma^{ij}_t)^2
\le
\frac{6\kappa_3^{\max} t_c^{\max}}{m_{\min}} \Big(M_{2,\bfv}+2M_{2,\omega}\Big) \le \frac{12\kappa_3^{\max} t_c^{\max}}{m_{\min}} M_2.
\]

Finally, we combine \eqref{eq:vrw} with \eqref{eq:dtM2v-inq}, \eqref{eq:T21-inq} and \eqref{eq:T22-inq} to obtain

\begin{align*}
\begin{aligned}
\frac{\d M_2}{\d t} &\geq  -
\Big( \frac{2\kappa^{\max}_2 m_{\max}}{ m^2_{\min}} + \frac{12\kappa_3^{\max} t_c^{\max}}{m_{\min}} \Big) M_{2}
+ \frac{\kappa^{\max}_2}{n m^2_{\min}} | \bfM_{1,\bfv}(0)|^2  \\
&=: - A_0 M_{2} + A_1 | \bfM_{1,\bfv}(0)|^2.
\end{aligned}
\end{align*}

\noindent
We use Grönwall's lemma \cite{gronwall1919note,evans2022partial} to find the desired estimate. 
\end{proof}

\begin{remark}
    Similar to the result \cite{ha2008Kinetic} for the Cucker-Smale model, if the environmental force and torque are zero, Lemma \ref{lem:sumparticleE} and Lemma \ref{lem:sumparticleEb} imply that the total energy $M_2$ is monotonically decreasing with a lower bound. 
    Lemma \ref{lem:sumparticleEb} uses an assumption on the duration of contact $t_c^{ij}$. 
    This is due to the unregularized formula $t_c^{ij}\sim |\bfv^i-\bfv^j|^{-1/5}$ in \eqref{eq:demparm} not being uniformly bounded as $|\bfv^i-\bfv^j|\to 0$. To obtain an energy-type estimate bound, one needs to remove this singularity. The common floe modeling practice \cite{herman2016discrete,damsgaard2018application}: either (i) caps $t_c^{ij}$ by a prescribed maximal contact time (often tied to time-step size $dt$ in numerical simulation); or (ii) regularizes with a small $v_\ast>0$ as 
\[
t_c^{ij}
:= 2.94\Big(\frac{m_e^{ij}}{\kappa_1^{ij}}\Big)^{2/5}
\Big(|\bfv^i-\bfv^j|+v_\ast\Big)^{-1/5},
\qquad v_\ast>0.
\]
On the other hand, Coulomb friction law imposed through $\zeta^{ij}$ also keeps the term $\sum_{i,j=1}^n  \zeta^{ij} \kappa^{ij}_3  t_c^{ij} (\sigma^{ij}_t)^2$ from being unbounded.  
\end{remark}

We now consider a special case where the ocean velocity is constant and the other environmental force and torque are zero. In such a case, we show that the particle translational and rotational velocities converge to constants. 
To establish this result, we first recall Barbalat's lemma to be used in later sections.
It claims that if a uniformly continuous function is (Riemann)-integrable on the positive real line, then it converges to 0, as $t$ goes to infinity.

\begin{lemma}[Barbalat's Lemma] \label{lem:barb}
Suppose that $f:\bbr_+\to \bbr_+$ is a uniformly continuous function, then
\[
\int_0^\infty f(t)\d t<\infty \quad\implies\quad \lim_{t\to\infty} f(t)=0.
\] 
\end{lemma}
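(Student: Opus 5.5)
The plan is a contradiction argument built on uniform continuity. Assume $f$ is uniformly continuous and nonnegative with $\int_0^\infty f(t)\,\d t<\infty$, and suppose $f(t)\not\to 0$ as $t\to\infty$. Then there are $\varepsilon>0$ and a sequence $t_k\to\infty$ with $f(t_k)\ge\varepsilon$ for all $k$. Passing to a subsequence, we may also assume $t_{k+1}-t_k\ge 1$.

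Uniform continuity gives $\eta>0$, independent of $k$, such that $|f(t)-f(s)|<\varepsilon/2$ whenever $|t-s|<\eta$. We may take $\eta<1/2$. Hence $f(t)\ge \varepsilon/2$ on each interval $[t_k,t_k+\eta]$. These intervals are pairwise disjoint because $t_{k+1}-t_k\ge 1>\eta$. Since $f\ge 0$, we obtain
\[
\int_0^\infty f(t)\,\d t \;\ge\; \sum_{k}\int_{t_k}^{t_k+\eta} f(t)\,\d t \;\ge\; \sum_k \frac{\varepsilon\eta}{2} \;=\; \infty,
\]
which contradicts integrability. Therefore $\lim_{t\to\infty}f(t)=0$.

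An equivalent Cauchy-criterion version avoids the subsequence. Finiteness of the improper integral gives $\int_{t_k}^{t_k+\eta} f\,\d t\to 0$ as $k\to\infty$. This contradicts the uniform lower bound $\varepsilon\eta/2$. This version does not even need $f\ge 0$.

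The only delicate point is that $\eta$ must not depend on $k$. This is exactly where uniform continuity, rather than mere continuity, is used. With plain continuity, the intervals of width where $f\ge\varepsilon/2$ could shrink fast enough to keep the integral finite; the standard counterexample is a sequence of narrowing spikes. In the later application, uniform continuity will typically be obtained from a bounded time derivative, e.g.\ the energy dissipation rate in Lemma~\ref{lem:sumparticleE} having bounded derivative along bounded trajectories.
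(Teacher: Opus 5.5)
The paper gives no proof of this lemma. It only recalls Barbalat's lemma as a standard tool before using it in Theorems~\ref{thm:v0} and~\ref{thm:vw=0}, so there is no argument of the paper's to compare yours against.

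Your contradiction argument is the standard textbook proof, and it is correct. You correctly identify the key point: the width $\eta$ does not depend on $k$, and this comes from uniform continuity. The disjoint intervals on which $f\ge\varepsilon/2$ then force $\int_0^\infty f\,\d t=\infty$.

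Two cosmetic remarks:
\begin{itemize}
\item You state the modulus with the strict inequality $|t-s|<\eta$. It therefore only guarantees $f>\varepsilon/2$ on $[t_k,t_k+\eta)$. The endpoint follows by continuity, and it does not affect the integral anyway.
\item For the signed version you mention, start from $|f(t_k)|\ge\varepsilon$. Then note that $f$ keeps the sign of $f(t_k)$ on $[t_k,t_k+\eta)$. This gives $\bigl|\int_{t_k}^{t_k+\eta}f\,\d t\bigr|\ge\varepsilon\eta/2$, which contradicts the Cauchy criterion for the improper integral.
\end{itemize}
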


\begin{theorem} \label{thm:v0} Suppose that the given ocean surface velocity is constant:
\[  \bfu^i = \bfu^{\infty}:~\mbox{constant}, \quad \forall~i \in [n], \]
and let $(\bfx^i, \bfv^i, \theta^i, \omega^i)$ be a global solution to system \eqref{eq:dem}. 
If $\alpha^i_o,\ \beta^i_o>0, \bfF^i_w=\bfF^i_E=\bfF^i_g=\bfzr$ and $T^i_w=0,$ we have
\[ \lim_{t \to \infty} \|  \bfv^i  -  \bfu^{\infty} \| =0, \quad  \quad \lim_{t \to \infty} \omega^i(t) = 0, \quad \forall~i \in [n].  \]
\end{theorem}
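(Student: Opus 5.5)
We will use a Lyapunov argument in the frame moving with the ocean, followed by Barbalat's Lemma~\ref{lem:barb}. Because $\bfu^\infty$ is constant, $\nabla\times\bfu^i=0$, so the ocean torque reduces to $T^i_o=-\beta^i_o\,\omega^i|\omega^i|$. Set $\bfz^i:=\bfv^i-\bfu^\infty$ and define the relative energy
\[
\mathcal{E}(t):=\frac12\sum_{i=1}^n m^i|\bfz^i|^2+M_{2,\bfx}+M_{2,\omega},
\]
which is nonnegative.

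The first step is to compute $\frac{\d\mathcal{E}}{\d t}$ by repeating the proof of Lemma~\ref{lem:sumparticleE} with $\bfv^i$ replaced by $\bfz^i$ in the translational part. The contact terms depend only on the differences $\bfv^i-\bfv^j=\bfz^i-\bfz^j$ and on $\omega^i$. The symmetrization steps \eqref{eq:T11}, \eqref{eq:T12} and \eqref{eq:i3i4} therefore go through unchanged. One point needs care: the extra term $\frac1n\sum_{i,j}\bff_c^{ij}\cdot\bfu^\infty$ vanishes by the antisymmetry \eqref{eq:cfnt}. Adding the drag contributions should give
\[
\frac{\d\mathcal{E}}{\d t}=\frac{1}{2n}\sum_{i,j}\kappa_2^{ij}|(\bfz^i-\bfz^j)\cdot\bfn^{ij}|^2-\frac{1}{2n}\sum_{i,j}\zeta^{ij}\kappa_3^{ij}t_c^{ij}(\sigma_t^{ij})^2-\sum_i\alpha_o^i|\bfz^i|^3-\sum_i\beta_o^i|\omega^i|^3\le 0,
\]
using $\kappa_2^{ij}<0$.

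The second step uses this dissipation identity. Since $\mathcal{E}$ is nonincreasing and bounded below, it is bounded by $\mathcal{E}(0)$, and integrating in time gives
\[
\int_0^\infty \Big(\sum_i\alpha_o^i|\bfz^i|^3+\beta_o^i|\omega^i|^3\Big)\d t\le\mathcal{E}(0)<\infty .
\]
This also yields uniform bounds on $\bfz^i$ and $\omega^i$, hence on $\bfv^i$, and a uniform bound on the strain energy $M_{2,\bfx}$.

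The third step, which I expect to be the main obstacle, is to verify uniform continuity of $f(t):=\sum_i(\alpha_o^i|\bfz^i|^3+\beta_o^i|\omega^i|^3)$. It suffices to bound $\frac{\d\bfv^i}{\d t}$ and $\frac{\d\omega^i}{\d t}$ uniformly in time, which requires bounding the contact forces. The drag terms are bounded because the velocities are. For the contact forces I would argue as follows.
\begin{itemize}
\item The bound on $M_{2,\bfx}$ together with the velocity bounds controls the overlaps $\delta^{ij}$.
\item The bounded overlaps control $\kappa_1^{ij}$ and $\kappa_2^{ij}$, though one must keep $\xi$ away from the pole of $g$. I would assume the overlap stays in the physically admissible range, as is implicit in the existence of a global solution.
\item The tangential force satisfies $|\bff_{\bft}^{ij}|\le\mu|\bff_{\bfn}^{ij}|$ by the Coulomb law \eqref{eq:zeta}, so it is bounded once the normal force is, whatever the behaviour of $t_c^{ij}$.
\end{itemize}
If this proves delicate, I would state it as a standing regularity assumption, namely that a bounded global solution has bounded accelerations. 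With the accelerations bounded, $f$ is Lipschitz, and Lemma~\ref{lem:barb} gives $f(t)\to0$. Since $\alpha_o^i,\beta_o^i>0$, this yields $|\bfv^i-\bfu^\infty|\to0$ and $\omega^i\to0$ for each $i$.
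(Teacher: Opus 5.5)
Your proposal is correct and follows the paper's proof essentially step for step. Both pass to the frame moving with $\bfu^{\infty}$ (you via the relative energy, the paper via a Galilean change of variables), then use the dissipation inequality inherited from Lemma~\ref{lem:sumparticleE}, time-integrability of $\sum_i(\alpha_o^i|\bfv^i-\bfu^{\infty}|^3+\beta_o^i|\omega^i|^3)$, uniform continuity from bounded accelerations, and Barbalat's Lemma~\ref{lem:barb}.

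Two remarks on your third step. Bounding the tangential force through the Coulomb cap $|\bff_{\bft}^{ij}|\le\mu|\bff_{\bfn}^{ij}|$ is more careful than the paper's appeal to boundedness of $\sigma^{ij}$, which is not guaranteed since $t_c^{ij}\to\infty$ as $|\bfv^i-\bfv^j|\to0$. Your extra assumption about the pole of $g$ is unnecessary: during contact $\xi<0$, where the numerator and denominator of $g$ are both positive, so $\kappa_1^{ij}$ is automatically bounded. Likewise $|\delta^{ij}|\le r^i+r^j$ holds automatically during contact, so no energy argument is needed to control the overlaps.
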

\begin{proof} 
Under such a setting, after applying the Galilean transformation
\[
\bfx_i^{\rm rel}:=\bfx_i-\bfu^\infty t,
\qquad
\bfv_i^{\rm rel}:=\bfv_i-\bfu^\infty,
\]
the constant-ocean-velocity case is reduced to the zero-ocean-velocity case
\(\bfu^\infty=\bfzr\).
In the following, we suppress the
superscript ``{\rm rel}'' for simplicity.
We claim that
\begin{equation} \label{eq:vw0}
\lim_{t \to \infty} \bfv^i(t) = \bfzr, \quad \lim_{t \to \infty} \omega^i(t) = 0.
\end{equation}
It follows from Lemma \ref{lem:sumparticleE} that 
\begin{equation}\label{eq:barbalet1}
\frac{\d M_2}{\d t}  \le - \sum_{i=1}^{n} ( \alpha^i_o \left| \bfv^i\right|^3 + \beta^i_o \left| \omega^i\right|^3),
\end{equation}
where the first two terms in the right-hand side of \eqref{eq:dtM2} are negative 
(since $\kappa^{ij}_2 < 0, \zeta\kappa^{ij}_3 > 0$). 
Thus, total energy is non-increasing, which leads to boundedness of $\kappa_1^{ij}\delta^{ij},\ \bfv^i,$ and $\omega^i$ since each energy $M_{2,\bfx},\ M_{2,\bfv},\ M_{2,\omega}$ becomes bounded.
Next, we integrate inequality~\eqref{eq:barbalet1},
\[
\int_0^\infty \sum_{i=1}^n(\alpha^i_o \left| \bfv^i\right|^3 + \beta^i_o \left| \omega^i\right|^3)\d t\le M_2(0)-\inf_{t>0} M_2(t)\le M_2(0),
\]
to see that total energy is non-negative.
Therefore, if we show that $\bfv^i$ and $\omega^i$ are uniformly continuous, we can show that it tends to 0, by Barbalat's Lemma \ref{lem:barb}.
We will prove it using the boundedness of $\dot{\bfv}^i$ and $\dot{\omega}^i$ (since the time derivative of the integrand of the above integral is these terms multiplied by bounded terms).
Recall
\begin{align*}
m^i\frac{\d\bfv^i}{\d t} &=   \frac{1}{n} \sum_{j=1}^{n}(\bff_{\bfn}^{ij} + \bff_{\bft}^{ij}) - \alpha^i_o \bfv^i\left| \bfv^i\right|,\\
I^i\frac{\d\omega^i}{\d t} &= \frac{1}{n} \sum_{j=1}^{n} ( r^i\bfn^{ij}\times \bff_{\bft}^{ij})  - \beta^i_o \omega^i \left|\omega^i \right|.
\end{align*}
Under the boundedness of $\delta^{ij}$, one can show the boundedness of $\kappa_i$ and $\sigma^{ij}$, which leads to the boundedness of right-hand-side of the above equations.
Therefore, $\bfv^i$ and $\omega^i$ are uniformly continuous, which shows~\eqref{eq:vw0}.
\end{proof}

\section{From particle to kinetic description} \label{sec:kmodel}
In this section, we first recall the formal derivation from the particle model to the kinetic model as the mean-field approximation of the particle model \eqref{eq:dem} with $n \gg 1$. 
We assume that the number of particles involved in the particle system \eqref{eq:dem} is sufficiently large so that it becomes meaningful to use the mean-field approximation via the one-particle distribution function to describe the overall effective dynamics of the original system.

\subsection{Kinetic model for ice floe dynamics}
We first rewrite the floe particle model \eqref{eq:dem} as 
\begin{equation}
\begin{cases} \label{eq:dem2}
 \displaystyle \frac{\d\bfx^i}{\d t} = \bfv^i, \quad \frac{\d\theta^i}{\d t}  = \omega^i, \qquad i \in [n], \\[0.2cm]
 \displaystyle \frac{\d\bfv^i}{\d t} = \frac{1}{m^i} \Big[  \frac{1}{n} \sum_{j=1}^{n} (\bff_{\bfn}^{ij} + \bff_{\bft}^{ij})  + \bfF^i_e \Big] =: \frac{\bfF^i}{m^i}, \\[0.2cm]
   \frac{\d\omega^i}{\d t} = \frac{1}{I^i} \Big[ \frac{1}{n} \sum_{j=1}^{n} ( r^i\bfn^{ij}\times \bff_{\bft}^{ij})  + T^i_e \Big]
  := \frac{T^i}{I^i} \\[0.2cm]
 \displaystyle \frac{dr^i}{dt} = 0, \quad \frac{dh^i}{dt} = 0,
\end{cases} 
\end{equation}
where we assume that the floe sizes and thicknesses do not change in time. In what follows, we adopt the BBGKY hierarchy (Bogoliubov–Born–Green–Kirkwood–Yvon, \cite{Bogoliubov1946,bogoliubov1946kinetic,born1946general,kirkwood1946statistical,yvon1935theorie}) to derive a kinetic equation for the one-particle distribution function over the generalized phase space $\bbr_{\bfx}^2 \times \bbr_{\bfv}^2 \times \mathbb{T} \times \bbr_{\omega} \times \bbr_+ \times \bbr_+$ ($\mathbb{T}$ means 1D torus). 
In general, one assumes that the distribution function belongs to a function class such that it is periodic in the toroidal variables and rapidly decaying in the unbounded variables, ensuring that all boundary terms arising from integration by parts vanish.

\noindent $\bullet$~Step A (Derivation of the Liouville equation for the $n$-particle distribution function):  We define the $n$-particle distribution function on the $n$-particle phase space $\bbr^{4n}\times \mathbb{T}^{n}\times \bbr^{n} \times \bbr_+^{2n}$:
\begin{equation} \label{eq:pdfn}
 F^n = F^n(t,\bfx^1, \bfv^1, \theta^1, \omega^1, r^1, h^1, \cdots, \bfx^n, \bfv^n, \theta^n, \omega^n, r^n, h^n), 
\end{equation}
for $(\bfx^i, \bfv^i, \theta^i, \omega^i, r^i, h^i) \in \mathbb{R}^2\times \mathbb{R}^2 \times \mathbb{T} \times \bbr_{\omega} \times \mathbb{R}_+ \times \mathbb{R}_+, i \in [n]$. 
Note that the $n$-particle probability density function $F^n$ is symmetric in its phase variable in the sense that 
\begin{align}
\begin{aligned} \label{eq:pdfsym}
&  F^n(t,\cdots, \bfx^i, \bfv^i, \theta^i, \omega^i, r^i, h^i, \cdots, \bfx^j, \bfv^j, \theta^j, \omega^j, r^j, h^j, \cdots) \\
 & \hspace{1.5cm}   = F^n(t, \cdots, \bfx^j, \bfv^j, \theta^j, \omega^j, r^j, h^j, \cdots, \bfx^i, \bfv^i, \theta^i, \omega^i, r^i, h^i, \cdots).
\end{aligned}
\end{align}
Then, $F^n$ satisfies the Liouville equation on the generalized $n$-particle phase space:
\begin{equation} \label{eq:le}
\begin{aligned}
\partial_t F^n & + \sum_{i=1}^{n} \nabla_{\bfx^i} \cdot ( \dot{\bfx}^i F^n) + \sum_{i=1}^{n} \nabla_{\bfv^i} \cdot ( \dot{\bfv}^i F^n)  \\
& + \sum_{i=1}^{n} \partial_{\theta^i} ( \dot{\theta}^i F^n)  + \sum_{i=1}^{n} \partial_{\omega^i} ( \dot{\omega}^i F^n)  + \sum_{i=1}^{n} \partial_{r^i} ( \dot{r}^i F^n)  + \sum_{i=1}^{n} \partial_{h^i} ( \dot{h}^i F^n)  = 0,
\end{aligned}
\end{equation}
where $ \nabla_{\bfx^i}  \cdot (~  ~)$ and $ \nabla_{\bfv^i}  \cdot (~  ~)$ denote the divergences in the $\bfx^i$ and $\bfv^i$-variables, respectively. 
By using \eqref{eq:dem2}, the above system can also be written as 
\begin{equation} \label{eq:les}
\partial_t F^n +  \sum_{i=1}^{n} \bfv^i \cdot \nabla_{\bfx^i} F^n + \sum_{i=1}^{n} \nabla_{\bfv^i} \cdot \Big ( \frac{\bfF^i}{m^i} F^n \Big ) + \sum_{i=1}^{n} \omega^i \partial_{\theta^i} F^n + \sum_{i=1}^{n} \partial_{\omega^i} \Big( \frac{T^i}{I^i} F^n \Big)  = 0.
\end{equation}

\noindent $\bullet$~Step B (Derivation of equation  for the $j$-particle distribution function):  For notational simplicity, we set 
\[ \bfz := (\bfx,\bfv,\theta, \omega, r,h)    , \quad   \d\bfz^j = \d\bfx^j \d\bfv^j \d \theta^j \d \omega^j \d r^j \d h^j, \quad j \in [n], \quad  \d E^{n:j} = \prod_{i=j+1}^n\d\bfz^i. \]
Then, we introduce the $j$-marginal distribution function $F^{n:j}$ by the integration of \eqref{eq:pdfn} to give:
\[ F^{n:j}  := \int_{\bbr^{4(n-j)}\times \mathbb{T}^{n-j}\times \bbr^{n-j} \times \bbr_+^{2(n-j)}} F^{n} \d E^{n:j}. \]
We derive an equation for $ F^{n:j}$. For this,  by writing \eqref{eq:les} as follows
\begin{align}
\begin{aligned} \label{eq:les1}
 \partial_t F^n &= -\sum_{i=1}^j   \nabla_{\bfx^i} \cdot  (\bfv^i F^n) - \sum_{i=1}^{j} \nabla_{\bfv^i} \cdot \Big ( \frac{\bfF^i}{m^i} F^n \Big ) - \sum_{i=1}^{j}  \partial_{\theta^i} (\omega^iF^n) - \sum_{i=1}^{j} \partial_{\omega^i} ( \frac{T^i}{I^i} F^n)  \\
 & \quad - \sum_{i=j+1}^{n}  \nabla_{\bfx^i} \cdot ( \bfv^i F^n) - \sum_{i=j+1}^{n} \nabla_{\bfv^i} \cdot \Big ( \frac{\bfF^i}{m^i} F^n \Big ) - \sum_{i=j+1}^{n}  \partial_{\theta^i} (\omega^iF^n) - \sum_{i=j+1}^{n} \partial_{\omega^i} \Big( \frac{T^i}{I^i} F^n \Big).
\end{aligned}
\end{align}

Now, set $\hat{\bbr} = \bbr^{4(n-j)}\times \mathbb{T}^{n-j}\times \bbr^{n-j} \times \bbr_+^{2(n-j)},$ and integrate the Liouville equation \eqref{eq:les1} over $ E^{n:j}$
to obtain
\begin{align}
\begin{aligned} \label{eq:les2}
\partial_t F^{n:j} & =  -\sum_{i=1}^j    \int_{\hat{\bbr}}  \nabla_{\bfx^i} \cdot  (\bfv^i F^n) \d E^{n:j} - \sum_{i=1}^{j} \int_{\hat{\bbr}} \nabla_{\bfv^i} \cdot \Big( \frac{\bfF^i}{m^i} F^n \Big) \d E^{n:j} \\
& \quad -\sum_{i=1}^j  \int_{\hat{\bbr}}  \partial_{\theta^i} (\omega^iF^n) \d E^{n:j} - \sum_{i=1}^{j} \int_{\hat{\bbr}}  \partial_{\omega^i} \Big( \frac{T^i}{I^i} F^n \Big)   \d E^{n:j} \\
& -  \sum_{i=j+1}^{n}  \int_{\hat{\bbr}}  \nabla_{\bfx^i} \cdot ( \bfv^i F^n) \d E^{n:j} - \sum_{i=j+1}^{n} \int_{\hat{\bbr}} \nabla_{\bfv^i} \cdot \Big( \frac{\bfF^i}{m^i} F^n \Big) \d E^{n:j} \\
& -  \sum_{i=j+1}^{n}  \int_{\hat{\bbr}}  \partial_{\theta^i} (\omega^iF^n) \d E^{n:j} - \sum_{i=j+1}^{n} \int_{\hat{\bbr}} \partial_{\omega^i} \Big( \frac{T^i}{I^i} F^n \Big)  \d E^{n:j} \\
&=: \sum_{k=1}^8 \T_{3k}.
\end{aligned}
\end{align}
Using the divergence theorem and the decay condition of $F^n$ at infinity, it is easy to verify the identities
\[
\T_{3k} = 0, \quad k=5,6,7,8.
\]

In the next lemma, we estimate the terms $\T_{3i},~i=1,\cdots, 4$, one by one. 
For simplicity, we denote $\hat{\bbr}_1 = \bbr^{4}\times \mathbb{T}\times \bbr \times \bbr_+^{2}$. 
\begin{lemma} \label{lem:T3}
Let $F^n$ be a global solution to \eqref{eq:le} which decays to zero sufficiently fast at infinity for all phase variables.  Then, we have  the following identities:
\begin{align*}
& (i)~\T_{31} = -  \sum_{i=1}^j  \bfv^i  \cdot \nabla_{\bfx^i} F^{n:j}, \quad \T_{33}  =  -\sum_{i=1}^j  \omega^i  \partial_{\theta^i}  F^{n:j}; \\
& (ii)~\T_{32} =  -\frac{1}{n} \sum_{i=1}^{j}  \nabla_{\bfv^i} \cdot  \Big[ \Big( \frac{1}{m^i}  \sum_{k=1}^{j} \bff_c^{ik} \Big)  F^{n:j} \Big] 
 -\frac{n-j}{n} \sum_{i=1}^{j}  \int_{\hat{\bbr}_1} \nabla_{\bfv^i} \cdot \Big( \frac{1}{m^i} \bff_c^{i(j+1)} F^{n:j+1} \Big)\d\bfz^{j+1} \\
& \hspace{1.5cm}  - \sum_{i=1}^{j} \frac{1}{m^i}  \nabla_{\bfv^i} \cdot \Big(\bfF^i_e F^{n:j} \Big); \\
& (iii)~\T_{34} =-\frac{1}{n} \sum_{i=1}^{j}  \partial_{\omega^i} \Big( \frac{F^{n:j} }{I^i}  \sum_{k=1}^{j} ( r^i\bfn^{ik}\times \bff_{\bft}^{ik}) \Big)  \\
& \qquad \qquad -\frac{n-j}{n} \sum_{i=1}^{j}  \int_{\hat{\bbr}_1} \partial_{\omega^i}\Big( \frac{F^{n:j+1}}{I^i} ( r^i\bfn^{i(j+1)}\times \bff_{\bft}^{i(j+1)}) \Big)\d\bfz^{j+1}
- \sum_{i=1}^{j}  \frac{1}{I^i} \partial_{\omega^i} \Big( T^i_e F^{n:j} \Big).
\end{align*}
\end{lemma}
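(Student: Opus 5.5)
The plan is to prove the three identities by direct computation, in three groups. First the transport terms, where the integration variables do not appear inside the derivative. Then the velocity-forcing term $\T_{32}$. Then the torque term $\T_{34}$, which has the same structure. Throughout, I use three facts. The vector field \eqref{eq:dem2} for particle $i$ depends only on $\bfz^i$ and, through the contact sums, on the $\bfz^k$ with $k\ne i$. The environmental parts $\bfF^i_e$, $T^i_e$ depend only on $\bfz^i$ (and $t$). And $F^n$ is symmetric in the sense of \eqref{eq:pdfsym}.

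For (i), take $i\le j$. Then $\bfv^i$ and $\omega^i$ do not depend on the integration variables $\bfz^{j+1},\dots,\bfz^n$, and $\nabla_{\bfx^i}\cdot\bfv^i=0$, $\partial_{\theta^i}\omega^i=0$. So $\nabla_{\bfx^i}\cdot(\bfv^iF^n)=\bfv^i\cdot\nabla_{\bfx^i}F^n$, and the derivative in $\bfx^i$ (respectively $\theta^i$) commutes with $\int_{\hat{\bbr}}\cdot\,\d E^{n:j}$. This gives $\T_{31}$ and $\T_{33}$ immediately.

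For (ii), split $\bfF^i/m^i$ into three pieces: the contact sum over $k\le j$, the contact sum over $k\ge j+1$, and $\bfF^i_e/m^i$.
\begin{itemize}
\item The first and third pieces depend only on $\bfz^1,\dots,\bfz^j$. So $\nabla_{\bfv^i}$ can be pulled outside the integral, which produces the first and last terms of $\T_{32}$.
\item For the middle piece, each term $k\ge j+1$ involves only $\bfz^i$ and $\bfz^k$. I integrate out every variable except $\bfz^k$, which turns $F^n$ into a $(j+1)$-marginal with the $(j+1)$-th slot occupied by $\bfz^k$.
\item By the symmetry \eqref{eq:pdfsym}, relabeling $\bfz^k\leftrightarrow\bfz^{j+1}$ shows that all $n-j$ such terms are equal to the $k=j+1$ term. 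Combined with the prefactor $1/n$, this gives the factor $(n-j)/n$ and the integral against $F^{n:j+1}$ over $\hat{\bbr}_1$.
\end{itemize}

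Part (iii) is handled in exactly the same way, with $\bff_c^{ik}/m^i$ replaced by $r^i\bfn^{ik}\times\bff_{\bft}^{ik}/I^i$ and $\nabla_{\bfv^i}$ replaced by $\partial_{\omega^i}$. The terms $1/m^i$ and $1/I^i$ depend only on $r^i,h^i$, so they pass through the $\bfv^i$ and $\omega^i$ derivatives, as the statement is written.

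The main obstacle is the symmetry step in the middle piece of (ii). The interaction $\bff_c^{ik}$ is only a function of the pair $(\bfz^i,\bfz^k)$ if the tangential history $\sigma^{ik}$ is closed by \eqref{eq:prmsig}, i.e., $\sigma^{ik}=t_c^{ik}\sigma^{ik}_t$, with $t_c^{ik}$ given by \eqref{eq:prmtc}. I will state explicitly that this closure is used so that the force has no memory. With that, the swap of labels is a measure-preserving change of variables, and $F^n$ is invariant under it.

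I will also note that interchanging derivative and integral, and discarding the $k=i$ terms (where $\bff_c^{ii}=\bfzr$), is justified by the decay assumptions on $F^n$.
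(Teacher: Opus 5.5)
Your proposal is correct and follows essentially the same argument as the paper. For (i) you commute the $\bfx^i$ and $\theta^i$ derivatives with the marginal integration; for (ii) and (iii) you split the contact sum into $k\le j$ and $k\ge j+1$ and use the symmetry \eqref{eq:pdfsym} to collapse the $k\ge j+1$ terms into $(n-j)/n$ times an integral against $F^{n:j+1}$. Your remark that this needs $\bff_c^{ik}$ to be a memoryless function of $(\bfz^i,\bfz^k)$, via the closure $\sigma^{ik}=t_c^{ik}\sigma^{ik}_t$ in \eqref{eq:prmsig}, makes explicit an assumption the paper's proof uses only implicitly.
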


\begin{proof}
\noindent (i) With variable independence, exchanging derivative and integration gives 
\begin{align*}
\begin{aligned}
\T_{31} & =  -\sum_{i=1}^j  \int_{\hat{\bbr}}  \nabla_{\bfx^i} \cdot  (\bfv^i F^n) \d E^{n:j} =  -\sum_{i=1}^j  \nabla_{\bfx^i} \cdot (\bfv^i F^{n:j}) = -  \sum_{i=1}^j  \bfv^i  \cdot \nabla_{\bfx^i} F^{n:j}, \\
\T_{33} & =  -\sum_{i=1}^j  \int_{\hat{\bbr}}  \partial_{\theta^i} (\omega^iF^n)  \d E^{n:j} =  -\sum_{i=1}^j  \omega^i \partial_{\theta^i} F^{n:j}. 
\end{aligned}
\end{align*}

\noindent
(ii)~Now, we use \eqref{eq:pdfsym} to see that 
\begin{align*}
\begin{aligned} 
\T_{32} &= - \sum_{i=1}^{j} \int_{\hat{\bbr}} \nabla_{\bfv^i} \cdot \Big ( \frac{\bfF^i}{m^i} F^n \Big ) \d E^{n:j} \\
     & = -\frac{1}{n} \sum_{i=1}^{j}  \int_{\hat{\bbr}} \nabla_{\bfv^i} \cdot \Big( \frac{F^n }{m^i}  \sum_{k=1}^{n} \bff_c^{ik} \Big) \d E^{n:j} - \sum_{i=1}^{j}  \frac{1}{m^i} \int_{\hat{\bbr}} \nabla_{\bfv^i} \cdot \Big( \bfF^i_e F^n \Big) \d E^{n:j} \\
   & = - \frac{1}{n} \sum_{i=1}^{j}  \int_{\hat{\bbr}} \nabla_{\bfv^i} \cdot \Big( \frac{F^n }{m^i}  \sum_{k=1}^{j} \bff_c^{ik} \Big) \d E^{n:j} -\frac{1}{n} \sum_{i=1}^{j}  \int_{\hat{\bbr}} \nabla_{\bfv^i} \cdot \Big( \frac{F^n }{m^i}  \sum_{k=j+1}^{n} \bff_c^{ik} \Big) \d E^{n:j}  \\
   & \quad - \sum_{i=1}^{j}  \frac{1}{m^i} \int_{\hat{\bbr}} \nabla_{\bfv^i} \cdot \Big( \bfF^i_e F^n \Big) \d E^{n:j} \\
& =  -\frac{1}{n} \sum_{i=1}^{j}  \nabla_{\bfv^i} \cdot  \Big[ \Big( \frac{1}{m^i}  \sum_{k=1}^{j} \bff_c^{ik} \Big)  F^{n:j} \Big] 
 -\frac{n-j}{n} \sum_{i=1}^{j}  \int_{\hat{\bbr}_1} \nabla_{\bfv^i} \cdot \Big( \frac{1}{m^i} \bff_c^{i(j+1)} F^{n:j+1} \Big)\d\bfz^{j+1} \\
& \quad  - \sum_{i=1}^{j} \frac{1}{m^i}  \nabla_{\bfv^i} \cdot \Big( \bfF^i_e F^{n:j} \Big).
\end{aligned}
\end{align*}
(iii) Similarly, we use \eqref{eq:dem2} and \eqref{eq:pdfsym} to calculate
\begin{align*}
\T_{34} &= - \sum_{i=1}^{j} \int_{\hat{\bbr}}  \partial_{\omega^i} \Big( \frac{T^i}{I^i} F^n \Big)   \d E^{n:j} \\
     & = -\frac{1}{n} \sum_{i=1}^{j}  \int_{\hat{\bbr}} \partial_{\omega^i} \Big( \frac{F^n }{I^i}  \sum_{k=1}^{n} ( r^i\bfn^{ik}\times \bff_{\bft}^{ik}) \Big) \d E^{n:j}
     - \sum_{i=1}^{j}  \frac{1}{I^i} \int_{\hat{\bbr}} \partial_{\omega^i} \Big( T^i_e F^n \Big) \d E^{n:j} \\
    & =-\frac{1}{n} \sum_{i=1}^{j}  \partial_{\omega^i} \Big( \frac{F^{n:j} }{I^i}  \sum_{k=1}^{j} ( r^i\bfn^{ik}\times \bff_{\bft}^{ik}) \Big)  \\
   & \quad -\frac{n-j}{n} \sum_{i=1}^{j}  \int_{\hat{\bbr}_1} \partial_{\omega^i}\Big( \frac{F^{n:j+1}}{I^i} ( r^i\bfn^{i(j+1)}\times \bff_{\bft}^{i(j+1)}) \Big)\d\bfz^{j+1} 
 - \sum_{i=1}^{j}  \frac{1}{I^i} \partial_{\omega^i} \Big( T^i_e F^{n:j} \Big).
\end{align*}
These calculations complete the proof.
\end{proof}

For a fixed $j$, let  $n \to \infty$ and we assume that there exists a limit $F^j$ such that 
\[ \lim_{n \to \infty} F^{n:j} = F^j \quad \mbox{in a suitable sense}.  \]
Then, in \eqref{eq:les2}, we use Lemma \ref{lem:T3} and formally as $n \to \infty$, the limit $F^j$ satisfies 
\begin{align}
\begin{aligned} \label{eq:lej}
  \partial_t F^{j} & +  \sum_{i=1}^j  \bfv^i  \cdot \nabla_{\bfx^i} F^{j}  +\sum_{i=1}^{j}  \int_{\hat{\bbr}_1} \nabla_{\bfv^i} \cdot \Big( \frac{1}{m^i} \bff_c^{i(j+1)} F^{j+1} \Big)\d\bfz^{j+1} \\
  & + \sum_{i=1}^j \omega^i  \partial_{\theta^i}  F^j
  + \sum_{i=1}^{j}  \int_{\hat{\bbr}_1} \partial_{\omega^i}\Big( \frac{F^{j+1}}{I^i} ( r^i\bfn^{i(j+1)}\times \bff_{\bft}^{i(j+1)}) \Big)\d\bfz^{j+1} \\
 & + \sum_{i=1}^{j} \frac{1}{m^i}  \nabla_{\bfv^i} \cdot \Big( \bfF^i_eF^{j} \Big) 
  +  \sum_{i=1}^{j}  \frac{1}{I^i} \partial_{\omega^i} \Big( T^i_e F^j \Big)
= 0.
\end{aligned}
\end{align}
Note that the dynamics of $F^j$ in \eqref{eq:lej} depends on $F^{j+1}$.  In particular, for $j=1$, we remove the superscript for simplicity to arrive at
\begin{align}
\begin{aligned} \label{eq:lej1}
  \partial_t F & +   \bfv  \cdot \nabla_{\bfx} F  + \omega \partial_\theta F + \nabla_{\bfv} \cdot \big( \frac{\bfF_e}{m} F \big)  
  +  \partial_{\omega} \big( \frac{T_e}{I} F \big) \\
& +  \int_{\hat{\bbr}_1} \nabla_{\bfv} \cdot \Big( \frac{1}{m} \bff_c^{12} F^{2} \Big)\d\bfz^{2} 
+ \int_{\hat{\bbr}_1} \partial_{\omega}\Big( \frac{F^2}{I} ( r\bfn^{12}\times \bff_{\bft}^{12}) \Big)\d\bfz^{2}
= 0.
\end{aligned}
\end{align}

\noindent $\bullet$~Step C (Formal derivation of kinetic equation for one-particle distribution function):  We assume the ``molecular chaos assumption" by setting
\begin{equation} \label{eq:mca}
F^2(t, \bfz, \bfz^*) = F(t, \bfz) \otimes F(t, \bfz^*). 
\end{equation}
We then substitute ansatz \eqref{eq:mca} into \eqref{eq:lej1} to get the kinetic equation for $F$ as
\begin{align}
\begin{aligned} \label{eq:ke4f}
  \partial_t F & +   \bfv  \cdot \nabla_{\bfx} F  + \omega \partial_\theta F + \nabla_{\bfv} \cdot \big( \frac{\bfF_e}{m} F \big)  
  +  \partial_{\omega} \big( \frac{T_e}{I} F \big)
  +  \nabla_{\bfv} \cdot \Big[ \Big( \int_{\hat{\bbr}_1} \frac{1}{m}   \bff_c^{12} F(t, \bfz^*)\d\bfz^*  \Big)  F(t, \bfz) \Big ] \\
  & +  \partial_{\omega} \Big[ \Big( \int_{\hat{\bbr}_1} \frac{1}{I}  ( r\bfn^{12}\times \bff_{\bft}^{12}) F(t, \bfz^*)\d\bfz^*  \Big)  F(t, \bfz) \Big ] 
 = 0.
\end{aligned}
\end{align}

In what follows, we use the notation:
\begin{align} 
&   \gamma_1 :=  \frac{\kappa_1}{m},  \quad \gamma_2 :=  \frac{\kappa_2}{m}, \quad \gamma_3 :=  \frac{\zeta\kappa_3}{m}, \quad \gamma_4 :=  \frac{\zeta\kappa_3}{I}, \nonumber \\
& \bff[F] = \bff_e[F] +  \bff_{c}[F],  \quad \bff_e =  \frac{\bfF_e}{m},  \quad \bff_{c}[F]  = \bff_{c,\bfn}[F] +  \bff_{c,\bfv}[F] +  \bff_{c,\bft}[F], \nonumber \\
& \bff_{c,\bfn}[F](t,\bfz) :=  \int_{\hat{\bbr}_1}  \gamma_1  (|\bfx^* - \bfx| - (r + r^*)) \bfn(\bfx, \bfx^*)  F(t, \bfz^*)\d\bfz^*, \nonumber \\
& \bff_{c,\bfv}[F](t,\bfz) :=  \int_{\hat{\bbr}_1}  \gamma_2 [ (\bfv  - \bfv^*) \cdot \bfn(\bfx, \bfx^*)] \bfn(\bfx, \bfx^*) F(t, \bfz^*)\d\bfz^*, \label{eq:hnote} \\
& \bff_{c,\bft}[F](t,\bfz) :=  \int_{\hat{\bbr}_1}  \gamma_3  \sigma  \bft(\bfx, \bfx^*) F(t, \bfz^*)\d\bfz^*, \nonumber \\
& f_{\omega}[F] = f_{\omega,e}[F] + f_{\omega,t}[F], \quad  f_{\omega,e}[F] = \frac{T_e}{I},
\quad f_{\omega,t}[F](t,\bfz) : = \int_{\hat{\bbr}_1}  \gamma_4 r \sigma F(t, \bfz^*)\d\bfz^*, \nonumber
\end{align}
where the last integration uses $\bfn(\bfx, \bfx^*) \times \bft(\bfx, \bfx^*) = 1.$
Note that $\gamma_i$ are all functions of the phase variables $(r,h,r^*,h^*)$. 
Finally, we substitute \eqref{eq:hnote} into \eqref{eq:ke4f} to arrive at the Vlasov-McKean equation
\begin{equation} \label{eq:vme}
 \partial_t F +   \bfv  \cdot \nabla_{\bfx} F + \omega \partial_\theta F +  \nabla_{\bfv} \cdot (\bff[F] F) +  \partial_{\omega} (f_{\omega}[F] F)= 0.
\end{equation}

\noindent
We remark that the final two terms on the left-hand side of \eqref{eq:vme} account for the internal stress within the floe field due to collisions and the external environmental forcing.

\subsection{Macroscopic behavior of the kinetic model}
From now on, as long as there is no confusion, we suppress $t$-dependence in $F$, i.e., 
\[ F(\bfz) \equiv  F(t, \bfz), \quad \bfz \in \hat{\bbr}_1.  \]
For simplicity, we denote $\hat{\bbr}_2 = \bbr^{8}\times \mathbb{T}^2\times \bbr^2 \times \bbr_+^{4}$. 
and define the energy functional as
\begin{equation*}  
\begin{cases} 
\displaystyle \E  := \E_K + \E_P + \E_{\omega}, \\
\displaystyle \E_K := \frac{1}{2} \int_{\hat{\bbr}_1} m|\bfv|^2 F(\bfz) \d \bfz, \quad \E_{\omega} := \frac{1}{2} \int_{\hat{\bbr}_1} I\omega^2 F(\bfz) \d \bfz, \\
\displaystyle \E_P  := \frac{1}{2}\int_{\hat{\bbr}_2} \left(\int_0^{\delta(\bfz,\bfz_*)} \tilde{\kappa}_1(\eta)\eta \d \eta \right)F(\bfz)F(\bfz_*) \d \bfz\d\bfz_*,\qquad \tilde{\kappa}_1(\delta(\bfz,\bfz_*))=\kappa_1(\bfz,\bfz_*).
\end{cases}
\end{equation*}
%
We observe that each functional component in $\E$ is nonnegative, hence the energy functional $\E$ is nonnegative. In the following two lemmas, we establish the macroscopic behavior of the kinetic description of the floe dynamics. 
\begin{lemma}[Macroscopic behavior] \label{lem:kmb}
Let $F$ be a global smooth probability density solution to system \eqref{eq:vme} which decays sufficiently fast at infinity in phase space. Then, the following estimates hold:
\begin{align}
& (i)~  \frac{\d}{\d t} \int_{\hat{\bbr}_1} mF(t,\bfz) \d \bfz = 0. \nonumber \\
& (ii)~\frac{\d}{\d t} \int_{\hat{\bbr}_1}  m\bfv F \d \bfz 
=  \int_{\hat{\bbr}_1}  \bfF_e F(t, \bfz)  \d \bfz. \nonumber \\
& (iii)~\frac{\d}{\d t} \int_{\hat{\bbr}_1} (m\bfx\times \bfv +I \omega)F \d \bfz = \int_{\hat{\bbr}_1} \bfx \times \bfF_e F(t,\bfz)\d \bfz 
 +\int_{\hat{\bbr}_1} T_e F(t,\bfz)\d\bfz. \label{eq:kedt} \\ 
& (iv)~\frac{\d \E}{\d t} = \int_{\hat{\bbr}_1} \bfv \cdot  \bfF_e F(\bfz)  d \bfz  
 +  \int_{\hat{\bbr}_1}  \omega T_e  F(t, \bfz)  d \bfz \nonumber \\
& \hspace{1cm} + \frac{1}{2}  \int_{\hat{\bbr}_2}   \kappa_2 [ (\bfv  - \bfv^*) \cdot \bfn(\bfx, \bfx^*)]^2  F(\bfz) F(\bfz^*)\d\bfz^*\d\bfz \nonumber\\
&\hspace{1cm} -\frac12 \int_{\hat{\bbr}_2}  \zeta\kappa_3t_c \sigma_t^2  F(\bfz)  F(\bfz^*)\d\bfz^*\d\bfz. \nonumber
\end{align}
\end{lemma}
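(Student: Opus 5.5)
The plan is to test \eqref{eq:vme} against the weights $m$, $m\bfv$, $m\bfx\times\bfv+I\omega$, $\frac12 m|\bfv|^2$ and $\frac12 I\omega^2$, integrate over $\hat{\bbr}_1$, and integrate by parts in $\bfx,\bfv,\theta,\omega$. By the decay and periodicity assumptions on $F$ all boundary terms vanish. For (i), $m=\rho_{ice}\pi r^2 h$ depends only on $(r,h)$, and $r,h$ carry no transport. Every term of \eqref{eq:vme} is therefore a divergence in $\bfx$, $\theta$, $\bfv$ or $\omega$ multiplied by a weight independent of those variables, so each integral vanishes.

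For (ii), the transport terms $\bfv\cdot\nabla_\bfx F$ and $\omega\partial_\theta F$ integrate to zero against $m\bfv$, and the $\omega$-divergence vanishes because $m\bfv$ does not depend on $\omega$. The $\bfv$-divergence gives $\int m\,\bff[F]F\,\d\bfz=\int \bfF_e F\,\d\bfz+\int m\,\bff_c[F]F\,\d\bfz$. Writing out $\bff_c$ via \eqref{eq:hnote} gives a double integral over $\hat{\bbr}_2$ of $\bff_c^{12}(\bfz,\bfz^*)F(\bfz)F(\bfz^*)$. I will argue it vanishes by the change of variables $\bfz\leftrightarrow\bfz^*$, using the antisymmetry $\bff_c(\bfz,\bfz^*)=-\bff_c(\bfz^*,\bfz)$ from \eqref{eq:cfnt}. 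This relies on $\kappa_i,\zeta,t_c$ being symmetric in the pair, and on $\bfn,\bft$ and $\sigma_t$ being antisymmetric and symmetric respectively.

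For (iii), the weight $m\bfx\times\bfv$ acted on by $\bfv\cdot\nabla_\bfx$ produces $-m\bfv\times\bfv=0$. The $\bfv$-divergence gives $\int\bfx\times(\bfF_e+m\bff_c)F$, and the $\omega$-divergence gives $\int (T_e+I f_{\omega,t})F$. The contact contributions combine into the continuum analogue of $\T$ in \eqref{eq:Ta}. I will symmetrize in $(\bfz,\bfz^*)$ exactly as in the proof of Lemma \ref{lem:sumparticleMV}. This uses $(\bfx-\bfx^*)\times\bfn=0$ for the normal part and the contact point identity \eqref{eq:cij} to pair the tangential force with the torque $r\bfn\times\bff_\bft$.

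For (iv), the kinetic parts are $\frac{\d\E_K}{\d t}=\int m\bfv\cdot\bff[F]F$ and $\frac{\d\E_\omega}{\d t}=\int I\omega f_\omega[F]F$, which yield the environmental terms directly. The time derivative of $\E_P$ uses the equation for both factors of $F\otimes F$. By symmetry of the integrand this equals $\int_{\hat{\bbr}_2}\kappa_1\delta\,\bfv\cdot\nabla_\bfx\delta\,FF^*$, which cancels the $\gamma_1$ part of the kinetic term after symmetrization. This mirrors \eqref{eq:T11}, with the sign fixed by $\nabla_\bfx\delta=-\bfn$. The $\gamma_2$ term symmetrizes to $\frac12\int\kappa_2[(\bfv-\bfv^*)\cdot\bfn]^2FF^*$ as in \eqref{eq:T12}. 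The $\gamma_3$ and $\gamma_4$ terms combine as in \eqref{eq:i3i4}, using $\sigma=t_c\sigma_t$, into $-\frac12\int\zeta\kappa_3 t_c\sigma_t^2FF^*$.

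I expect the main obstacle to be the potential-energy derivative, where the weight $\int_0^\delta\tilde\kappa_1$ depends on $\bfx,\bfx^*$ through $\delta$. One must check that only the $\bfx$-transport contributes, with both slots accounted for, so that the factor $\frac12$ cancels. One must also keep the sign convention of $\delta$ and $\bfn$ consistent, and track the $\omega$-dependence of $\sigma$, which enters only the $\bfv$- and $\omega$-divergence terms and not the potential.
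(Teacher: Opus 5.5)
Your proposal is correct and follows the paper's proof essentially step for step. You use the same test functions $m$, $m\bfv$, $m\bfx\times\bfv+I\omega$, $\frac12 m|\bfv|^2$, $\frac12 I\omega^2$, the same $\bfz\leftrightarrow\bfz^*$ symmetrizations, and the same computation of $\frac{\d\E_P}{\d t}$ via the equation for $F(\bfz)F(\bfz^*)$, in which only the $\bfx$-transport survives and $\nabla_\bfx\delta=-\bfn$. For (iii) you rely, as the paper does, on the contact-point identity \eqref{eq:cij}. Note that this identity is exact only in the undeformed-contact approximation, because $\bfc^{ij}-\bfc^{ji}=-\delta^{ij}\bfn^{ij}$.
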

\begin{proof} 
\noindent (i) Note that $\frac{\d r}{\d t} = \frac{\d h}{\d t} = 0$ which implies $\frac{\d m}{\d t}=0.$ We multiply \eqref{eq:vme} by $m$, integrate it over the phase space $\hat{\bbr}_1$ and use the decay of $F$ at infinity to find the conservation of total mass as
\[ \frac{\d}{\d t} \int_{\hat{\bbr}_1} m F(t,\bfz) \d \bfz = 0. \]

(ii)~Next, we derive a balance law for momentum. For this, we multiply \eqref{eq:vme} by $\bfv$ to see that 
\begin{equation} \label{eq:kexv}
\partial_t (\bfv F) +    \nabla_{\bfx} \cdot \big (\bfv \otimes \bfv F \big)  +  \nabla_{\bfv} \cdot  \Big (  \bfv  \otimes \bff[F] F  \Big ) + \partial_\theta (\omega \bfv F) +  \partial_{\omega} (\bfv f_{\omega}[F] F)=  \bff_e[F] F +  \bff_{c}[F] F.
\end{equation}
Similarly, using anti-symmetry, we multiply \eqref{eq:kexv} by $m$ and integrate it over the phase space $\hat{\bbr}_1$ to find 
\begin{align*}
\begin{aligned} 
& \frac{\d}{\d t} \int_{\hat{\bbr}_1} m\bfv F(t,\bfz) \d \bfz 
=  \int_{\hat{\bbr}_1}  \bfF_e F \d\bfz + \int_{\hat{\bbr}_1} m\bff_{c}[F] F \d\bfz \\
&\hspace{1cm}  = \int_{\hat{\bbr}_1}  \bfF_e F \d\bfz  + \int_{\hat{\bbr}_2}  m\bff_c(\bfz, \bfz^*)  F(\bfz^*) F(\bfz)\d\bfz^*\d\bfz \\
&\hspace{1cm}  = \int_{\hat{\bbr}_1}  \bfF_e F \d\bfz  - \int_{\hat{\bbr}_2}  m\bff_c(\bfz^*, \bfz)  F(\bfz^*) F(\bfz)\d\bfz^*\d\bfz \\    
&\hspace{1cm}  =\int_{\hat{\bbr}_1}  \bfF_e F \d\bfz.
\end{aligned}
\end{align*}

\vspace{0.1cm}
\noindent (iii) 
Using the property of cross product $\bfv\times \bfv=0$, we multiply \eqref{eq:vme} by $m\bfx\times \bfv$ and integrate the resulting relation to find
\begin{align*}
&\frac{\d}{\d t}\int_{\hat{\bbr}_1} (m\bfx\times \bfv)F(t,\bfz) \d\bfz\\
&\hspace{1.2cm}=-\int_{\hat{\bbr}_1}(m\bfx\times \bfv)\nabla_{\bfx} \cdot \big (\bfv F \big)  \d\bfz
-  \int_{\hat{\bbr}_1}(m\bfx\times \bfv)\nabla_{\bfv} \cdot  \Big ( \bff[F] F  \Big ) \d\bfz\\
&\hspace{1.2cm}-\int_{\hat{\bbr}_1}\partial_\theta (m\omega (\bfx\times \bfv) F) \d\bfz
- \int_{\hat{\bbr}_1}\partial_{\omega} (m(\bfx\times \bfv) f_{\omega}[F] F)\d\bfz\\
&\hspace{1.2cm}= \int_{\hat{\bbr}_1}m\bfv\times \bfv F   \d\bfz
+ \int_{\hat{\bbr}_1} m\bfx\times \bff[F] F  \d\bfz\\
&\hspace{1.2cm}=\int_{\hat{\bbr}_1} \bfx\times \bfF_e F \d\bfz+\int_{\hat{\bbr}_1} m\bfx\times (\bff_{c, \bfn}[F]+\bff_{c, \bfv}[F]+\bff_{c, \bft}[F]) F  \d\bfz\\
&\hspace{1.2cm}=S_1+S_{21}+S_{22}+S_{23}.
\end{align*}
Next estimate $S_{2i}$, $i=1,2,3$ one by one.
We use the anti-symmetry of the integrand under the transformation $\bfz\leftrightarrow \bfz^*$ to get
\begin{align*}
S_{21} 
&=  \int_{\hat{\bbr}_2}  m\gamma_1  (|\bfx^* - \bfx| - (r + r^*)) \ (\bfx\times  \bfn(\bfx, \bfx^*))  F(t, \bfz^*)\d\bfz^*\d \bfz\\
&=  -\int_{\hat{\bbr}_2}  m\gamma_1  (|\bfx^* - \bfx| - (r + r^*)) \ (\bfx^*\times  \bfn(\bfx, \bfx^*))  F(t, \bfz^*)\d\bfz^*\d \bfz\\
&=  \frac12\int_{\hat{\bbr}_2}  m\gamma_1  (|\bfx^* - \bfx| - (r + r^*)) \ ((\bfx-\bfx^*)\times  \bfn(\bfx, \bfx^*))  F(t, \bfz^*)\d\bfz^*\d \bfz=0. \\
S_{22} 
&=  \int_{\hat{\bbr}_2}  m\gamma_2 [ (\bfv  - \bfv^*) \cdot \bfn(\bfx, \bfx^*)]  \ (\bfx\times \bfn(\bfx, \bfx^*)) F(t, \bfz^*)\d\bfz^*\d \bfz\\
&=  -\int_{\hat{\bbr}_2}  m\gamma_2 [ (\bfv  - \bfv^*) \cdot \bfn(\bfx, \bfx^*)]  \ (\bfx^*\times \bfn(\bfx, \bfx^*))  F(t, \bfz^*)\d\bfz^*\d \bfz\\
&=  \frac12\int_{\hat{\bbr}_2}m\gamma_2 [ (\bfv  - \bfv^*) \cdot \bfn(\bfx, \bfx^*)]  \ ((\bfx-\bfx^*)\times \bfn(\bfx, \bfx^*))     F(t, \bfz^*)\d\bfz^*\d \bfz=0. \\
S_{23} 
&= \int_{\hat{\bbr}_2}  m(\bfx\times  \bff_{c, \bft}[F])F(t, \bfz^*) \d\bfz^*\d \bfz. 
\end{align*}
Next, we calculate the spin angular part to get
\begin{align*}
\frac{\d}{\d t} \int_{\hat{\bbr}_1}I \omega F(t,\bfz) \d \bfz 
&=- \int_{\hat{\bbr}_1}\nabla_{\bfx} \cdot \big (I \omega\ \bfv F \big)  \d\bfz
-  \int_{\hat{\bbr}_1}\nabla_{\bfv} \cdot  \Big ( I \omega\bff[F] F  \Big ) \d\bfz\\
&-\int_{\hat{\bbr}_1}\partial_\theta (I \omega^2 F) \d\bfz
- \int_{\hat{\bbr}_1} I \omega \partial_{\omega} ( f_{\omega}[F] F)\d\bfz\\
&=\int_{\hat{\bbr}_1} T_e F  + I f_{\omega,t}[F] F\d\bfz\\
&= S_3+S_4.
\end{align*}

Following the same argument as in \eqref{eq:Ta}, we combine $S_{23}$ and $S_4$ to calculate 
\begin{align*}
S_{23} + S_4 
&= \int_{\hat{\bbr}_2}  m(\bfc\times  \bff_{c, \bft}[F])F(t, \bfz^*) \d\bfz^*\d \bfz=0.
\end{align*}
This leads to the desired result by summing up the above terms.

\vspace{0.1cm}
\noindent (iv)~We multiply $\frac{1}{2} |\bfv |^2$ by \eqref{eq:vme} and $\frac{1}{2} \omega^2$ to \eqref{eq:vme} to get
\begin{align}
\begin{aligned} \label{eq:kevv}
& \partial_t \Big(\frac{1}{2} |\bfv |^2 F \Big) +   \nabla_{\bfx} \cdot \Big (\frac{1}{2} |\bfv |^2 \bfv F \Big)  +  \nabla_{\bfv} \cdot  \Big (\frac{1}{2} |\bfv |^2 \bff[F] F  \Big ) + \partial_\theta (\omega |\bfv |^2 F)/2 +  \partial_{\omega} (|\bfv |^2 f_{\omega}[F] F)/2 \\
& \hspace{2.5cm} = (\bfv \cdot \bfF_e/m) F +  (\bfv \cdot \bff_{c}[F]) F, \\
& \partial_t \Big(\frac{1}{2} \omega^2 F \Big) +   \nabla_{\bfx} \cdot \Big (\frac{\omega^2}{2} \bfv F \Big)  +  \nabla_{\bfv} \cdot  \Big (\frac{\omega^2}{2}  \bff[F] F  \Big ) + \partial_\theta (\omega^3  F)/2 +  \partial_{\omega} (\omega^2 f_{\omega}[F] F)/2 \\
& \hspace{2.5cm} = (\omega T_e/I) F +  (\omega f_{\omega,t}[F]) F.
\end{aligned}
\end{align}
Now, we multiply both equations in \eqref{eq:kevv} by $m$ and $I$, respectively, and integrate them over $\hat{\bbr}_1$ to find 
\begin{align}
\begin{aligned} \label{eq:kevvww}
 \frac{\d}{\d t} \int_{\hat{\bbr}_1} \frac{1}{2} m|\bfv |^2 F(\bfz) \d \bfz &  = \int_{\hat{\bbr}_1}  (\bfv \cdot \bfF_e) F(\bfz)  d \bfz + \int_{\hat{\bbr}_1}   m(\bfv \cdot \bff_{c}[F])(\bfz) F(t,\bfz)   d \bfz 
 \\
&  =:  \T_{41} + \T_{42}+ \T_{43}+ \T_{44}, \\
 \frac{\d}{\d t} \int_{\hat{\bbr}_1} \frac{1}{2} I\omega^2 F(\bfz) \d \bfz &  = \int_{\hat{\bbr}_1}  (\omega T_e) F(\bfz)  d \bfz + \int_{\hat{\bbr}_1}   I(\omega f_{\omega,t}[F])(\bfz) F(t,\bfz)   d \bfz 
 \\
&  =:  \T_{45} + \T_{46}.
\end{aligned}
\end{align}
$\T_{41}$ and $\T_{45}$ are terms for the environmental component. 
Below, we estimate the term $\T_{4i}, i=2,3,4,6,$ one by one. \newline

\noindent $\bullet$~Case D.1 (Estimate of $\T_{42}$):~By direct calculation, we obtain
\begin{align}
\begin{aligned} \label{eq:T42}
\T_{42} & = \int_{\hat{\bbr}_2}   \kappa_1 (|\bfx^* - \bfx| - (r + r^*)) \bfv \cdot \bfn(\bfx, \bfx^*)  F( \bfz^*) F(\bfz)\d\bfz^* d \bfz   \\
& = \int_{\hat{\bbr}_2}   \kappa_1  (|\bfx - \bfx^*| - (r^* + r)) \bfv^* \cdot \bfn(\bfx^*, \bfx)  F( \bfz) F(\bfz^*)\d\bfz d \bfz^*   \\
&=  -\int_{\hat{\bbr}_2}   \kappa_1  (|\bfx^* - \bfx| - (r + r^*)) \bfv^* \cdot \bfn(\bfx, \bfx^*)  F( \bfz^*) F(\bfz)\d\bfz^* d \bfz   \\
&= \frac{1}{2} \int_{\hat{\bbr}_2}   \kappa_1  (|\bfx^* - \bfx| - (r + r^*))( \bfv - \bfv^*)  \cdot \bfn(\bfx, \bfx^*)  F( \bfz^*) F(\bfz)\d\bfz^* d \bfz \\
&= -\frac{\d \E_P}{\d t},
\end{aligned}
\end{align}
where we used an exchange map $\bfz~\longleftrightarrow~\bfz^*$ and $\bfn(\bfx^*, \bfx) = -\bfn(\bfx, \bfx^*)$. 
For simplicity, we set 
\[
p(\bfz,\bfz_*):=\int_0^{\delta(\bfz,\bfz_*)} \tilde{\kappa}_1(\eta)\eta \d \eta,
\]
and note that it is only dependent on $\bfx,\ r,\ m,$ and $h$.
Since
\begin{align*}
\partial_t(F(\bfz)F(\bfz_*))
&+   (\bfv,\bfv_*)  \cdot (\nabla_{\bfx},\nabla_{\bfx_*}) (F(\bfz)F(\bfz_*))\\
&+ (\omega,\omega_*)\cdot (\partial_\theta, \partial_{\theta_*}) (F(\bfz)F(\bfz_*))\\
&+  (\nabla_{\bfv},\nabla_{\bfv_*}) \cdot \big((\bff[F](\bfz),\bff[F](\bfz_*)) F(\bfz)F(\bfz_*)\big)\\
&+  (\partial_{\omega}, \partial_{\omega_*}) \cdot\big((f_{\omega}[F](\bfz),f_{\omega}[F](\bfz_*)) F(\bfz)F(\bfz_*)\big)= 0,
\end{align*}
by integration by parts, we have
\begin{align*}
\frac{\d \E_P}{\d t}
&= \frac{1}{2}\int_{\hat{\bbr}_2} p(\bfz,\bfz_*)\partial_t(F(\bfz)F(\bfz_*)) \d \bfz\d\bfz_*\\
&= -\frac{1}{2}\int_{\hat{\bbr}_2} p(\bfz,\bfz_*) (\bfv,\bfv_*)  \cdot (\nabla_{\bfx},\nabla_{\bfx_*}) (F(\bfz)F(\bfz_*)) \d \bfz\d\bfz_*\\
&= \frac{1}{2}\int_{\hat{\bbr}_2} \left[\nabla_{\bfx}p(\bfz,\bfz_*)\cdot\bfv+\nabla_{\bfx_*}p(\bfz,\bfz_*)\cdot\bfv_*\right] F(\bfz)F(\bfz_*) \d \bfz\d\bfz_*.
\end{align*}
Note that divergence other than in $\bfx,\ \bfx_*$ cancels $p(\bfz,\bfz_*)$ so that they are zero.
Since $\nabla_{\bfx}p(\bfz,\bfz_*) = -\kappa_1(\bfz,\bfz_*)\delta(\bfz,\bfz_*)\bfn(\bfx,\bfx_*)$, 
\[
\frac{\d \E_P}{\d t}=-\frac{1}{2}\int \kappa_1\delta ( \bfv - \bfv^*)  \cdot \bfn(\bfx, \bfx^*)  F( \bfz^*) F(\bfz) \d\bfz^* \d \bfz=-\mathcal{T}_{42}.
\]

\noindent $\bullet$~Case D.2 (Estimate of $\T_{43}$): Similar to Case D.1, we have
\begin{align}
\begin{aligned} \label{eq:T43}
\T_{43} & = \int_{\hat{\bbr}_2} m  \gamma_2 [ (\bfv  - \bfv^*) \cdot \bfn(\bfx, \bfx^*)] \bfn(\bfx, \bfx^*)  \cdot \bfv F(\bfz) F(\bfz^*)\d\bfz^*\d\bfz \\
& = -\int_{\hat{\bbr}_2}  m \gamma_2 [ (\bfv  - \bfv^*) \cdot \bfn(\bfx, \bfx^*)] \bfn(\bfx, \bfx^*)  \cdot \bfv^* F(\bfz) F(\bfz^*)\d\bfz^*\d\bfz \\
& = \frac{1}{2}  \int_{\hat{\bbr}_2}   \kappa_2 [ (\bfv  - \bfv^*) \cdot \bfn(\bfx, \bfx^*)]^2  F(\bfz) F(\bfz^*)\d\bfz^*\d\bfz\le 0. \\
\end{aligned}
\end{align}

\noindent $\bullet$~Case D.3 (Estimate of $\T_{44}+\T_{46}$): Similarly, using the definition \eqref{eq:demparm}, we have
\begin{align}
\begin{aligned} \label{eq:T44T46}
\T_{44} + \T_{46}
& = \int_{\hat{\bbr}_2}   [m\gamma_3\bff_{\bft}(\bfz,\bfz^*) \cdot \bfv + I\gamma_4  r \omega\bfn (\bfz,\bfz^*)\times \bff_{\bft}(\bfz,\bfz^*)  ] F(\bfz)  F(\bfz^*)\d\bfz^*\d\bfz \\
& = \int_{\hat{\bbr}_2}  [\bff_{\bft}(\bfz,\bfz^*) \cdot \bfv +  r \omega\bfn (\bfz,\bfz^*)\times \bff_{\bft}(\bfz,\bfz^*)  ] F(\bfz)  F(\bfz^*)\d\bfz^*\d\bfz \\
& = \int_{\hat{\bbr}_2}   [-\bff_{\bft}(\bfz,\bfz^*) \cdot \bfv^* +   r^* \omega^*\bfn (\bfz,\bfz^*)\times \bff_{\bft}(\bfz,\bfz^*)  ] F(\bfz)  F(\bfz^*)\d\bfz^*\d\bfz \\
& = \frac12\int_{\hat{\bbr}_2}   [(\bff_{\bft}(\bfz,\bfz^*) \cdot\bft(\bfz,\bfz^*)) \bft(\bfz,\bfz^*)\cdot (\bfv-\bfv^*)\\
&\hspace{1cm}+   (r \omega+r^*\omega^*)(\bff_{\bft}(\bfz,\bfz^*) \cdot\bft(\bfz,\bfz^*))\bfn (\bfz,\bfz^*)\times \bft(\bfz,\bfz^*) ] F(\bfz)  F(\bfz^*)\d\bfz^*\d\bfz \\
& =-\frac12 \int_{\hat{\bbr}_2}  (\bff_{\bft} \cdot \bft)[(\bfv^* - \bfv)\cdot \bft - r\omega -r^*\omega^*]  F(\bfz)  F(\bfz^*)\d\bfz^*\d\bfz\\
& =-\frac12 \int_{\hat{\bbr}_2}  \zeta\kappa_3t_c \sigma_t^2 F(\bfz)  F(\bfz^*)\d\bfz^*\d\bfz\le 0,
\end{aligned}
\end{align}
where we used anti-symmetry and
\[
\bff_{\bft}(\bfz,\bfz^*) = (\bff_{\bft}(\bfz,\bfz^*) \cdot\bft(\bfz,\bfz^*)) \bft(\bfz,\bfz^*),\qquad \bfn (\bfz,\bfz^*)\times \bft(\bfz,\bfz^*)=1.
\]
In \eqref{eq:kevvww}, we combine all the estimates \eqref{eq:T42}, \eqref{eq:T43} and \eqref{eq:T44T46}  to get the desired estimate in \eqref{eq:kedt}. 
\end{proof}

Similar to the particle model case, the last equation in \eqref{eq:kedt} implies energy dissipation when the environmental force and torque are zero. Now, we consider a case with constant ocean velocity and zero impact from atmospheric drag, Coriolis effect, and ocean tilt.

\begin{theorem} \label{thm:vw=0}
Suppose the ocean velocity is a constant $\bfu^\infty$ with 
$\alpha_o,\ \beta_o>0, \bfF_w=\bfF_E=\bfF_g=\bfzr$ and $T_w=0.$
Let $F(\bfx, \bfv, \theta, \omega, r,h)$ be a global solution to system \eqref{eq:vme}.
Then, we have
\[
\frac{\d \E}{\d t}\le \int_{\hat{\bbr}_1}\alpha_o \bfv\cdot(\bfu^\infty-\bfv)|\bfu^{\infty}-\bfv| F(t,\bfz)d\bfz-\int_{\hat{\bbr}_1}\beta_o |\omega|^3 F(t,\bfz)d\bfz.
\]
Furthermore, if the solution is compactly supported in the $\bfuu_o^\infty$-moving frame, and
the floe radius $r$ and thickness $h$ are bounded from below and above once the
system is initialized, then
\[
\lim_{t\to \infty}\E_K^{\rm rel}(t)=0,
\qquad
\lim_{t\to \infty}\E_\omega(t)=0,
\]
where
\[
\E_K^{\rm rel}(t)
:=
\frac12
\int_{\hat{\bbr}_1}
m|\bfv-\bfu^\infty|^2F(t,\bfz)\,d\bfz .
\]
\end{theorem}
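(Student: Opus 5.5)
The first inequality follows from Lemma \ref{lem:kmb}(iv). With $\bfF_w=\bfF_E=\bfF_g=\bfzr$ and $T_w=0$, the environmental terms reduce to $\bfF_e=\alpha_o(\bfu^\infty-\bfv)|\bfu^\infty-\bfv|$ and $T_e=\beta_o(0-\omega)|\omega|$, since $\nabla\times\bfu^\infty=0$ for constant $\bfu^\infty$. Substituting gives exactly the two integrals in the statement. The two collision terms in (iv) are nonpositive: the first because $\kappa_2<0$ (as $\eta_0<0$ by \eqref{eq:beta}), the second because $\zeta\kappa_3 t_c\sigma_t^2\ge 0$. Dropping them yields the claimed bound on $\frac{\d\E}{\d t}$.

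For the asymptotics, I follow the proof of Theorem \ref{thm:v0} at the kinetic level. First I pass to the moving frame $\bfx\mapsto\bfx-\bfu^\infty t$, $\bfv\mapsto\bfv-\bfu^\infty$. Equation \eqref{eq:vme} is invariant under this change: the collision kernels depend only on $\bfx-\bfx^*$ and $\bfv-\bfv^*$, and the drag depends only on $\bfu^\infty-\bfv$. So we may assume $\bfu^\infty=\bfzr$. In this frame the relative energy $\E^{\rm rel}=\E_K^{\rm rel}+\E_P+\E_\omega$ satisfies
\[
\frac{\d\E^{\rm rel}}{\d t}\le -\int_{\hat{\bbr}_1}\big(\alpha_o|\bfv|^3+\beta_o|\omega|^3\big)F\,\d\bfz=:-D(t).
\]
Hence $\E^{\rm rel}$ is nonincreasing and nonnegative, and $\int_0^\infty D(t)\,\d t\le\E^{\rm rel}(0)<\infty$.

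Next I convert $D\to0$ into decay of the energies. Because $r$ and $h$ are bounded above and below, $m$, $I$, $\alpha_o$ and $\beta_o$ are bounded above and below on the support. Since $F$ is a probability density, Jensen (or Hölder) gives $\int m|\bfv|^2F\le C\big(\int|\bfv|^3F\big)^{2/3}$, and similarly for $\omega$. It therefore suffices to show $D(t)\to0$, which I obtain from Barbalat's Lemma \ref{lem:barb} once $D$ is shown to be uniformly continuous.

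Uniform continuity is the main obstacle. I would bound $|D'(t)|$ by differentiating under the integral and using \eqref{eq:vme}, integrating by parts in $\bfv$ and $\omega$. This gives
\[
D'(t)=\int\big(3\alpha_o|\bfv|\bfv\cdot\bff[F]+3\beta_o|\omega|\omega f_\omega[F]\big)F\,\d\bfz ,
\]
where the $\bfx$- and $\theta$-transport terms vanish because the integrand does not depend on $\bfx$ or $\theta$. To bound this I need $\bfv$, $\omega$, $\bff[F]$ and $f_\omega[F]$ bounded on the support of $F$. Here the compact-support hypothesis in the moving frame is used. I read it as uniform-in-time compact support; that seems necessary, and I would state it explicitly as the interpretation. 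Under it, $\bfv$, $\omega$ and the overlaps $\delta$ are bounded, so $\kappa_1$, $\kappa_2$, $\kappa_3$ are bounded. The velocities $\sigma_t$ are bounded as well, and hence $\sigma=t_c\sigma_t$ is bounded, using the regularization or cap of $t_c$ from the remark after Lemma \ref{lem:sumparticleEb}. This makes $D'$ bounded, so $D$ is Lipschitz, and Barbalat gives $D(t)\to0$. Therefore $\E_K^{\rm rel}(t)\to0$ and $\E_\omega(t)\to0$.
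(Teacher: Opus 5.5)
Your proposal follows the paper's proof essentially step by step. It uses Lemma~\ref{lem:kmb}(iv) with the nonpositive collision terms dropped, the Galilean reduction to $\bfu^\infty=\bfzr$, and integrability of the cubic dissipation. It then bounds the time derivative of the dissipation uniformly in time on a compact support to invoke Barbalat's Lemma~\ref{lem:barb}, and finishes with a H\"older step to reach $\E_K^{\rm rel}$ and $\E_\omega$.

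The one divergence is in bounding the tangential kernel. You rely on a cap or regularization of $t_c$, which is an assumption not in the statement and is unnecessary here. The paper instead uses the Coulomb cut-off built into $\zeta$, which gives $|\zeta\kappa_3\sigma|\le\mu|\bff_{\bfn}|$ directly and so bounds the tangential force and torque on the compact support with no extra hypothesis.
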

\begin{proof} 
We use the proof of Theorem 3.5 in \cite{deng2025particle} and 
Lemma \ref{lem:kmb} to find 
\begin{equation*}
\frac{\d \E}{\d t}\le\int_{\hat{\bbr}_1}\alpha_o \bfv\cdot(\bfu^\infty-\bfv)|\bfu^{\infty}-\bfv| F(t,\bfz)d\bfz-\int_{\hat{\bbr}_1}\beta_o |\omega|^3 F(t,\bfz)d\bfz,
\end{equation*}
since the other two terms are non-positive (noting that $\kappa_2 \le 0$).
We introduce auxiliary variables
\[
\bfx^{\rm rel}:=\bfx-\bfu^\infty t,
\qquad
\bfv^{\rm rel}:=\bfv-\bfu^\infty,
\]
and define
\[
F^{\rm rel}
(t,\bfx^{\rm rel},\bfv^{\rm rel},\theta,\omega,r,h)
:=
F(t,\bfx^{\rm rel}+\bfu^\infty t,
      \bfv^{\rm rel}+\bfu^\infty,\theta,\omega,r,h).
\]
With these variables, the system has the same form as the
case $\bfu^\infty=\bfzr$. In the rest of the proof, we suppress the superscript ``{\rm rel}'' for simplicity.
The above inequality reduces to 
\begin{equation}\label{eq:kEnInq}
\frac{\d \E}{\d t}\le-\int_{\hat{\bbr}_1}\alpha_o |\bfv|^3 F(t,\bfz)d\bfz-\int_{\hat{\bbr}_1}\beta_o |\omega|^3 F(t,\bfz)d\bfz,
\end{equation}
so that the total energy is non-increasing.
Next, we integrate inequality~\eqref{eq:kEnInq}, to get
\[
\int_0^\infty \left(\int_{\hat{\bbr}_1}  \alpha_o \left| \bfv \right|^3 F(t, \bfz)  d \bfz   +  \int_{\hat{\bbr}_1} \beta_o  |\omega|^3  F(t, \bfz)  d \bfz\right)\d t\le \E(0)-\inf_{t>0} \E(t)\le \E(0),
\]
as the total energy is non-negative.
Therefore, it is enough to show the uniform continuity of these two terms to prove the first claim using Barbalat's lemma.
We evaluate the time-derivative of the above terms as
\begin{align*}
\int_{\hat{\bbr}_1}\alpha_o & \left| \bfv \right|^3 \partial_t F(\bfz)  \d \bfz \\
& = -\int_{\hat{\bbr}_1}\alpha_o\left| \bfv \right|^3 \big( \bfv  \cdot \nabla_{\bfx} F + \omega \partial_\theta F +  \nabla_{\bfv} \cdot (\bff[F] F) +  \partial_{\omega} (f_{\omega}[F] F) \big) \d \bfz  \\
& = -\int_{\hat{\bbr}_1}\alpha_o\left| \bfv \right|^3  \nabla_{\bfv} \cdot (\bff[F] F)  \d \bfz  \\
& =  
3\int_{\hat{\bbr}_1}\alpha_o\left|\bfv \right| \bfv \cdot (\bfF_e/m +\bff_c [F])F  \d \bfz := \T_\alpha,\\
\int_{\hat{\bbr}_1}\beta_o & \left| \omega \right|^3 \partial_t F(\bfz)  \d \bfz \\
& = -\int_{\hat{\bbr}_1}\beta_o\left| \omega \right|^3 \big( \bfv  \cdot \nabla_{\bfx} F + \omega \partial_\theta F +  \nabla_{\bfv} \cdot (\bff[F] F) +  \partial_{\omega} (f_{\omega}[F] F) \big) \d \bfz  \\
& = -\int_{\hat{\bbr}_1}\beta_o\left| \omega \right|^3  \partial_{\omega} (f_{\omega}[F] F) \d \bfz  \\
& = 3\int_{\hat{\bbr}_1}\beta_o\left| \omega \right| \omega (f_{\omega}[F] F) \d \bfz  := \T_\beta,
\end{align*}
where we used \eqref{eq:vme} and the divergence theorem with fast decay in the density function $F$ for vanishing boundaries. 

The integrals $\T_\alpha$ and $\T_\beta$ are bounded by some time-independent constant $C$. 
This can be shown by assuming that there exists a compact set $K\subset \hat{\mathbb R}_1$ and $r_0,h_0>0$ such that
\[
\text{supp } F(t,\cdot)\subset K\cap \{r\ge r_0\}\cap \{h\ge h_0\} \text{ for all } t\ge0.
\]
Since $K$ is compact, we can set
\[
\bar\alpha_o:=\|\alpha_o\|_{L^\infty(K)},\quad 
\bar\beta_o:=\|\beta_o\|_{L^\infty(K)},\quad 
\bar U:=\sup_{K}|\bfv|,\quad 
\bar\Omega:=\sup_{K}|\omega|.
\]
Moreover, all coefficients in the force fields are bounded on $K\times K$.
Using the force decomposition and $\bfF_w=\bfF_E=\bfF_g=\bfzr$ and $T_w=0$, we have
\[
\bff[F]= \bfF_e/m + \bff_c[F] = \bfF_o/m + \bff_c[F] \text{ and }\ f_\omega[F] = T_e/I +f_{\omega,c}[F] = T_o/I +f_{\omega,c}[F],
\]
which leads to 
\[
\|\bfF_o\|_{L^\infty(K)}\le  \bar\alpha_o\bar U^2\le C,\qquad
\|T_e\|_{L^\infty(K)}\le \bar\beta_o \bar\Omega^2\le C.
\]
For the contact terms, by compactness and mass conservation, we have
\begin{align*}
\|\bff_{c,\bfn}[F] \|_{L^\infty(K)} & \le \|\gamma_1\delta\|_{L^\infty(K^2)} \le C,\\
\|\bff_{c,\bfv}[F] \|_{L^\infty(K)} & \le \|\gamma_2|(\bfv-\bfv_*)\cdot\bfn|\|_{L^\infty(K^2)} \le C.
\end{align*}
The tangential part is controlled by the Coulomb cut-off
$|\zeta \kappa_3 \sigma|\le \mu |\kappa_1\delta|$:
\[
\|\bff_{c,\bft}[F] \|_{L^\infty(K)}\le \left\|\frac{\mu|\kappa_1\delta|}{m}\right\|_{L^\infty(K^2)} \le C.
\]
Hence 
\[
\|\bff[F] \|_{L^\infty(K)}+\|f_\omega[F] \|_{L^\infty(K)}\le C\ \text{ uniformly in }\ t.
\]
Therefore we have
\[
|\T_\alpha(t)|
\le 3\bar\alpha_o \bar U^2 \|\bff[F] \|_{L^\infty(K)}\int_K F(t,\bfz) d\bfz
\le C.
\]
Similarly, one has
\[
|\T_\beta(t)|
\le 3\bar\beta_o \bar\Omega^2 \|f_\omega[F] \|_{L^\infty(K)}\int_K F(t,\bfz) d\bfz
\le C,
\]
with a constant $C>0$ independent of $t$.
Thus $t\mapsto \int \alpha_o |\bfv|^3F$ and $t\mapsto \int \beta_o |\omega|^3F$
are uniformly continuous on $[0,\infty)$.
We use Barbalat's Lemma \ref{lem:barb} to get:
\[
\lim_{t\to\infty}\int_{\hat{\bbr}_1}\alpha_o |\bfv|^3 F(t,\bfz)d\bfz=0,\quad\text{and}\quad\lim_{t\to\infty}\int_{\hat{\bbr}_1}\beta_o |\omega|^3 F(t,\bfz)d\bfz=0.
\]
Lastly, using weighted Hölder's inequality, one can bound the translational kinetic energy as
\begin{align*}
    \E_K & = \frac{1}{2} \int_{\hat{\bbr}_1} m|\bfv|^2 F \d \bfz 
    \le \frac12\left(\int_{\hat{\bbr}_1} \alpha_o|\bfv|^3 F \d\bfz\right)^{2/3} \left(\int_{\hat{\bbr}_1}\frac{m^3F }{\alpha_o^2}\d\bfz\right)^{1/3} 
     \le C\left(\int_{\hat{\bbr}_1} \alpha_o|\bfv|^3 F \d\bfz\right)^{2/3}, 
\end{align*}
where the positive constant $C$ depends on the constants that bound the floe radius and thickness. 
Taking the limit $t\to \infty$ gives
\[
\lim_{t\to \infty}\E_K=0
\]
which is equivalently rewritten as,
\[
\lim_{t\to\infty}\E_K^{\rm rel}(t)=0.
\]
Similar arguments apply to estimate $\E_\omega$ to get the desired estimate.
\end{proof}
\color{black}

\section{From kinetic to hydrodynamic description} \label{sec:hmodel}

In this section, we derive the macroscopic (hydrodynamic) balance laws associated with the kinetic model
\eqref{eq:vme}, and then close them via a mono-kinetic ansatz.

\subsection{Hydrodynamic balance laws}\label{sec:hblaw}
Throughout this section, we denote
\[
\bfy := (\bfv,\theta,\omega,r,h)\in \bbr^2\times\mathbb{T} \times \bbr \times \bbr_+^2,
\qquad
\d\bfy := \d\bfv \d\theta \d\omega \d r \d h,
\]
$\hat{\bbr}_3 = \bbr^2\times\mathbb{T} \times \bbr \times \bbr_+^2$ and 
$\hat{\bbr}_4 = \bbr^4\times\mathbb{T} \times \bbr \times \bbr_+^2$.
Let $F=F(t,\bfx,\bfy)\ge 0$ be a sufficiently smooth solution of \eqref{eq:vme} that decays rapidly as $|\bfv|+|\omega| \to\infty$ and as $r+h\to\infty$, so that all integrations by parts below are justified, and boundary terms vanish.

Recall that the particle mass, $m$, and moment of inertia, $I$, depend on the size variables $(r,h)$, i.e.,
\[
m=m(r,h)>0,\qquad I=I(r,h)>0.
\]
From \eqref{eq:dem2}, 
\[
\frac{\d r}{\d t} = \frac{\d h}{\d t} = 0\quad\implies\quad \frac{\d m}{\d t}=\frac{\d I}{\d t}=0.
\]
We define the \emph{mass-weighted} and \emph{inertia-weighted} local moments (hydrodynamic fields):
\begin{equation}\label{eq:hlocmom}
\begin{cases}
\displaystyle \rho(t,\bfx) := \int_{\hat{\bbr}_3} m F(t,\bfx,\bfy) \d\bfy,\\[0.18cm]
\displaystyle (\rho \bfuu)(t,\bfx) := \int_{\hat{\bbr}_3} m \bfv F(t,\bfx,\bfy) \d\bfy,\\[0.18cm]
\displaystyle \rho_I(t,\bfx) := \int_{\hat{\bbr}_3} I F(t,\bfx,\bfy) \d\bfy,\\[0.18cm]
\displaystyle (\rho_I\bar\omega)(t,\bfx) := \int_{\hat{\bbr}_3} I \omega F(t,\bfx,\bfy) \d\bfy,\\[0.18cm]
\displaystyle \hat\omega(t,\bfx) := \int_{\hat{\bbr}_3} (m \bfx \times \bfv +I \omega) F(t,\bfx,\bfy) \d\bfy = \bfx \times (\rho \bfuu) + \rho_I \bar\omega,\\[0.18cm]
\displaystyle E(t,\bfx) := \int_{\hat{\bbr}_3} 
\frac12\Big(m|\bfv|^2+I\omega^2\Big) F(t,\bfx,\bfy) \d\bfy.
\end{cases}
\end{equation}
Herein, 
$\hat\omega$ denotes total angular momentum density, while $\bar\omega$ is the mean spin.
\noindent
We introduce $\tilde{\bfv}:=\bfv-\bfuu$ and the weighted fluxes:
\begin{equation}\label{eq:hflux}
\begin{aligned}
P(t,\bfx) & :=\int_{\hat{\bbr}_3} m \tilde{\bfv}\otimes\tilde{\bfv} F \d\bfy;\qquad
\bfs{q}(t,\bfx):=\int_{\hat{\bbr}_3}\Big(m|\tilde{\bfv}|^2+I(\omega-\bar\omega)^2\Big) \tilde{\bfv} F \d\bfy; \\
\bfs{J}(t,\bfx) &:=\int_{\hat{\bbr}_3} I (\omega-\bar\omega) \tilde{\bfv} F \d\bfy;
\qquad \bfs{J}_{\rm orb}(t,\bfx)
:=\int_{\hat{\bbr}_3} m (\bfx\times\tilde{\bfv})\otimes \tilde{\bfv}  F(t,\bfx,\bfy) \d\bfy,
\end{aligned}
\end{equation}
where one may omit $\otimes$ in the definition of the orbital flux tensor $\bfs{J}_{\rm orb}$ as it is 2D and $\bfx\times\tilde{\bfv}$ gives a scalar (we keep it here for consistency). 
Observe that
\begin{equation}\label{eq:hdenergy}
\begin{aligned}
E & =E_K+E_R+E_{\rm int},
\qquad
E_K:=\frac12\rho|\bfuu|^2,\quad
E_R:=\frac12\rho_I|\bar\omega|^2,\\
E_{\rm int} & = \rho e := \frac12\int_{\hat{\bbr}_3}\Big(m|\tilde{\bfv}|^2+I(\omega-\bar\omega)^2\Big)F \d\bfy.
\end{aligned}
\end{equation}

With this setting in mind, we have the following balance laws. 

\begin{lemma}[Hydrodynamic balance laws]\label{lem:hbalance}
Let $F$ be a smooth solution of \eqref{eq:vme} with sufficient decay at infinity in $\bfy$.
Then the moments \eqref{eq:hlocmom} satisfy the local balance laws:
\begin{subequations}\label{eq:hblaw}
\begin{align}
&\partial_t\rho+\nabla_{\bfx}\cdot(\rho\bfuu)=0; \label{eq:hmass}\\
&\partial_t(\rho\bfuu)+\nabla_{\bfx}\cdot(\rho\bfuu\otimes\bfuu+P)=\int_{\hat{\bbr}_3} m \bff[F] F \d\bfy; \label{eq:hmom}\\
&\partial_t\hat\omega+\nabla_{\bfx}\cdot(\hat\omega \bfuu+\bfs{J} + \bfs{J}_{\rm orb} +\bar\omega\int_{\hat{\bbr}_3} I \tilde{\bfv} F \d\bfy)=\int_{\hat{\bbr}_3} \big( m \bfx\times\bff[F] + I f_\omega[F] \big) F \d\bfy; \label{eq:hangmom}\\
&\partial_tE+\nabla_{\bfx}\cdot\Big(E \bfuu+P \bfuu+\bar\omega \bfs{J}+\frac12\bfs{q}+\frac{1}{2}\bar\omega^2\int_{\hat{\bbr}_3} I \tilde{\bfv} F \d\bfy\Big) \nonumber\\
&\hspace{1cm}=\int_{\hat{\bbr}_3} m \bfv\cdot\bff[F] F \d\bfy + \int_{\hat{\bbr}_3} I \omega f_\omega[F] F \d\bfy.\label{eq:henergy}
\end{align}
\end{subequations}
\end{lemma}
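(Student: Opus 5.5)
The plan is to take velocity-type moments of the Vlasov--McKean equation \eqref{eq:vme} in the variable $\bfy$ only, keeping $\bfx$ fixed. For a test weight $\varphi(\bfx,\bfy)$ I multiply \eqref{eq:vme} by $\varphi$ and integrate over $\hat{\bbr}_3$. Three facts handle all terms. First, the $\theta$-transport term $\omega\partial_\theta F$ integrates to zero by periodicity, since none of the weights depends on $\theta$. Second, $\bfx$ is not an integration variable, so the transport term becomes $\nabla_\bfx\cdot\int \varphi\,\bfv F\,\d\bfy$ whenever $\varphi$ does not depend on $\bfx$. Third, the force terms become, after integrating by parts in $\bfv$ and $\omega$ with the boundary terms removed by the decay assumption, $\int(\nabla_\bfv\varphi\cdot\bff[F]+\partial_\omega\varphi\, f_\omega[F])F\,\d\bfy$. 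The nonlocal force fields depend only on $\bfz$ and on integrals over $\bfz^*$, so they are simply carried along as functions.

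I will use the weights $\varphi=m$, $m\bfv$, $m\bfx\times\bfv+I\omega$, and $\tfrac12(m|\bfv|^2+I\omega^2)$, in that order. Since $m$ and $I$ depend only on $(r,h)$, their $\bfv$- and $\omega$-derivatives vanish.
\begin{itemize}
\item With $\varphi=m$, both force terms drop out and \eqref{eq:hmass} follows.
\item With $\varphi=m\bfv$, the force side is $\int m\bff[F]F\,\d\bfy$. The flux $\int m\bfv\otimes\bfv F\,\d\bfy$ is split by writing $\bfv=\bfuu+\tilde\bfv$. The cross terms vanish because $\int m\tilde\bfv F\,\d\bfy=\rho\bfuu-\rho\bfuu=0$, which leaves $\rho\bfuu\otimes\bfuu+P$ and gives \eqref{eq:hmom}.
\item Energy works the same way. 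The right-hand side is $\int(m\bfv\cdot\bff+I\omega f_\omega)F$. The flux is $\int\tfrac12(m|\bfv|^2+I\omega^2)\bfv F$, which I expand using $\bfv=\bfuu+\tilde\bfv$ and $\omega=\bar\omega+(\omega-\bar\omega)$. This uses the identities $\int m\tilde\bfv F=0$ and $\int I(\omega-\bar\omega)F=0$ (the latter by the definition of $\bar\omega$).
\end{itemize}
Collecting the energy terms gives the following.
\begin{itemize}
\item The pieces multiplied by $\bfuu$ give $E\bfuu$.
\item $\int m(\bfuu\cdot\tilde\bfv)\tilde\bfv F=P\bfuu$.
\item $\int I\bar\omega(\omega-\bar\omega)\tilde\bfv F=\bar\omega\bfs J$.
\item The cubic fluctuation gives $\tfrac12\bfs q$.
\item The leftover term $\tfrac12\bar\omega^2\int I\tilde\bfv F$ remains, because $\int I\tilde\bfv F$ need not vanish: the velocity average is mass-weighted, not inertia-weighted.
\end{itemize}
This is \eqref{eq:henergy}.

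For the angular momentum law \eqref{eq:hangmom}, the weight $m\bfx\times\bfv$ depends on $\bfx$, and this is the step that needs care. Because $\bfx$ is a parameter in the $\bfy$-integration, I write $\int m(\bfx\times\bfv)\,\bfv\cdot\nabla_\bfx F\,\d\bfy=\nabla_\bfx\cdot\int m(\bfx\times\bfv)\bfv F\,\d\bfy-\int m\,\bfv\cdot\nabla_\bfx(\bfx\times\bfv)F\,\d\bfy$. The correction term is $\int m\,\bfv\times\bfv F=0$, so no source term appears. For the force term, $\nabla_\bfv(\bfx\times\bfv)\cdot\bff=\bfx\times\bff$, and the spin part produces $If_\omega$. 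The flux $\int(m\bfx\times\bfv+I\omega)\bfv F$ is decomposed with the same splittings:
\begin{itemize}
\item the terms containing $\bfuu$ combine into $\hat\omega\bfuu$;
\item the orbital fluctuation gives $\bfs J_{\rm orb}$, after the cross term $\int m(\bfx\times\bfuu)\tilde\bfv F$ vanishes;
\item the spin part gives $\bfs J+\bar\omega\int I\tilde\bfv F$.
\end{itemize}

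The main obstacle is this bookkeeping: verifying exactly which cross terms vanish (mass-weighted versus inertia-weighted means), and not mistaking the explicit $\bfx$-dependence of the orbital weight for a spurious source. I would check each expansion term by term against the definitions in \eqref{eq:hflux}.
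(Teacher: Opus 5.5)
Your proposal is correct and follows essentially the same route as the paper. You take $\bfy$-moments of \eqref{eq:vme} with weights $m$, $m\bfv$, $m\bfx\times\bfv+I\omega$ and $\tfrac12(m|\bfv|^2+I\omega^2)$, integrate by parts in $\bfv,\omega$, note that $\bfv\cdot\nabla_{\bfx}(\bfx\times\bfv)=0$, and split $\bfv=\bfuu+\tilde{\bfv}$, $\omega=\bar\omega+(\omega-\bar\omega)$ using the mass- and inertia-weighted zero-mean identities; the only cosmetic difference is that the paper derives the spin balance (weight $I\omega$) and the orbital balance (weight $m\bfx\times\bfv$) separately and adds them, whereas you treat the combined weight in one pass.
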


\begin{proof}
We establish the balance laws in several steps as follows.

\noindent Step A: (Conservation of mass):
We multiply \eqref{eq:vme} by $m$ and integrate the resulting relation over $\bfy$. 
Using the integration by parts in $\theta$, $\bfv$ and $\omega$, one has
\[
\partial_t \rho + \nabla_{\bfx}\cdot\int_{\hat{\bbr}_3} m \bfv F \d\bfy = 0.
\]
This is exactly \eqref{eq:hmass} by the definition of $\rho\bfuu$.

\vspace{0.1cm}
\noindent Step B: (Balance of Momentum):
We multiply \eqref{eq:vme} by $m\bfv$ and integrate the resulting relation over $\bfy$ to find
\[
\partial_t(\rho\bfuu)+\nabla_{\bfx}\cdot\int_{\hat{\bbr}_3} m \bfv\otimes\bfv F \d\bfy
+\int_{\hat{\bbr}_3} m \bfv \nabla_{\bfv}\cdot(\bff[F]F) \d\bfy =0.
\]
By integration by parts in $\bfv$ (boundary term vanishes), one has
\[
\int_{\hat{\bbr}_3} m \bfv \nabla_{\bfv}\cdot(\bff[F]F) \d\bfy = -\int_{\hat{\bbr}_3} m \bff[F]F \d\bfy.
\]
Next, we decompose $\bfv=\bfuu+\tilde{\bfv}$ to get
\[
\int_{\hat{\bbr}_3} m \bfv\otimes\bfv F \d\bfy
=\int_{\hat{\bbr}_3} m(\bfuu+\tilde{\bfv})\otimes(\bfuu+\tilde{\bfv}) F \d\bfy
=\rho\bfuu\otimes\bfuu + P,
\]
since $\int_{\hat{\bbr}_3} m \tilde{\bfv} F \d\bfy=0$ by the definition of $\bfuu$.
This gives \eqref{eq:hmom}.

\noindent Step C: (Spin angular momentum):
We multiply \eqref{eq:vme} by $I\omega$ and integrate it over $\bfy$ to get
\[
\partial_t(\rho_I\bar\omega)+\nabla_{\bfx}\cdot\int_{\hat{\bbr}_3} I\omega \bfv F \d\bfy
+\int_{\hat{\bbr}_3} I\omega \partial_\omega(f_\omega[F]F) \d\bfy =0.
\]
Integration by parts in $\omega$ yields
\[
\int_{\hat{\bbr}_3} I\omega \partial_\omega(f_\omega[F]F) \d\bfy=-\int_{\hat{\bbr}_3} I f_\omega[F]F \d\bfy.
\]
Moreover, we use $\bfv=\bfuu+\tilde{\bfv}$ and $\omega=\bar\omega+(\omega-\bar\omega)$ to obtain
\[
\int_{\hat{\bbr}_3} I\omega \bfv F \d\bfy
=\int_{\hat{\bbr}_3} I(\bar\omega+(\omega-\bar\omega))(\bfuu+\tilde{\bfv}) F \d\bfy
=\rho_I\bar\omega \bfuu+\bfs{J} +\bar{\omega}\int_{\hat{\bbr}_3} I \tilde{\bfv} F \d\bfy,
\]
where we used 
\[
\int_{\hat{\bbr}_3} I(\omega-\bar\omega) F \d\bfy=0.
\]
This yields 
\begin{equation}
    \partial_t(\rho_I\bar\omega)+\nabla_{\bfx}\cdot(\rho_I\bar\omega \bfuu+\bfs{J}+\bar\omega\int_{\hat{\bbr}_3} I \tilde{\bfv} F \d\bfy)=\int_{\hat{\bbr}_3} I f_\omega[F] F \d\bfy. \label{eq:spinmom}
\end{equation}

\noindent Step D: (Orbital angular momentum):
We multiply \eqref{eq:vme} by $m\bfx\times\bfv$ and integrate the resulting relation over $\bfy$ term by term: \\
\noindent (i) Time derivative,
\[
\int_{\hat{\bbr}_3} m(\bfx\times\bfv) \partial_t F \d\bfy
= \partial_t \left(\bfx\times(\rho\bfuu)\right).
\]

\noindent (ii) For $\bfx$-transport, we use
\[
\bfv\cdot\nabla_{\bfx}F=\nabla_{\bfx}\cdot(\bfv F),\qquad \bfv\cdot\nabla_{\bfx}(\bfx\times\bfv)=0,
\]
(since $\bfv\times\bfv=0$), to get
\[
\int_{\hat{\bbr}_3} m(\bfx\times\bfv) \bfv\cdot\nabla_{\bfx}F \d\bfy
=
\nabla_{\bfx}\cdot\left(\int_{\hat{\bbr}_3} m(\bfx\times\bfv)\otimes\bfv F \d\bfy\right) 
= \nabla_{\bfx}\cdot \big( \bfx\times (\rho\bfuu) \otimes\bfuu + \bfs{J}_{\rm orb}\big),
\]
where the mixed terms vanish as before. 

\vspace{0.1cm}
\noindent (iii) For $\bfv$-divergence term,
assuming sufficient decay in $\bfv$, so boundary terms vanish, we have
\[
\int_{\hat{\bbr}_3} m(\bfx\times\bfv) \nabla_{\bfv}\cdot(\bff[F]F) \d\bfy
=
-\int_{\hat{\bbr}_3} m \big(\nabla_{\bfv}(\bfx\times\bfv)\big) \cdot \bff[F]F \d\bfy
=
-\int_{\hat{\bbr}_3} m \bfx\times\bff[F] F \d\bfy.
\]

The $\theta$-transport and $\omega$-flux terms vanish by using integration by parts. Collecting (i)--(iii), we obtain the local orbital angular momentum balance as
\begin{equation}\label{eq:orbmom}
\partial_t \left(\bfx\times(\rho\bfuu)\right)
+
\nabla_{\bfx}\cdot \big( \bfx\times (\rho\bfuu) \otimes \bfuu + \bfs{J}_{\rm orb}\big)
=
\int_{\hat{\bbr}_3} m \bfx\times\bff[F] F \d\bfy.
\end{equation}
Finally, summing up the spin and orbital angular momentum balances, \eqref{eq:spinmom} and \eqref{eq:orbmom}, we arrive at the desired balance law for total angular momentum \eqref{eq:hangmom}.

\noindent Step E: (Total translational and rotational energy):
We multiply \eqref{eq:vme} by $\frac12\big(m|\bfv|^2+I\omega^2\big)$ and integrate the resulting relation over $\bfy$.
The transport terms give
\[
\partial_tE + \nabla_{\bfx}\cdot\int_{\hat{\bbr}_3} \tfrac12\big(m|\bfv|^2+I\omega^2\big) \bfv F \d\bfy,
\]
and the $\theta$-term vanishes by integration by parts.
For the force terms, we use integration by parts in $\bfv$ and $\omega$ (the mixed terms vanish using integration by parts) to give
\begin{align*}
\int_{\hat{\bbr}_3} \tfrac12 m|\bfv|^2  \nabla_{\bfv}\cdot(\bff[F]F) \d\bfy
&=  -\int_{\hat{\bbr}_3} m \bfv\cdot\bff[F]F \d\bfy,\\
\int_{\hat{\bbr}_3} \tfrac12 I\omega^2  \partial_\omega\big(f_\omega[F]F\big) \d\bfy
&=  -\int_{\hat{\bbr}_3} I \omega f_\omega[F]F \d\bfy.
\end{align*}
Finally, we decompose the energy flux, using $\bfv=\bfuu+\tilde{\bfv}$ and $\omega=\bar\omega+(\omega-\bar\omega)$, to obtain
\begin{align*}
\int_{\hat{\bbr}_3} \tfrac12\big(m|\bfv|^2+I\omega^2\big) \bfv F \d\bfy
&=E \bfuu + P \bfuu + \bar\omega \bfs{J} + \tfrac12\bfs{q}+\frac{1}{2}\bar\omega^2\int_{\hat{\bbr}_3} I \tilde{\bfv} F \d\bfy,
\end{align*}
where $P,\bfs{J},\bfs{q}$ are defined in \eqref{eq:hflux}--\eqref{eq:hdenergy} and where we have used
\[
\int_{\hat{\bbr}_3} m \tilde{\bfv} F \d\bfy=0,\qquad\int_{\hat{\bbr}_3} I(\omega-\bar\omega) F \d\bfy=0.
\]
Finally, we collect all the estimates to get the desired result.
\end{proof}

To close \eqref{eq:hblaw}, we follow \cite{deng2025particle,figalli2018rigorous} and adopt the mono-kinetic ansatz
\begin{equation}\label{eq:monoans}
F(t,\bfx,\bfv,\theta,\omega,r,h)
=
\frac{\rho(t,\bfx)}{m(r,h)}\Phi(t,\bfx,\theta, r,h)
\delta(\bfv-\bfuu(t,\bfx))\delta(\omega-\bar\omega(t,\bfx)),
\end{equation}
where $\Phi\ge 0$ and
\begin{equation}\label{eq:mu_norm}
\int_{\mathbb{T}\times \bbr_+^2}\Phi(t,\bfx,\theta,r,h)\d r\d h \d \theta =1, \qquad\text{for all }(t,\bfx).
\end{equation}
Under \eqref{eq:monoans} and \eqref{eq:mu_norm}, the weighted definitions are consistent, that is
\[
\int mF\d\bfy=\rho,\qquad \int m\bfv F\d\bfy=\rho\bfuu,
\qquad 
\rho_I(t,\bfx)=\rho(t,\bfx)\int_{\mathbb{T}\times\bbr_+^2}\frac{I(r,h)}{m(r,h)}\Phi(t,\bfx,\theta,r,h)\d r\d h \d \theta,
\]
\[
\int_{\hat{\bbr}_3} I \omega F(t,\bfx,\bfy) \d\bfy = \rho(t,\bfx)\ \bar\omega\int_{\mathbb{T}\times\bbr_+^2}\frac{I(r,h)}{m(r,h)} \Phi(t,\bfx,\theta,r,h)\d r\d h \d \theta=\rho_I\bar\omega.
\]
With this in mind, using \eqref{eq:monoans} and following \cite{deng2025particle}, we arrive at
\[
P\equiv 0,\qquad \bfs{q}\equiv 0,\qquad \bfs{J} \equiv 0,\qquad \bfs{J}_{\rm orb} \equiv 0,\qquad \rho e \equiv 0,
\]
and the energy reduces to the bulk form
\[
E(t,\bfx)=\frac12\rho(t,\bfx)|\bfuu(t,\bfx)|^2+\frac12\rho_I(t,\bfx) \bar\omega(t,\bfx)^2.
\]
The balance laws \eqref{eq:hblaw} reduce to
\begin{subequations}
    \label{eq:MK2H}
\begin{align}
&\partial_t\rho+\nabla_{\bfx}\cdot(\rho\bfuu)=0,\label{eq:MK_mass}\\
&\partial_t(\rho\bfuu)+\nabla_{\bfx}\cdot(\rho\bfuu\otimes\bfuu)=\int_{\hat{\bbr}_3} m \bff[F]F\d\bfy,\label{eq:MK_mom}\\
&\partial_t\hat\omega+\nabla_{\bfx}\cdot(\hat\omega \bfuu )= \bfx\times \int_{\hat{\bbr}_3}  m \bff[F] F\d\bfy + \int_{\hat{\bbr}_3} I f_\omega[F]  F \d\bfy,\label{eq:MK_angmom} \\
&\partial_tE
+\nabla_{\bfx}\cdot(E\bfuu)
=\bfuu\cdot\int_{\hat{\bbr}_3} m \bff[F]F\d \bfy+\bar\omega\int_{\hat{\bbr}_3} I  f_\omega[F]F\d\bfy.
\label{eq:MK_energy}
\end{align}
\end{subequations}
Using the definition \eqref{eq:hlocmom} and the mono-kinetic ansatz \eqref{eq:monoans}, we evaluate the right-hand side integrals of \eqref{eq:MK_energy} as
\begin{align}
\int_{\hat{\bbr}_3} m \bff[F] F \d\bfy
&=\int_{\hat{\bbr}_3} \bfF_e F + m ( \bff_{c,\bfn}[F] +  \bff_{c,\bfv}[F] +  \bff_{c,\bft}[F] ) F \d\bfy := \sum_{j=1}^4 \T_{5j}, \label{eq:T5j1}\\
\int_{\hat{\bbr}_3} I f_\omega[F] F \d\bfy
&=\int_{\hat{\bbr}_3} T_e F + I  \bff_{\omega,\bft}[F] F \d\bfy:= \T_{55} + \T_{56}. \label{eq:T5j2}
\end{align}
(a) For environmental-force induced terms $\T_{51}$ and $\T_{55}$, we have
\begin{align}
    \int_{\hat{\bbr}_3} \bfF_e F\d \bfy &= \bar\alpha_o (\bfu-\bfuu)|\bfu-\bfuu| + \bar\alpha_w (\bfw-\bfuu)|\bfw-\bfuu| - \rho f_E \bfz \times \bfuu + \rho \bfuu_g,  \label{eq:T51} \\
    \int_{\hat{\bbr}_3} T_e F\d\bfy & = \bar\beta_o \left(\frac{\nabla\times\bfu}{2}-\bar\omega\right)\left|\frac{\nabla\times\bfu}{2}-\bar\omega\right| + \bar\beta_w \left(\frac{\nabla\times\bfw}{2}-\bar\omega\right)\left|\frac{\nabla\times\bfw}{2}-\bar\omega\right|, 
    \label{eq:T55}
\end{align}
where 
\[
\bar\alpha_s = \int_{\hat{\bbr}_3} \alpha_s F \d \bfy,\quad\text{and}\quad \bar\beta_s = \int_{\hat{\bbr}_3} \beta_s F \d \bfy, \qquad s = o, w.
\]

\noindent
(b) For $\T_{52}$, by definition \eqref{eq:hnote}, we have
\begin{align} \label{eq:T52}
    \int_{\hat{\bbr}_3} m\bff_{c,\bfn}[F] F\d \bfy &= \int_{\hat{\bbr}_3} \int_{\hat{\bbr}_4}  \kappa_1 (|\bfx^* - \bfx| - (r + r^*)) \bfn(\bfx, \bfx^*)  F(t, \bfz^*)\d\bfz^* F\d \bfy.
\end{align}
The term $\T_{52}$ is generally non-zero pointwise over $ \ bfx$ due to the lack of integration over $\bfx$ that induces anti-symmetry of pair forces (see Lemma \ref{lem:kmb} on how this term vanishes when integration over $\bfx$ in included). If imposing the local homogeneity assumption (a strong assumption inspired by the kinetic theory of gas particles \cite{bardos1991fluid, golse2003mean, golse2005boltzmann}; cf., Appendix A), this term reduces to zero. Other terms vanish similarly, i.e., $\T_{53} = \bfzr, \T_{54} = \bfzr$ and $\bfx \times \T_{54} + \T_{56} = 0$. 
In this case, using \eqref{eq:T51}, \eqref{eq:T55} and the above setting, the first three balance laws in \eqref{eq:hblaw} or \eqref{eq:MK2H} further reduce to

\begin{subequations}
    \label{eq:LI2H}
\begin{align}
&\partial_t\rho+\nabla_{\bfx}\cdot(\rho\bfuu)=0,\label{eq:LI_mass}\\
&\partial_t(\rho\bfuu)+\nabla_{\bfx}\cdot(\rho\bfuu\otimes\bfuu)= \bar\alpha_o (\bfu-\bfuu)|\bfu-\bfuu| +\bar\alpha_w (\bfw-\bfuu)|\bfw-\bfuu| \nonumber\\
&\hspace{4cm}- \rho f_E \bfz \times \bfuu + \rho \bfuu_g := \bfF_m,\label{eq:LI_mom}\\
&\partial_t\hat\omega+\nabla_{\bfx}\cdot(\hat\omega \bfuu )=  \bfx\times \bfF_m + \bar\beta_o \left(\frac{\nabla\times\bfu}{2}-\bar\omega\right)\left|\frac{\nabla\times\bfu}{2}-\bar\omega\right|\label{eq:LI_angmom} \\
& \hspace{4.5cm} + \bar\beta_w \left(\frac{\nabla\times\bfw}{2}-\bar\omega\right)\left|\frac{\nabla\times\bfw}{2}-\bar\omega\right|. \nonumber
\end{align}
\end{subequations}

\begin{remark}[Contact operator]
    The main difference between \eqref{eq:MK2H} and \eqref{eq:LI2H} is that the right-hand side term in \eqref{eq:MK2H} includes the contact operator acting on the phase variables that are to be integrated in the distribution sense. 
    The right-hand side terms in \eqref{eq:LI_mom} and \eqref{eq:LI_angmom} 
    are environmental-force induced forcing terms. 
    The local homogeneity assumption resembles the assumption on the contact operator so that the physical laws of mass, momentum and energy conservation
    during collisions are satisfied (cf., \cite[Eq. (4)]{bardos1991fluid} and \cite{dufty2001kinetic,cercignani2013mathematical,saint2009hydrodynamic,golse2003mean,golse2005boltzmann,golse2016dynamics} among many for gas particles). 
    This assumption on the contact operator allows the simplification of the right-hand side terms in \eqref{eq:LI2H}.
    This represents a simplification of sea ice floe dynamics. In more realistic sea ice rheology, the contact-forcing terms in the density functionals generate the contact stress tensor within the ice cover. We refer to \cite{shen1987role,feltham2008sea,herman2022granular} for discussions of sea ice rheology and to Appendix A for further details.
\end{remark}

\subsection{Total energy for the closed system}\label{sec:hmodel_en}
We now analyze the energy dissipation properties of the closed system \eqref{eq:MK2H}.
We consider the more general case of \eqref{eq:MK2H} instead of \eqref{eq:LI2H} since the total energies are defined as integrals over $\bfx$, 
which implies vanishing right-hand side terms in \eqref{eq:MK_mom} and \eqref{eq:MK_angmom}. 
We first define the energies as
\begin{equation}  \label{eq:defhe}
\begin{cases} 
\displaystyle \E  := \E_K + \E_{\omega} + \E_P , \\
\displaystyle \E_K := \frac12\int_{\bbr^2}\rho |\bfuu|^2\d \bfx, \qquad \E_{\omega} := \frac12\int_{\bbr^2}\rho_I \bar\omega^2\d \bfx, \\
\displaystyle \E_P  := \frac12\int_{\bbr^4}
\Bigg[\int_{\hat\bbr_3^2}
F F^* 
\Big(\int_0^{\delta(\bfx,\bfx^*,r,r^*)}\kappa_1(\eta)\eta\d \eta\Big)
\d\bfy \d \bfy^*\Bigg]\d \bfx\d \bfx^*,
\end{cases}
\end{equation}
where $F^*$ denotes the density function in dual variables.
Following \cite{deng2025particle}, we have the following results.
\begin{lemma}[Kinetic and rotational energy]\label{lem:hend}
Let $(\rho,\bfuu,\hat\omega, E)$ be a global smooth solution to system \eqref{eq:MK2H}.
The following holds:
\begin{align}
    \frac{\d \E}{\d t} & = \int_{\bbr^2} \bar\alpha_o \bfuu \cdot (\bfu-\bfuu)|\bfu-\bfuu| \d\bfx + \int_{\bbr^2} \bar\alpha_w\bfuu \cdot (\bfw-\bfuu)|\bfw-\bfuu| \d\bfx 
    + \int_{\bbr^2} \bfuu \cdot \rho \bfuu_g \d\bfx \nonumber\\
    & \quad + \int_{\bbr^2} \bar\beta_o \bar\omega \left(\frac{\nabla\times\bfu}{2}-\bar\omega\right)\left|\frac{\nabla\times\bfu}{2}-\bar\omega\right| \d\bfx
    + \int_{\bbr^2} \bar\beta_w \bar\omega \left(\frac{\nabla\times\bfw}{2}-\bar\omega\right)\left|\frac{\nabla\times\bfw}{2}-\bar\omega\right| \d\bfx \nonumber \\
    & \quad + \frac{1}{2}  \int_{\hat{\bbr}_2}   \kappa_2 [ (\bfuu  - \bfuu^*) \cdot \bfn(\bfx, \bfx^*)]^2  F(\bfz) F(\bfz^*)\d\bfz^*\d\bfz \label{eq:hkre} \\
    & \quad - \frac12 \int_{\hat{\bbr}_2}  \zeta\kappa_3t_c[(\bfuu^* - \bfuu)\cdot \bft - r\bar\omega -r^*\bar\omega^*]^2 F(\bfz)  F(\bfz^*)\d\bfz^*\d\bfz. \nonumber
\end{align}
\end{lemma}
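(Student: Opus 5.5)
The plan is to differentiate each piece of $\E$ in \eqref{eq:defhe} separately and then recombine the contact contributions. The recombination follows exactly the anti-symmetry arguments of Lemma \ref{lem:kmb}(iv), specialized to the mono-kinetic ansatz \eqref{eq:monoans}.

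\textbf{Translational energy.} From \eqref{eq:MK_mass} and \eqref{eq:MK_mom} I would derive the standard identity
\[
\partial_t\big(\tfrac12\rho|\bfuu|^2\big)+\nabla_{\bfx}\cdot\big(\tfrac12\rho|\bfuu|^2\bfuu\big)=\bfuu\cdot\int_{\hat{\bbr}_3} m\bff[F]F\,\d\bfy .
\]
It follows by writing $\partial_t(\rho\bfuu)=\rho\partial_t\bfuu+\bfuu\partial_t\rho$, which gives $\rho(\partial_t\bfuu+\bfuu\cdot\nabla\bfuu)=\int m\bff[F]F\,\d\bfy$, and then dotting with $\bfuu$. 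Integrating over $\bbr^2$ removes the flux, assuming decay at infinity.

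\textbf{Rotational energy.} For $\E_\omega$ I would not use \eqref{eq:MK_angmom}, since it mixes in orbital terms. Instead I would use the spin balance \eqref{eq:spinmom}, which under the ansatz has $\bfs J=0$ and $\int I\tilde\bfv F\,\d\bfy=0$. I also need a transport equation for $\rho_I$. Multiplying \eqref{eq:vme} by $I$ and integrating gives $\partial_t\rho_I+\nabla\cdot(\rho_I\bfuu)=0$, because $I$ depends only on $(r,h)$ and $\bfv=\bfuu$ on the support. The same manipulation as above then gives
\[
\partial_t\big(\tfrac12\rho_I\bar\omega^2\big)+\nabla\cdot\big(\tfrac12\rho_I\bar\omega^2\bfuu\big)=\bar\omega\int_{\hat{\bbr}_3} I f_\omega[F]F\,\d\bfy .
\]
This is consistent with \eqref{eq:MK_energy}.

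\textbf{Splitting the forcing.} Next I would split $\int\bfuu\cdot m\bff[F]F$ and $\int\bar\omega I f_\omega[F]F$ into environmental and contact parts via \eqref{eq:T5j1}--\eqref{eq:T5j2}. The environmental part follows from \eqref{eq:T51}, \eqref{eq:T55}. The Coriolis term drops because $\bfuu\cdot(\hat{\bfs z}\times\bfuu)=0$, and this yields the first five integrals of \eqref{eq:hkre}.

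\textbf{Contact terms.} These are handled on $\hat{\bbr}_2$ with $\bfv,\bfv^*,\omega,\omega^*$ replaced by $\bfuu(\bfx),\bfuu(\bfx^*),\bar\omega(\bfx),\bar\omega(\bfx^*)$, because of the delta functions in \eqref{eq:monoans}. I would treat them as in Cases D.1--D.3:
\begin{itemize}
\item The $\kappa_1$ term, symmetrized under $\bfz\leftrightarrow\bfz^*$, equals $-\frac{\d\E_P}{\d t}$. This is shown exactly as in Case D.1, by differentiating $\E_P$ and using the product transport equation together with $\nabla_{\bfx}p=-\kappa_1\delta\bfn$; only the $\bfx$-transport contributes.
\item The $\kappa_2$ term gives the quadratic dissipation $\tfrac12\kappa_2[(\bfuu-\bfuu^*)\cdot\bfn]^2$ as in \eqref{eq:T43}.
\item The tangential and torque terms combine through $\bfn\times\bft=1$ into $-\tfrac12\zeta\kappa_3t_c\sigma_t^2$ as in \eqref{eq:T44T46}, with $\sigma_t=(\bfuu^*-\bfuu)\cdot\bft-r\bar\omega-r^*\bar\omega^*$.
\end{itemize}
Adding all pieces gives \eqref{eq:hkre}.

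\textbf{Main obstacle.} The hardest step is the $\E_P$ identity under the mono-kinetic ansatz. Here $F$ is a measure in $(\bfv,\omega)$, so the time derivative of $FF^*$ must be handled weakly. This means testing the product equation against $p(\bfx,\bfx^*,r,r^*,h,h^*)$, which is independent of $(\bfv,\omega)$, so the $\bfv$- and $\omega$-divergence terms integrate to zero. Equivalently, one can use only the marginal $\rho\Phi/m$ and its continuity equation in $\bfx$. A secondary subtlety is that $\zeta,t_c$ are evaluated at the mean velocities; I would simply take these definitions pointwise and note that the signs $\kappa_2\le0$ and $\zeta\kappa_3t_c\ge0$ are unaffected.
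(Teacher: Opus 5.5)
Your proposal is correct and follows the paper's proof. The environmental terms come from \eqref{eq:T51} and \eqref{eq:T55}, with the Coriolis term dropping out, and the contact terms reduce under the mono-kinetic ansatz to the kinetic computations \eqref{eq:T42}, \eqref{eq:T43} and \eqref{eq:T44T46}. The only difference is that you rebuild the local energy identity from mass, momentum, spin balance and an $I$-weighted continuity equation, whereas the paper simply integrates \eqref{eq:MK_energy}, which is already part of \eqref{eq:MK2H}; your detour is consistent but unnecessary.
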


\begin{proof}
We integrate \eqref{eq:MK_energy} over $\bbr^2$ for $\bfx$ term-by-term. 

\vspace{0.1cm}
\noindent (i) Time derivative. We apply the definitions \eqref{eq:hlocmom} and \eqref{eq:defhe} and the mono-kinetic ansatz \eqref{eq:monoans} to compute
\[
\int_{\bbr^2} \partial_t E \d\bfx
= \partial_t \int_{\bbr^2} \int_{\hat{\bbr}_3} 
\frac12\Big(m|\bfv|^2+I\omega^2\Big) F(t,\bfx,\bfy) \d\bfy \d\bfx = \frac{\d}{\d t}(\E_K + \E_\omega).
\]

\noindent (ii) $\bfx$-transport.
Using the divergence theorem with vanishing boundaries gives 
\[
\int_{\bbr^2} \nabla_{\bfx}\cdot (E\bfuu) \d\bfx = 0.
\]

\noindent (iii) Right-hand side term. The calculation follows the calculation of the integrals in \eqref{eq:T5j1} and \eqref{eq:T5j2} with further integration over $\bbr^2$ for $\bfx$.
First, we apply the \eqref{eq:T51} and \eqref{eq:T55} to get the first four terms on the right-hand side of \eqref{eq:hkre}. Therein, we use $\bfuu \cdot \bfz \times \bfuu = 0$ to see that the Coriolis force conserves energy.
Secondly, for the normal contact force term (the term corresponding to $\T_{52}$ in \eqref{eq:T52}), using the mono-kinetic ansatz \eqref{eq:monoans}, we integrate it over $\bbr^2$ for $\bfx$ to arrive at
\begin{align*} 
    \int_{\bbr^2} \bfuu \cdot \int_{\hat{\bbr}_3} & m\bff_{c,\bfn}[F] F\d \bfy \d\bfx  \\
    & = \int_{\bbr^2} \bfuu \cdot \int_{\hat{\bbr}_3} \int_{\hat{\bbr}_4}  \kappa_1 (|\bfx^* - \bfx| - (r + r^*)) \bfn(\bfx, \bfx^*)  F(t, \bfz^*)\d\bfz^* F\d \bfy\d \bfx \\
    & = \int_{\hat{\bbr}_4} \int_{\hat{\bbr}_4}  \kappa_1 \bfuu \cdot  (|\bfx^* - \bfx| - (r + r^*)) \bfn(\bfx, \bfx^*)  F(t, \bfz^*)\d\bfz^* F\d \bfz \\
    & = \int_{\hat\bbr_2 }  \kappa_1 (\bfuu - \bfv + \bfv) \cdot  (|\bfx^* - \bfx| - (r + r^*)) \bfn(\bfx, \bfx^*)  F(t, \bfz^*)\d\bfz^* F\d \bfz \\
    & = \int_{\hat\bbr_2 }  \kappa_1 \bfv \cdot  (|\bfx^* - \bfx| - (r + r^*)) \bfn(\bfx, \bfx^*)  F(t, \bfz^*)\d\bfz^* F\d \bfz \\
    &= -\frac{\d \E_P}{\d t},
\end{align*}
where the last equality follows \eqref{eq:T42}.
Thirdly, for the damping term, we evaluate the integral in a similar fashion to give
\begin{align*} 
    \int_{\bbr^2} \bfuu \cdot \int_{\hat{\bbr}_3} & m\bff_{c,\bfv}[F] F\d \bfy \d\bfx  \\
    & = \int_{\bbr^2} \bfuu \cdot \int_{\hat{\bbr}_3} \int_{\hat{\bbr}_4}  \kappa_2 [ (\bfv  - \bfv^*) \cdot \bfn(\bfx, \bfx^*)] \bfn(\bfx, \bfx^*)  F(t, \bfz^*)\d\bfz^* F\d \bfy\d \bfx \\
    & = \int_{\hat\bbr^2 }  \kappa_2 \bfuu \cdot [ (\bfuu  - \bfuu^*) \cdot \bfn(\bfx, \bfx^*)] \bfn(\bfx, \bfx^*) F(t, \bfz^*)\d\bfz^* F\d \bfz \\
    & = \frac{1}{2}  \int_{\hat{\bbr}_2}   \kappa_2 [ (\bfuu  - \bfuu^*) \cdot \bfn(\bfx, \bfx^*)]^2  F(\bfz) F(\bfz^*)\d\bfz^*\d\bfz, 
\end{align*}
where the last equality follows from \eqref{eq:T43}.
We combine the final two integrals to arrive at a similar estimate as \eqref{eq:T44T46}:
\begin{align*} 
    \int_{\bbr^2} \bfuu \cdot \int_{\hat{\bbr}_3} & m\bff_{c,\bft}[F] F\d \bfy \d\bfx + \int_{\bbr^2} \bar\omega \int_{\hat{\bbr}_3}  I f_{\omega,\bft}[F] F\d \bfy \d\bfx \\
    & = \int_{\hat{\bbr}_2} \zeta \kappa_3 [\bfuu \cdot \sigma \bft(\bfz,\bfz^*)  + r \bar\omega\bfn (\bfz,\bfz^*)\times \sigma \bft(\bfz,\bfz^*)  ] F(\bfz)  F(\bfz^*)\d\bfz^*\d\bfz \\
    & = \int_{\hat{\bbr}_2} \zeta \kappa_3 \sigma [\bfuu \cdot  \bft(\bfz,\bfz^*)  + r \bar\omega] F(\bfz)  F(\bfz^*)\d\bfz^*\d\bfz \\
    & = \int_{\hat{\bbr}_2} \zeta \kappa_3 \sigma [-\bfuu^* \cdot  \bft(\bfz,\bfz^*)  + r^* \bar\omega^*] F(\bfz)  F(\bfz^*)\d\bfz^*\d\bfz \\
    & =-\frac12 \int_{\hat{\bbr}_2} \zeta\kappa_3 \sigma [(\bfuu^* - \bfuu)\cdot \bft - r\bar\omega -r^*\bar\omega^*]  F(\bfz)  F(\bfz^*)\d\bfz^*\d\bfz\\
    & =-\frac12 \int_{\hat{\bbr}_2}  \zeta\kappa_3 t_c[(\bfuu^* - \bfuu)\cdot \bft - r\bar\omega -r^*\bar\omega^*]^2 F(\bfz)  F(\bfz^*)\d\bfz^*\d\bfz.
\end{align*}
Summing up the estimates above and moving $-\frac{\d \E_P}{\d t}$ to the right-hand side gives the desired result. 
\end{proof}

\noindent
Lastly, we consider two special cases as an analogy in the particle and kinetic descriptions in Sections \ref{sec:pmodel} and \ref{sec:kmodel}.

\begin{itemize}

\item Case A: When the environmental forces are removed, the total energy estimate  in Lemma \ref{lem:hend} implies energy dissipation:
\begin{align*}
    \frac{\d \E}{\d t} & = \frac{1}{2}  \int_{\hat{\bbr}_2}   \kappa_2 [ (\bfuu  - \bfuu^*) \cdot \bfn(\bfx, \bfx^*)]^2  F(\bfz) F(\bfz^*)\d\bfz^*\d\bfz \\
    & \quad - \frac12 \int_{\hat{\bbr}_2}  \zeta\kappa_3 t_c[(\bfuu^* - \bfuu)\cdot \bft - r\bar\omega -r^*\bar\omega^*]^2 F(\bfz)  F(\bfz^*)\d\bfz^*\d\bfz \le 0,
\end{align*}
where the last inequality is a result of the fact that $\kappa_2<0$ and $\zeta\kappa_3>0.$ 

\item Case B: When the ocean force is constant $\bfu = \bfu^\infty$, $\bfF_w=\bfF_E=\bfF_g=\bfzr,$ and $T_w=0$, one expects similar decay result as Theorem \ref{thm:v0} for the particle model, and Theorem \ref{thm:vw=0} for the kinetic model. 
In particular, 
one can follow the proof of Theorem 4.3 of \cite{deng2025particle} with Galilean transform to establish  
\[
\lim_{t\to \infty}\E_K^{\rm rel}(t)=0,
\qquad
\lim_{t\to \infty}\E_\omega(t)=0,
\]
where
\[
\E_K^{\rm rel}(t)
:=
\frac12
\int_{\bbr^2}
\rho|\bfuu-\bfu^\infty|^2\,d\bfx .
\]
\end{itemize}

\section{Numerical simulations} \label{sec:num}
The purpose of the numerical experiments is twofold.
First, we verify the energy dissipation of the particle model.
In the simulations, we track the translational kinetic energy, the
rotational kinetic energy, and the contact potential energy, and confirm
that the total energy is dissipated
due to collisions, in agreement with the analytical energy estimates derived in
Section~\ref{sec:pmodel} (see results in Subsection \ref{subsec:EX1}).
Second, we investigate the consistency between the particle and continuum
descriptions by comparing ensemble-averaged particle quantities with the
corresponding solutions of hydrodynamic system, thereby validating the particle--kinetic--hydrodynamic hierarchy developed in this work (see results in Subsection \ref{subsec:Ex2}).
For this goal,
we apply the widely-used forward Euler scheme \cite{butcher2016numerical,hairer1993solving} to discretize \eqref{eq:dem} and \eqref{eq:LI2H} in time and the finite element method \cite{deng2021softfem,hughes2003finite} to discretize \eqref{eq:LI2H} in space. These methods are widely used, and we omit the details of the implementation of the numerical methods for simplicity.

\subsection{Example 1: Floes under constant ocean forcing}\label{subsec:EX1}

We solve the particle model \eqref{eq:dem} in a two-dimensional square dimensionless domain
$
\Omega := [-\pi,\pi]^2
$
with doubly periodic boundary conditions. 
To study and verify the asymptotical behavior of the particle model established in Section \ref{sec:pmodel}, we consider the simplified scenario where the environmental force is induced solely by the ocean. Thus, we assume $\bfF_w=\bfF_E=\bfF_g=\bfzr,$ and $T_w=0$ in \eqref{eq:dem}. 
The prescribed ocean surface velocity is 
\[
\bfu(x,y)= (0.3,0)^{T},
\]
which implies $\nabla\times\bfu = \bfzr$.
We generate $n=100$ floes, for which the radii satisfy
\[
r^i\in[r_{\min},r_{\max}],\qquad r_{\min}=0.08,\quad r_{\max}=0.32,
\]
and are sampled from a power-law distribution with exponent $2$, i.e., the density is proportional to $(r^i)^{-2}$ on $[r_{\min},r_{\max}]$. 
The thicknesses are sampled independently from the distribution
\[
h^i\sim \mathrm{Unif}[0.5,2].
\]
The ice density is fixed to be $\rho_{\mathrm{ice}}=1$ so that
$m^i=\rho_{\mathrm{ice}}\pi (r^i)^2 h^i, I^i=m^i(r^i)^2/2$.
The initial positions $\bfx^i(0)$ are sampled uniformly in $\Omega$ and accepted only if the configuration is non-overlapping in the periodic metric
\[
\big|\bfx^j(0)-\bfx^i(0)\big|\ge r^i+r^j,\qquad \forall  i\neq j.
\]
The floe radii are rescaled so that the initial concentration is approximately $0.5$.
The initial floe velocities and angular velocities are prescribed as (see Figure \ref{fig:Ex1v0w0})
\[
\bfv^i(0)=0.4\varphi(\bfx^i)\begin{pmatrix}-\sin(y^i)\\ \sin(x^i)\end{pmatrix}, 
\qquad 
\omega^i(0)=\tilde{\varphi}(\bfx^i),\qquad i\in[n],
\]
where
\begin{align*}
    \tilde{\varphi}(\bfx^i) & = 0.4 \phi(\bfx^i)\left(\cos(x^i)+\cos(y^i)-\frac{\sin^2(x^i)+\sin^2(y^i)}{4.41}\right), \\
    \varphi(\bfx^i) & =\exp\left(\frac{\cos(x^i)+\cos(y^i)-2}{4.41}\right).
\end{align*}
\begin{figure}[ht]
    \centering \includegraphics[width=13cm]{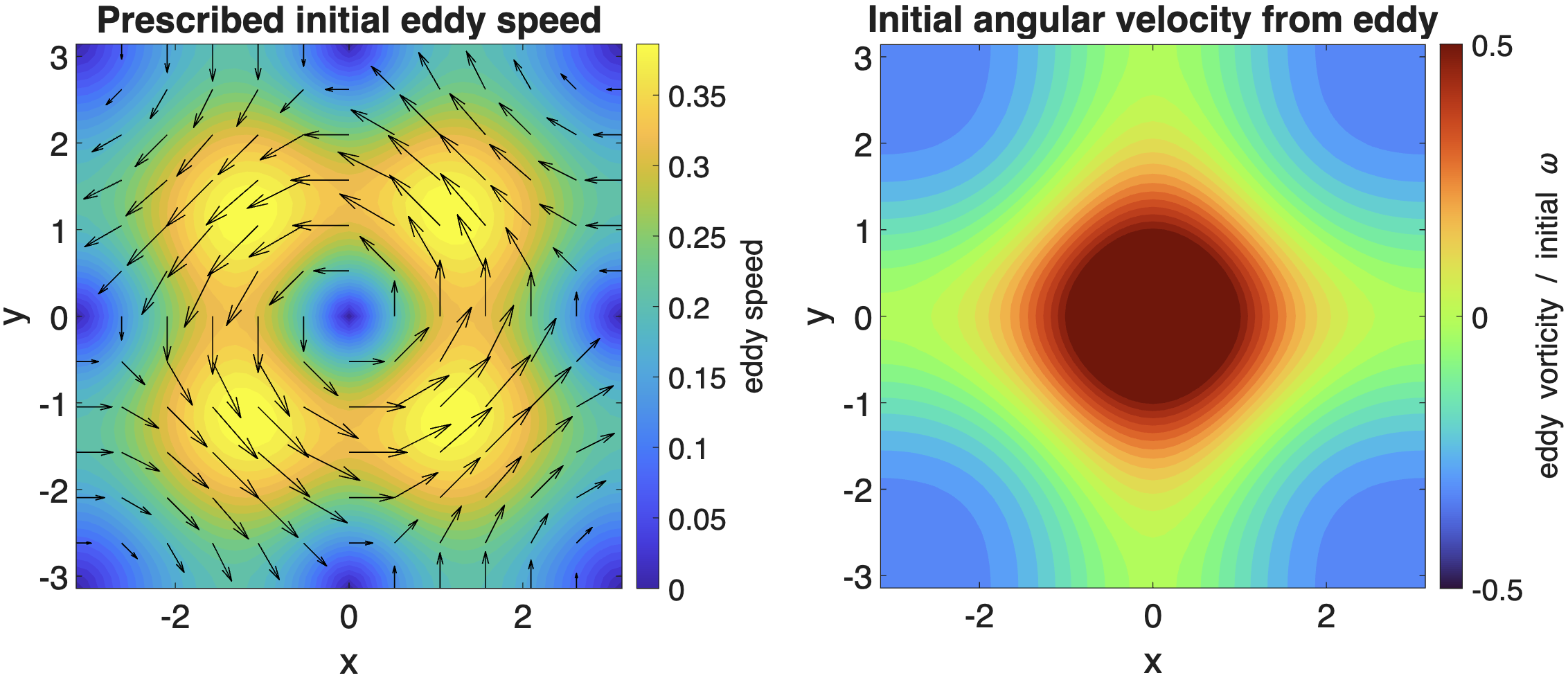} 
    \caption{Example 1 velocity and angular velocity field.}
    \label{fig:Ex1v0w0}
\end{figure}

We use the following values in a dimensionless setting (thus no units specified) for physical parameters for the model \eqref{eq:dem} described in Section \ref{sec:dem}, i.e.,
\[
\rho_{\mathrm{ice}}=1,\qquad e_r=0.15,\qquad \mu^{ij}\equiv 0.2,\qquad E=10^4,\qquad \nu=0.7,
\]
and compute the effective moduli $E_e$ and $G_e$ via \eqref{eq:EG}. 
The damping parameter $\eta$ is computed from \eqref{eq:beta}. 
For the ocean-induced forcing coefficients in \eqref{eq:demparm}, we set drag coefficients $\alpha_o^i=\beta_o^i=1$ to be $\mathcal{O}(1)$ constants (dimensionless scaling) so that the quadratic drag produces relaxation to the ocean velocity on the time interval $[0,10]$.
\begin{figure}[ht]
    \centering \includegraphics[width=13cm]{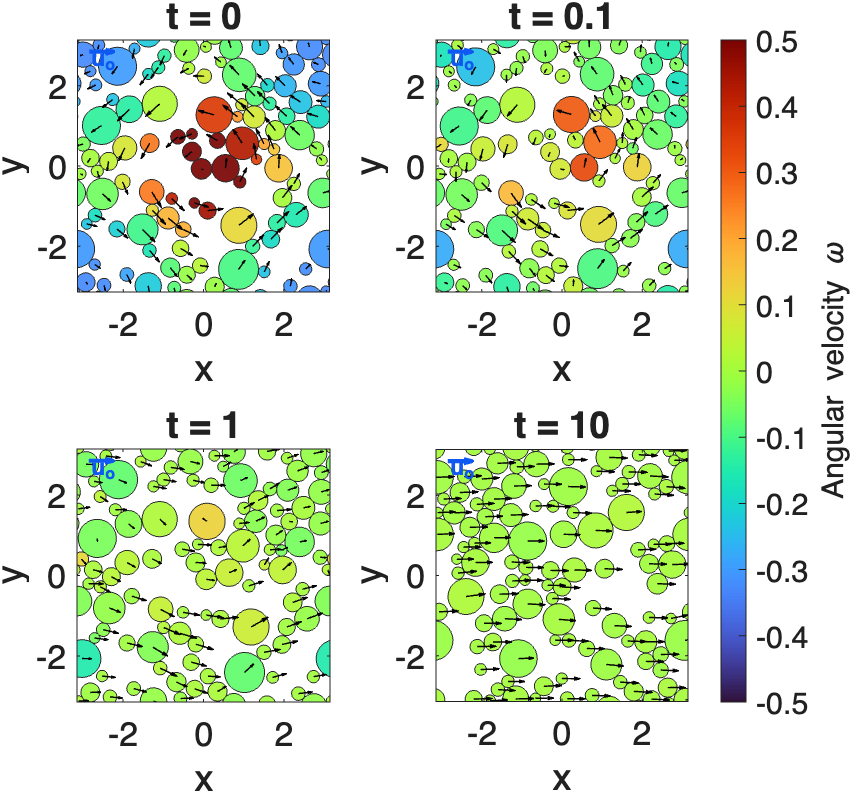} 
    \caption{Example 1 floe trajectories at $t=0, 0.1, 1, 10$. Arrows represent floe velocities and colors represent floe angular velocity. Ocean velocity $\bfu=(0.3, 0)^T$.}
    \label{fig:Ex1T}
\end{figure}
We use the forward Euler method for \eqref{eq:dem} with final time $t=10$ and time step size $\Delta t=10^{-3}$, i.e.\ $N_t=10^4$ steps.
During the simulation, we compute the following global observables:
\begin{itemize}
\item Total momentum
\[
M_{1,\bfv}(t):=\sum_{i=1}^n m^i \bfv^i(t).
\]
\item Total angular momentum (orbital $+$ spin)
\[
L_z(t):=\sum_{i=1}^n \big(\bfx^i(t)\times m^i\bfv^i(t)\big)\cdot \hat{\bfs z}
+
\sum_{i=1}^n I^i\omega^i(t).
\]
\item Translational and rotational kinetic energies (KE)
\[
KE_{\mathrm{t}}(t):=\frac12\sum_{i=1}^n m^i|\bfv^i(t)|^2,
\qquad 
KE_{\mathrm{r}}(t):=\frac12\sum_{i=1}^n I^i|\omega^i(t)|^2.
\]
\item Total normal elastic strain energy (stored in the Hertz-type normal law)
\[
U_n(t):=\sum_{1\le i<j\le n}\frac12 \kappa^{ij}_1(t) \big(\delta^{ij}(t)\big)^2 \chi^{ij}(t),
\]
where $\delta^{ij}$, $\kappa_1^{ij}$ and $\chi^{ij}$ are defined in \eqref{eq:demparm} and \eqref{eq:chi}, and the pairwise distance $d^{ij}$ entering $\delta^{ij}$ is computed using the periodic minimum-image convention.
This is an approximation of the equation~\eqref{eq:nse} by handling $\kappa_1^{ij}$ as a constant during the collision time.
\end{itemize}
We additionally track the mean velocity $\langle \bfv\rangle(t):=\frac1n\sum_{i=1}^n \bfv^i(t)$ and the mean mismatch $\frac1n\sum_{i=1}^n|\bfv^i(t)-\bfu^i|$ to quantify convergence to the ocean drift. Similarly, we use $\langle \omega\rangle(t)$ to denote the mean angular velocity.

\begin{figure}[ht]
    \centering \includegraphics[width=14cm]{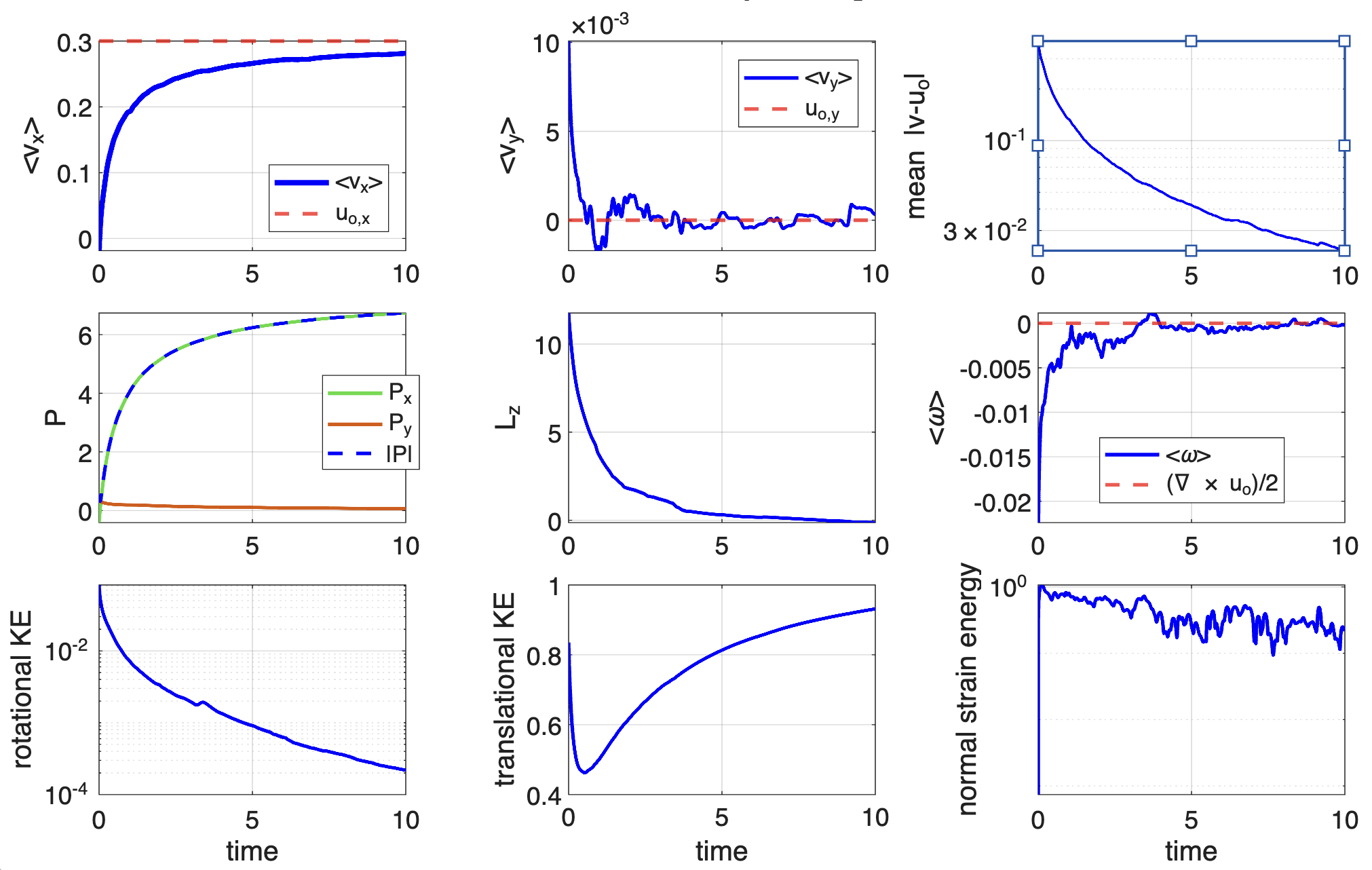} 
    \caption{Example 1 floe moments. Time evolution of mean velocities, velocity differences, and angular velocities, total momentum, total angular momentum, and kinetic and normal contact strain energies. }
    \label{fig:Ex1Ms}
\end{figure}

In Figure \ref{fig:Ex1T}, we can see the floe trajectories at $t=0.0, 0.1, 1, 10.$ In each subplot, the black arrows represent floe velocities $\bfv^i$ and the floe color represents $\omega^i$ (used as a vorticity) with a fixed colorbar range $[-0.5,0.5]$. 
In Figure \ref{fig:Ex1Ms}, we illustrate the trajectories of the floe velocities, angular velocities, and energies. 
The numerical results are consistent with the long-term behavior predicted by the theory for the constant-ocean case in Theorem \ref{thm:v0}. In particular, we observe the following:
\begin{itemize}
\item \textit{Convergence of translational velocities to the ocean velocity.}
We observe that the mean mismatch
$\frac1n\sum_{i=1}^n|\bfv^i(t)-\bfu^i|$ decays in time, and the mean (over all floes) velocity components $\langle v_x\rangle(t)$ and $\langle v_y\rangle(t)$ approach $\bfuu_{o,x}=0.3$ and $\bfuu_{o,y}=0$. 
This is also reflected in the translational kinetic energy $KE_{\mathrm{t}}(t)$, which approaches the kinetic energy of the drift state in which all floes move with $\bfu$, i.e.,
\[
KE_{\mathrm{t}}(t)\ \longrightarrow\ \frac12\sum_{i=1}^n m^i|\bfu|^2
\qquad (t\to\infty),
\]
up to small fluctuations due to intermittent collisions.   In other words, $KE_{\mathrm{t}}(t)$ indicates that $\bfv^i(t)\to\bfu^i$ in accordance with Theorem \ref{thm:v0}.
We also observe that the collisions are less frequent, as all floe velocities align with the ocean velocity.
\item \textit{Decay of angular velocity to zero.}
Since $\nabla\times\bfu^i\equiv 0$, the quadratic rotational drag in \eqref{dem_w},
\[
\beta^i\Big(\nabla\times \bfu^i/2-\omega^i\Big)\Big|\nabla\times \bfu^i/2-\omega^i\Big|
= - \beta^i \omega^i|\omega^i|,
\]
drives $\omega^i(t)\to 0$. Numerically, the mean angular velocity $\langle\omega\rangle(t)$ decays toward zero, and the rotational kinetic energy $KE_{\mathrm{r}}(t)$ decays toward zero as well. This matches the theoretical prediction that, for a constant irrotational ocean velocity, the long-time equilibrium satisfies $\omega^i\equiv 0$.

\item \textit{Momentum balance with ocean forcing.}
The contact forces lead to zero net momentum, i.e.,
\[
\sum_{i=1}^n\frac{1}{n}\sum_{j=1}^n\big(\bff_{\bfn}^{ij}+\bff_{\bft}^{ij}\big)=\bf0,
\]
up to numerical error. Therefore, the evolution of $M_{1,\bfv}(t)$ is determined solely by the ocean drag (see Lemma \ref{lem:sumparticleMV})
\[
\frac{d}{dt}M_{1,\bfv}(t)=\sum_{i=1}^n \alpha^i\big(\bfu^i-\bfv^i\big)\big|\bfu^i-\bfv^i\big|.
\]
Thus $M_{1,\bfv}(t)$ is not conserved in general; instead it drives toward the ocean-drift momentum
\[
P := M_{1,\bfv}^\infty = \sum_{i=1}^n m^i\bfu^i,
\]
which is consistent with the observed convergence $\bfv^i\to\bfu^i$. In particular, after a certain time (roughly when $t>5$), the total momentum becomes approximately constant because $\bfu^i-\bfv^i\approx \bf0$ and the drag contribution becomes negligible.

\item \textit{Normal strain energy is collision-localized.}
The total normal strain energy $U_n(t)$ is nonzero only during contact events (when $\chi^{ij}=1$ and $\delta^{ij}<0$). Numerically, it appears as intermittent bursts (see the last plot in Figure \ref{fig:Ex1Ms}) corresponding to collisions, and decays as the system becomes dilute and relative motion decreases. This decay depends on floe concentration and drag coefficient. This behavior is consistent with the short-range Hertz-type contact interaction.
\end{itemize}

\subsection{Example 2: Consistency test between the particle and hydrodynamic models}\label{subsec:Ex2}

This experiment is designed to test the consistency between the particle model
\eqref{eq:dem} and the hydrodynamic (continuum) model \eqref{eq:LI2H},
thereby validating the particle--kinetic--hydrodynamic hierarchy developed in
this work. Unless otherwise stated, all parameters and numerical conventions
are the same as in Example~1 in Subsection \ref{subsec:EX1}.
We set the ocean velocity to be spatially varying:
\[
\bfu(x,y) = (-y s, x s)^T,\qquad
s=\frac{x^2+y^2-4}{32}
\exp\left(-\frac{(x^2+y^2)(x^2+y^2-4)}{8}\right).
\]
This choice yields a smooth rotational flow with nontrivial spatial structure
and a nonzero vorticity field $\nabla\times\bfu$ (see Figure \ref{fig:ex2f}). In addition, we prescribe a
constant wind velocity
\[
\bfw(x,y)\equiv (10,0)^T.
\]
\begin{figure}[ht]
    \centering \includegraphics[width=14cm]{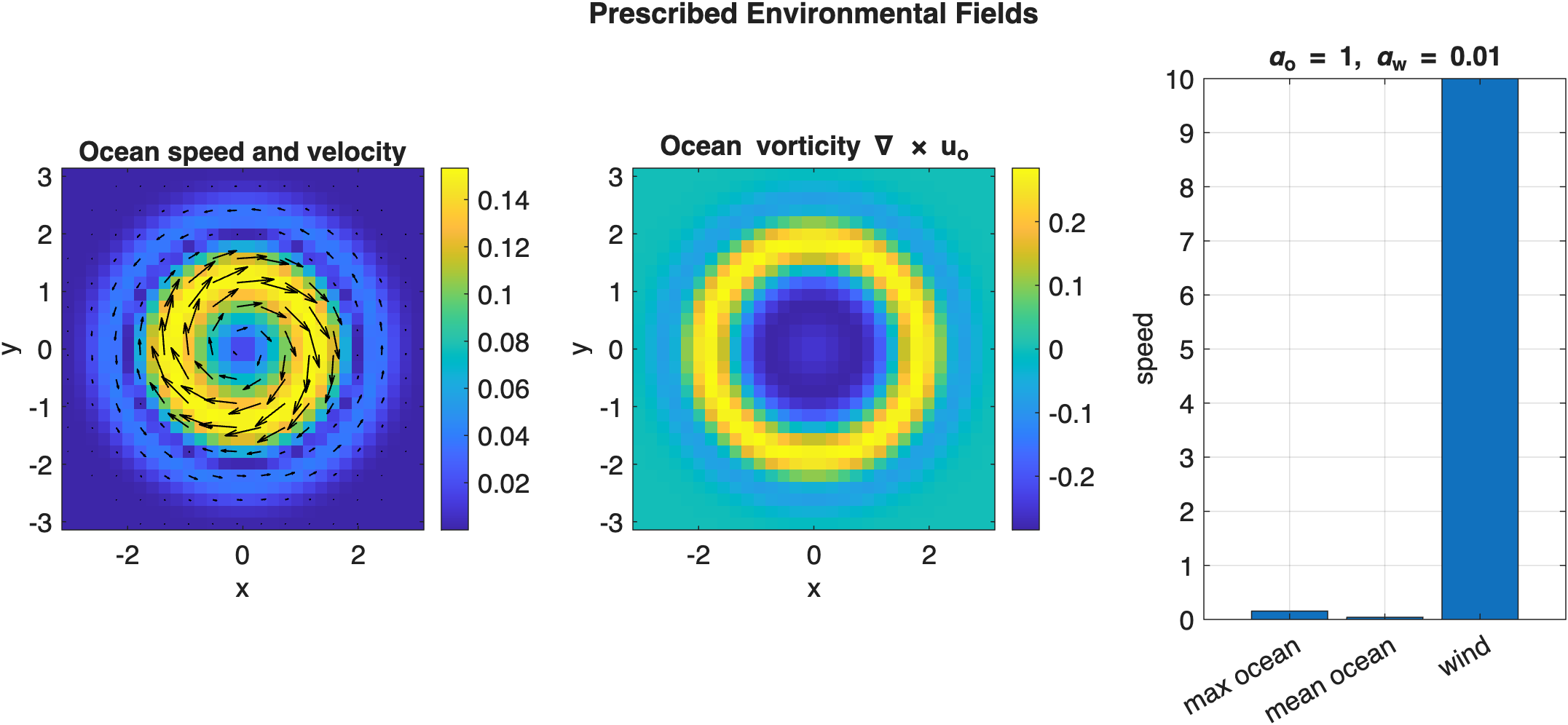} 
    \caption{Oceanic and atmospheric forcing for example 2. }
    \label{fig:ex2f}
\end{figure}
The Coriolis parameter is computed at latitude $\phi=-60^\circ$, i.e.,
$f_E=2\Omega\sin(-60^\circ)$.
The translational and rotational drag coefficients are fixed as
\[
\alpha_o=\beta_o=1,\qquad \alpha_w=\beta_w=0.01.
\]
With the above wind field, the characteristic wind-drag scale is
$\alpha_w|\bfw|^2=1$, while the maximum ocean-drag scale in this experiment is
approximately $\alpha_o\max_{\Omega}|\bfu|^2=2.35\times 10^{-2}$. Thus the
wind forcing provides the dominant large-scale translational drift (cf., \cite{untersteiner2013geophysics,spreen2011trends}), while the
spatially varying ocean velocity and vorticity generate the nonuniform
structure in the velocity and angular-velocity fields.

The kinetic and hydrodynamic models are developed based on the assumption that
the number of particles goes to infinity. For the numerical consistency test,
we initialize $n=14400$ floes. The domain $\Omega=[-\pi,\pi]^2$ is first
partitioned into a uniform $30\times 30$ grid. Each comparison cell
contains $4\times 4=16$ floes, arranged uniformly; equivalently, the particle
centers form a $120\times 120$ uniform lattice. All floes share the same radius and thickness, respectively, 
\[
r^i\equiv 0.02,\qquad h^i\equiv 1,
\]
so that
\[
m^i=\rho_{\mathrm{ice}}\pi (r^i)^2 h^i
=\pi(0.02)^2,\qquad I^i=m^i(r^i)^2.
\]
The initial translational and angular velocities are set to zero, such that
\[
\bfv^i(0)=\bfzr,\qquad \omega^i(0)=0,
\qquad i=1,2,\ldots,14400.
\]
The particle dynamics are evolved by the forward Euler discretization of
\eqref{eq:dem}, where the pairwise contact forces are computed using the same
Hertz-type normal law and Coulomb-capped tangential law as in
\eqref{eq:demparm}--\eqref{eq:zeta}, together with the same doubly periodic
minimum-image convention for contact detection.

\paragraph{Hydrodynamic model configuration and discretization.}
We solve the continuum system \eqref{eq:LI2H} for
$(\rho,\bfuu,\hat\omega)$ in the same domain $[-\pi, \pi]^2$ using uniform $30\times 30$ grid. 
In the numerical comparison, $\rho$ denotes the
cell-integrated floe mass, while $\bfuu$ denotes the corresponding
mass-averaged floe velocity in each cell. 
For the particle model, we collect these quantities as averages over the floes in each cell of the $30\times 30$ grid.
Thus the discrete variables satisfy the cellwise relation
\[
(\rho\bfuu)_K = \rho_K \bfuu_K,
\]
where $\rho_K$ is the total floe mass in cell $K$. The angular velocity
$\bar\omega$ is obtained from
\[
\hat\omega = \bfx\times(\rho\bfuu)+\rho_I\bar\omega,
\]
consistent with the definition in \eqref{eq:hlocmom}. The continuum equations
evolve with the same forward Euler time step size $\Delta t=10^{-3}$ with final time $t=10.$
With the above particle initialization, the continuum initial conditions are
prescribed cellwise as
\[
\rho_K(0)=16 \pi(0.02)^2,\qquad
\bfuu_K(0)=\bfzr,\qquad
\bar\omega_K(0)=0.
\]
The continuum drag coefficients are chosen to match the particle-level
acceleration coefficients, so that
\[
\bar\alpha_o=\alpha_o=1,\qquad
\bar\beta_o=\beta_o=1,\qquad
\bar\alpha_w=\alpha_w=0.01,\qquad
\bar\beta_w=\beta_w=0.01.
\]

\paragraph{Particle-to-continuum observables and comparison.}
For a quantitative comparison, we coarse-grain the particle data onto the same
$30\times 30$ grid used by the continuum solver. For each coarse cell $K$, we
compute the cell-integrated particle mass, the mass-averaged velocity, and the
inertia-averaged angular velocity:
\[
\rho_{\mathrm{p}}(K,t)
:=\sum_{i: \bfx^i(t)\in K} m^i,
\]
\[
\bfuu_{\mathrm{p}}(K,t)
:=
\frac{\displaystyle\sum_{i: \bfx^i(t)\in K}m^i\bfv^i(t)}
{\displaystyle\sum_{i: \bfx^i(t)\in K}m^i},
\qquad
\bar\omega_{\mathrm{p}}(K,t)
:=
\frac{\displaystyle\sum_{i: \bfx^i(t)\in K}I^i\omega^i(t)}
{\displaystyle\sum_{i: \bfx^i(t)\in K}I^i}.
\]

We compare these coarse-grained particle fields with the continuum fields
$(\rho,\bfuu,\bar\omega)$ at times $t=1$ and $t=10$ using side-by-side plots
and cellwise differences. The corresponding discrete $L^2$-errors are reported
in Table~\ref{tab:ex2_errors}.

\begin{table}[ht]
\centering
\begin{tabular}{c|ccc}
\hline
Time & $\|\rho_{\mathrm{p}}-\rho\|_{L^2}$
& $\|\bfuu_{\mathrm{p}}-\bfuu\|_{L^2}$
& $\|\bar\omega_{\mathrm{p}}-\bar\omega\|_{L^2}$ \\
\hline
$t=1$  & $2.80219\times 10^{-3}$ & $9.26343\times 10^{-3}$ & $6.91040\times 10^{-3}$ \\
$t=10$ & $7.89747\times 10^{-3}$ & $2.19277\times 10^{-2}$ & $1.29297\times 10^{-2}$ \\
\hline
\end{tabular}
\caption{Discrete $L^2$ errors between the coarse-grained particle fields and
the continuum fields for the updated $14400$-floe simulation.}
\label{tab:ex2_errors}
\end{table}

\begin{figure}[ht]
    \centering
    \includegraphics[width=14cm]{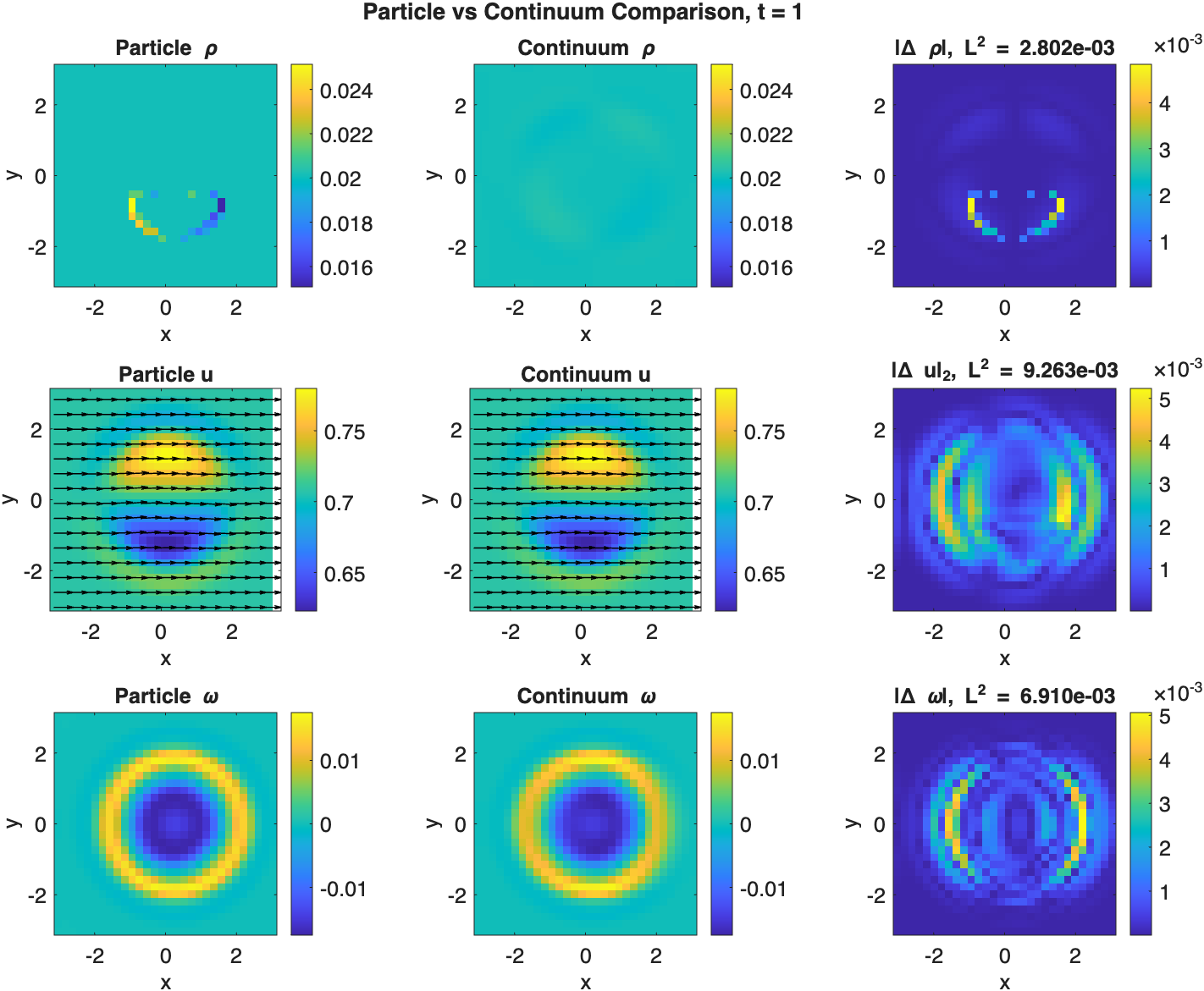}
    \caption{Model comparison at $t=1$ for the updated simulation with
    $14400$ floes, a $30\times 30$ comparison grid, wind velocity
    $\bfw=(10,0)^T$, and drag coefficients
    $\alpha_o=\beta_o=1$, $\alpha_w=\beta_w=0.01$. Here $\rho$ denotes
    cell-integrated floe mass.}
    \label{fig:ex2T1}
\end{figure}

\begin{figure}[ht]
    \centering
    \includegraphics[width=15cm]{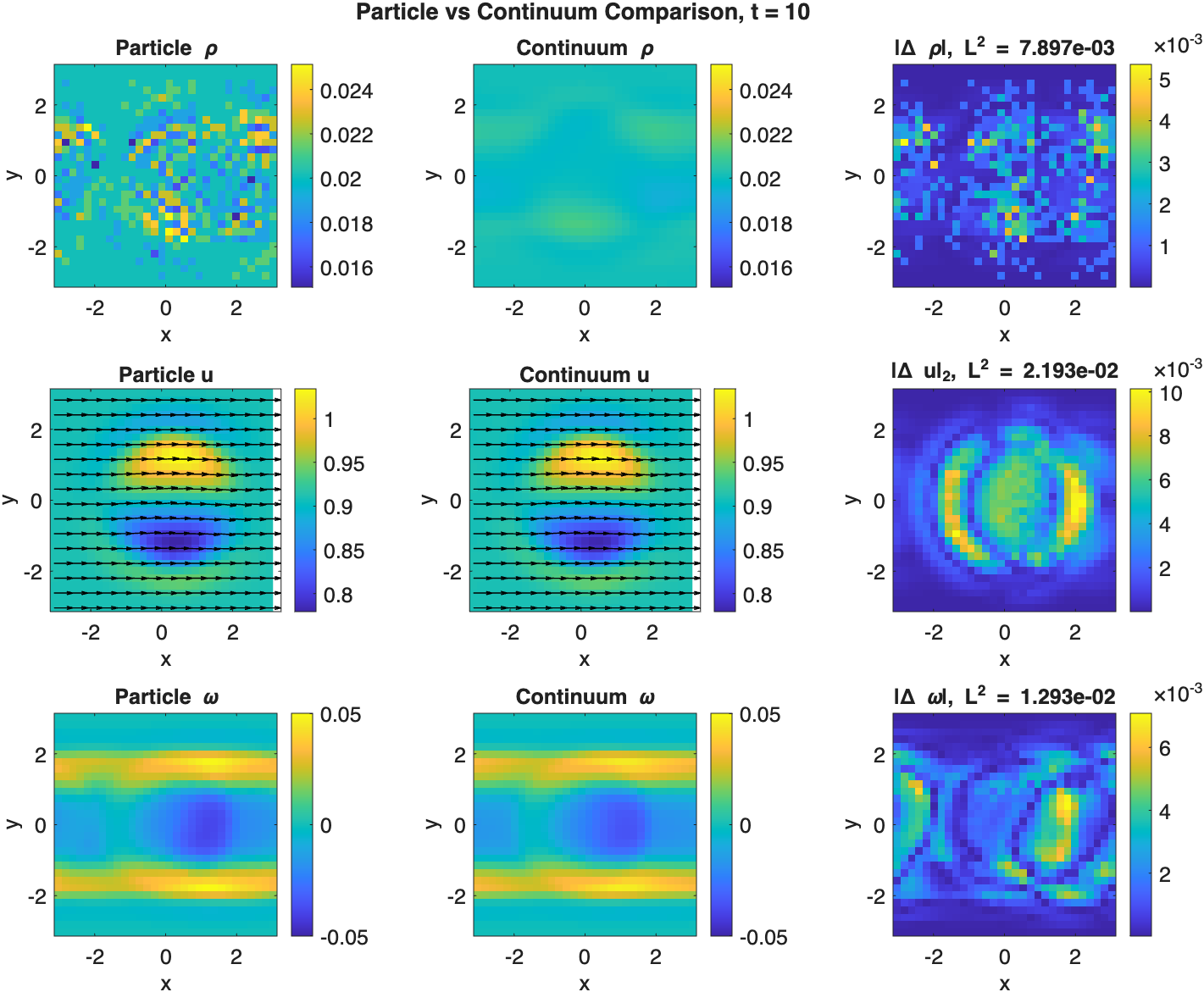}
    \caption{Model comparison at $t=10$ for the updated simulation with
    $14400$ floes, a $30\times 30$ comparison grid, wind velocity
    $\bfw=(10,0)^T$, and drag coefficients
    $\alpha_o=\beta_o=1$, $\alpha_w=\beta_w=0.01$. Here $\rho$ denotes
    cell-integrated floe mass.}
    \label{fig:ex2T10}
\end{figure}

\paragraph{Results and discussion.}
The continuum solution and the coarse-grained particle statistics match well at
the reported times, providing direct numerical evidence for the consistency
between the particle and hydrodynamic descriptions; see
Figure~\ref{fig:ex2T1}, Figure~\ref{fig:ex2T10}, and
Table~\ref{tab:ex2_errors}. In particular, we have the following observations.

\begin{itemize}
\item \textit{Cell-mass consistency.}
For comparison, $\rho$ is interpreted as the total floe mass in each
grid cell. Since each cell initially contains 16 identical floes, the initial
cell mass is
\[
\rho_K(0)=16\pi(0.02)^2 \approx 2.01062\times 10^{-2}.
\]
The particle and continuum total masses remain conserved throughout the
simulation.
The cellwise particle mass $\rho_{\mathrm{p}}(K,t)$ agrees closely with the
continuum cell mass $\rho_K(t)$. The remaining cell-to-cell fluctuations in
$\rho_{\mathrm{p}}$ are caused by finite particle binning. Even though the
initial placement is uniform, advection of discrete floes across cell
boundaries produces small sampling jumps that are absent from the continuum field. The quantified errors are reported in Table \ref{tab:ex2_errors}.

\item \textit{Velocity consistency under combined ocean and wind drag.}
Starting from $\bfv^i(0)=\bf0$ and $\bfuu_K(0)=\bf0$, both models accelerate
under the same environmental forces, including the combined ocean and wind drag mechanisms
\[
\alpha_o(\bfu^i-\bfv^i)|\bfu^i-\bfv^i|
+\alpha_w(\bfw^i-\bfv^i)|\bfw^i-\bfv^i|
\]
in the particle model and
\[
\bar\alpha_o(\bfu-\bfuu)|\bfu-\bfuu|
+\bar\alpha_w(\bfw-\bfuu)|\bfw-\bfuu|
\]
in the continuum model. Because $\bfw=(10,0)^T$ and $\alpha_w=0.01$, the wind
drag provides a strong rightward drift, while the spatially varying ocean flow
adds the rotational perturbation visible in the velocity field. The velocity
patterns from the coarse-grained particles and the continuum model agree well.
The discrete $L^2$ velocity errors are $9.26343\times 10^{-3}$ at $t=1$ and
$2.19277\times 10^{-2}$ at $t=10$.

\item \textit{Angular velocity consistency.}
The angular velocity moments also exhibit strong agreement. Since the wind
field is constant, $\nabla\times\bfw=0$, and the wind rotational drag acts as a
damping mechanism for $\omega$. The nontrivial angular structure is primarily
induced by the ocean vorticity through the relaxation toward
$\frac12\nabla\times\bfu$. The particle angular velocities and the continuum
field $\bar\omega$ capture the same sign structure and magnitude distribution.
The corresponding discrete $L^2$ errors are $6.91040\times 10^{-3}$ at $t=1$
and $1.29297\times 10^{-2}$ at $t=10$.

\item \textit{Overall consistency of the hierarchy.}
The agreement of $(\rho,\bfuu,\bar\omega)$ between the
continuum solver and the coarse-grained particle statistics supports the
consistency of the particle-to-continuum closure in the monokinetic regime.
The comparison validates the environmental forcing and moment closure components of the hierarchy. As the
number of floes per continuum cell increases and the time step and grid spacing
are refined, one expects the coarse-grained particle statistics to become
smoother and to agree more closely with the continuum solution.
\end{itemize}

\begin{figure}[ht]\centering
\includegraphics[width=0.98\linewidth]{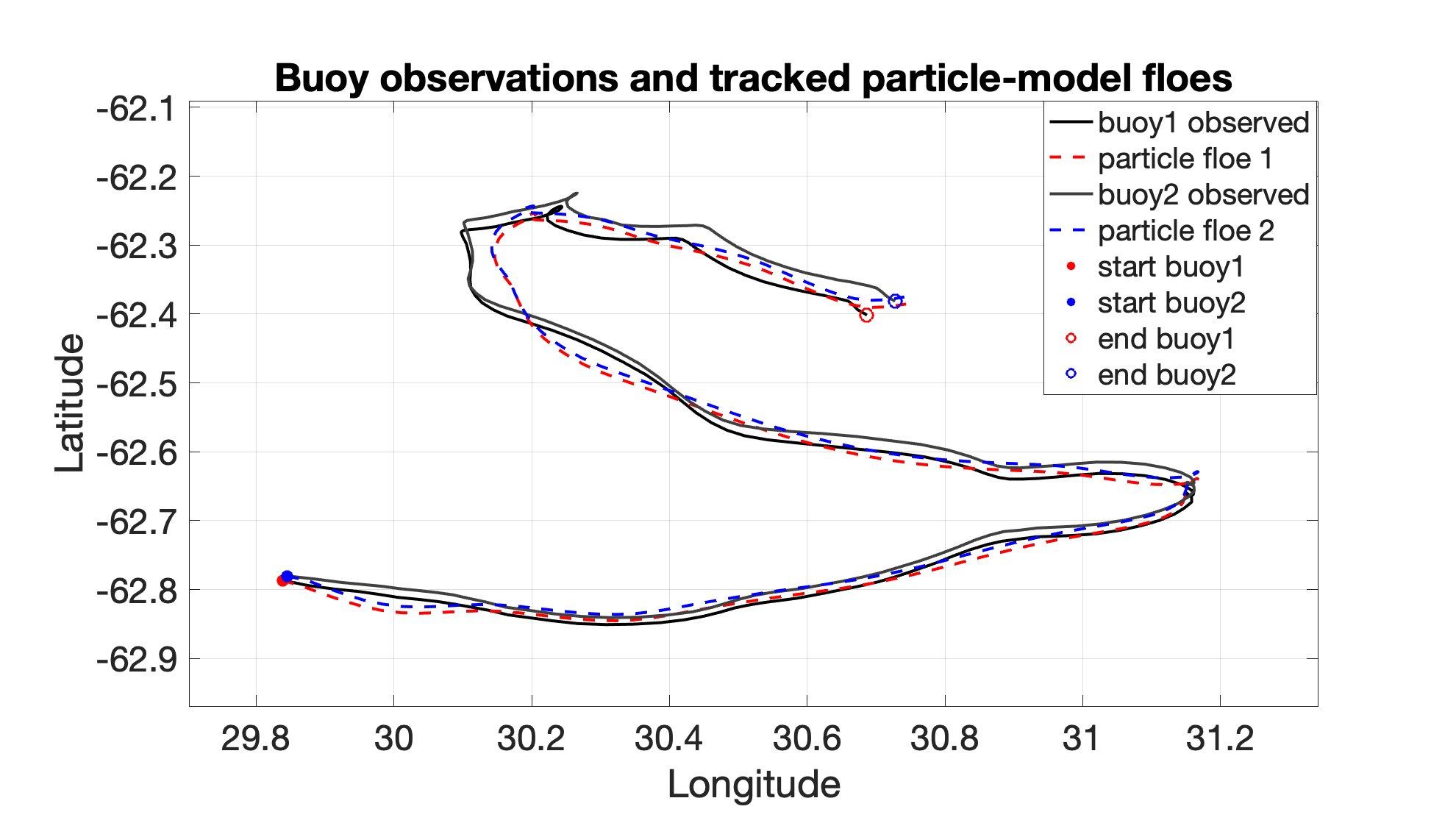}
\caption{Comparison of the simulated floe and true buoy trajectories.}
\label{fig:ex3t}
\end{figure}

\begin{figure}[ht]\centering
\includegraphics[width=0.98\linewidth]{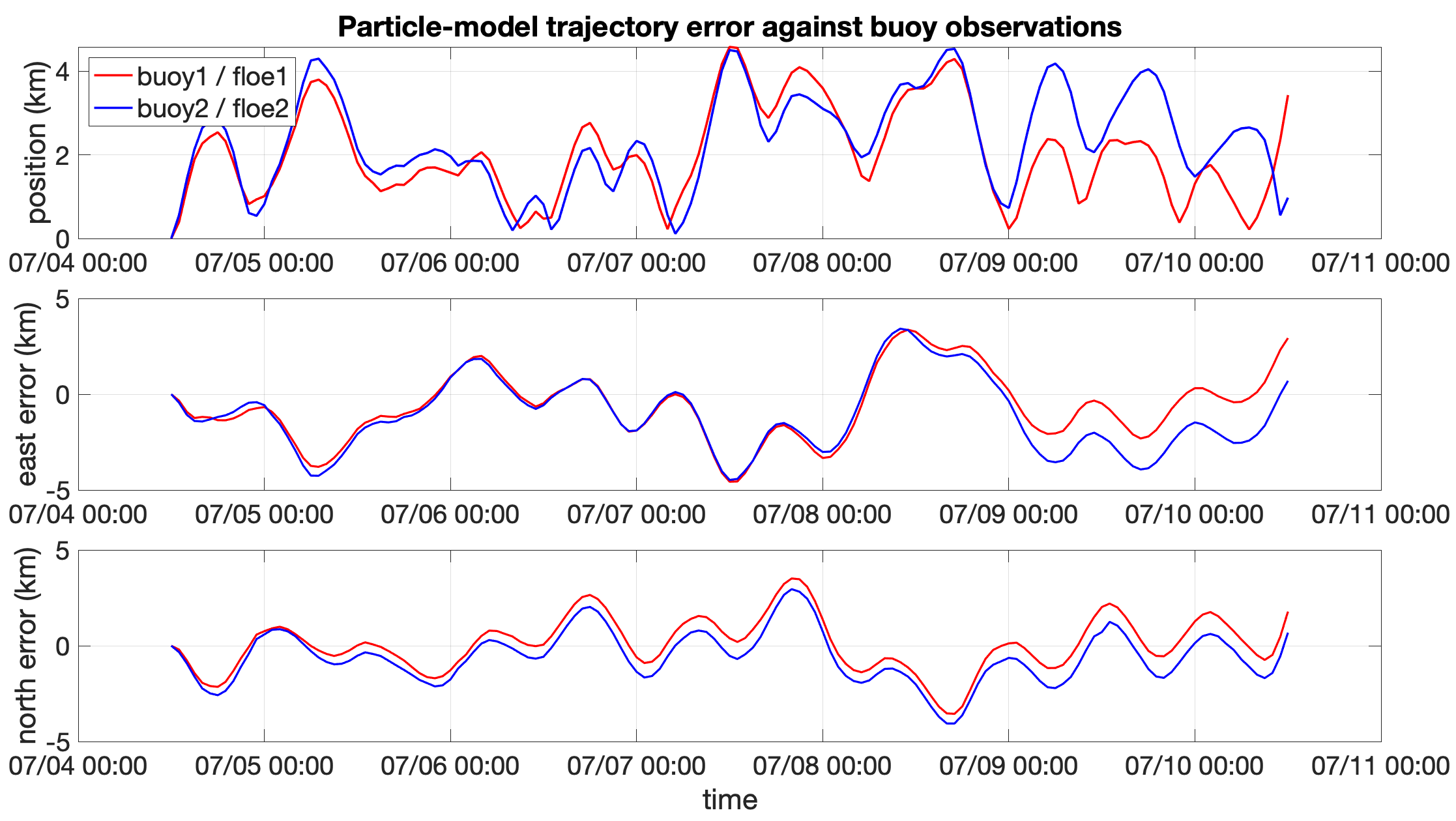}
\caption{Time evolution of errors of the simulated floe and true buoy trajectories.}
\label{fig:ex3e}
\end{figure}

\begin{figure}[ht]\centering
\includegraphics[width=0.98\linewidth]{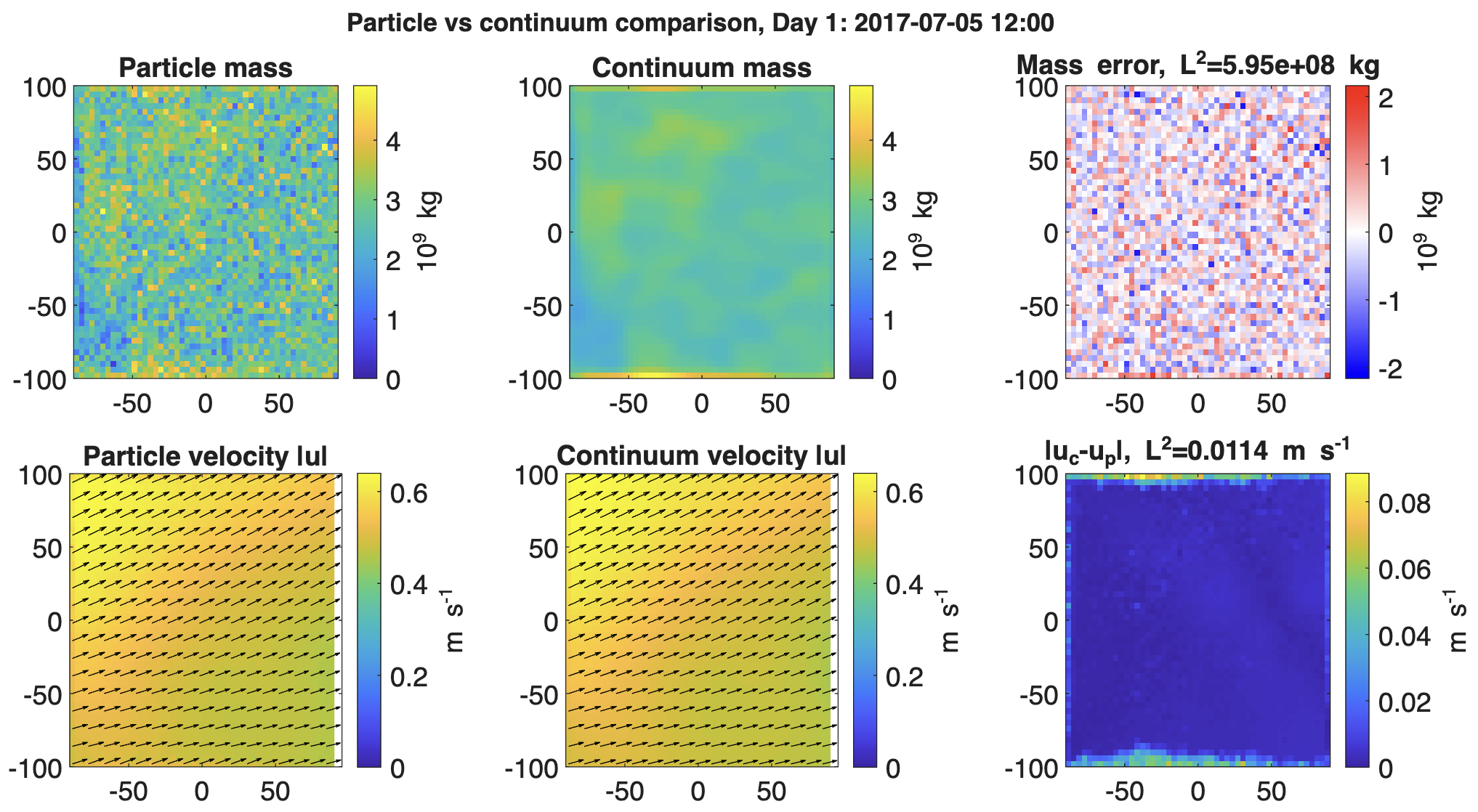}
\caption{Particle--continuum comparison of cell mass and linear velocity at Day 1, 2017-07-05 12:00.}
\label{fig:ex3d1}
\end{figure}

\begin{figure}[ht]\centering
\includegraphics[width=0.98\linewidth]{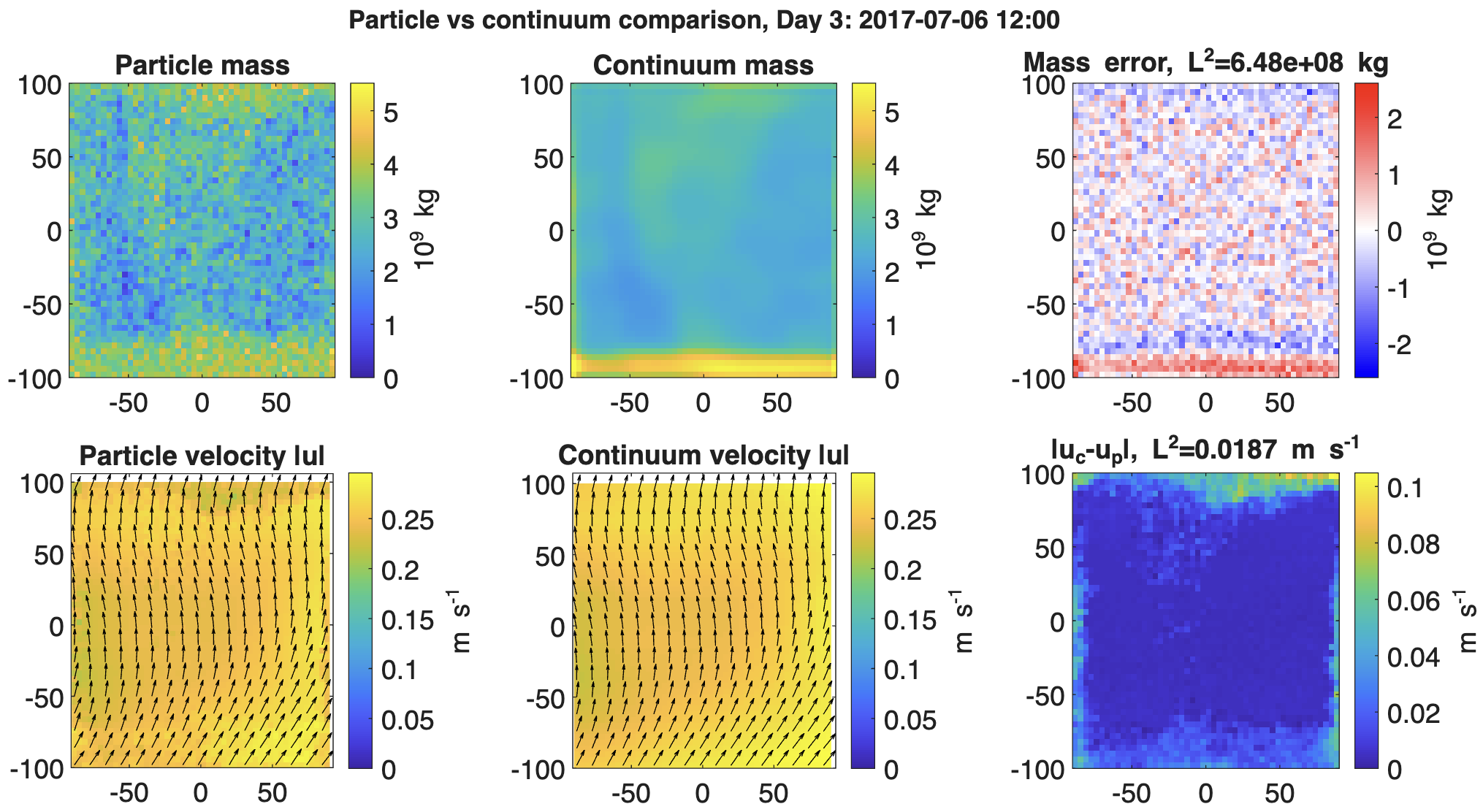}
\caption{Particle--continuum comparison of cell mass and linear velocity at Day 3, 2017-07-06 12:00.}
\label{fig:ex3d3}
\end{figure}

\subsection{Example 3: Comparison with field observations}\label{sec:ex3}

This example tests the particle and hydrodynamic models in a winter Antarctic MIZ setting.  The field data were collected during a voyage of the icebreaker \emph{SA Agulhas II} \cite{de2018sea}, which departed Cape Town on 28 June 2017 and reached the MIZ on 4 July 2017 near $62.5^\circ$S, $30^\circ$E, while a polar cyclone was crossing the ice edge \cite{alberello2020drift}.  Two buoys started recording at 12:00 on 4 July 2017 near $62.8^\circ$S, $29.8^\circ$E, about 100 km from the ice edge and about 1 km apart.  The water depth was about 5000 m.  The buoys recorded position and wave spectral data, but did not carry meteorological instruments.  Buoy B1 sampled every 15 min for 8 days and 18 hr, until 06:00 on 13 July 2017.  Buoy B2 sampled every 15 min for the first 6 days and then every 2 hr, giving a record of about 3 weeks.  Here we use the common early interval needed for the two-buoy drift comparison.  The relative position precision was about 1 m, corresponding to a velocity uncertainty of approximately $10^{-3}$ m s$^{-1}$, which is small compared with the observed drift speeds.


The wind forcing is taken from ERA5 10m wind fields \cite{hersbach2020era5}, provided hourly on a $0.25^\circ$ grid.  The ocean-current forcing is taken from the NASA ECCO ocean velocity daily mean product at $0.5^\circ$ resolution \cite{fenty2020ecco}.  We use the zonal and meridional velocity components from this product as the large-scale ocean drift forcing, interpolated to the floe positions and continuum grid.  
For the simulation, we consider the domain bounded by longitude East $[28.9,32.4]$ and latitude South $[-63.3,-61.5]$, corresponding to a rectangular domain of size 180 km by 200 km.  This domain contains both buoy trajectories and the available atmospheric and ocean forcing data.  The simulation interval is 2017-07-04 12:00--2017-07-10 12:00, covering the common high-frequency portion of the two buoy records used in the comparison.

The field data show that the floes are of diameters around 3.2 m \cite{alberello2019brief,alberello2022three}, which leads to $\ge 10^9$ floes in the given simulation region. For computational cost reduction and simplicity, we adopt the superfloe parameterization \cite{chen2022superfloe} to initialize the floe field with a concentration of 0.6, diameters from 500 m to 2 km, and a power law distribution with exponent 2. This leads to $N=27570$ floes in the region with initial floe concentration 0.597. The two buoy-tracked particles are initialized at the observed buoy locations.  To avoid an initial overlap between the two tracked floes, their numerical diameter is capped at 400 m, based on the observed initial buoy separation 788.76 m. The remaining floes are initialized randomly in the domain for the particle simulation.  The particle equations include wind drag, ocean drag, Coriolis force, surface-slope forcing, and floe--floe collisions through the normal contact and Coulomb-capped tangential contact laws used in the preceding examples.  The drag coefficients are set as $\alpha_w=1.34217\times10^{-6}$ and $\alpha_o=1.12059\times10^{-3}$.

The continuum model is solved on a $50\times50$ grid over the domain.  
The grid is used to coarse-grain the particle data for the particle--continuum comparison.  Both models are initialized with the buoy1 translational velocity.  The initial angular velocity is computed from the ERA5 wind-vorticity field and normalized so that $\max |\omega|\le 10^{-3}$ rad s$^{-1}$.  For each grid cell, the cell mass is the total mass of the floes in that cell.  The particle velocity is the mass-weighted cell average.

Figure~\ref{fig:ex3t} compares the observed buoy trajectories with the simulated floe trajectories.  The model follows the observed large-scale drift well for both buoys over the six-day simulation window.  The remaining errors are modest relative to the approximately 100 km drift scale and to the severe wind, wave, and ocean forcing during the storm-influenced MIZ.  The mean position errors are 2.16 km for buoy1 and 2.50 km for buoy2 (see Figure ~\ref{fig:ex3e}). The final position errors are 3.12 km and 0.70 km. The maximum errors are 4.94 km and 4.80 km, respectively.  
Figure~\ref{fig:ex3e} shows that the east--west, north--south, and total error curves remain bounded and do not exhibit systematic growth over the full interval.  This indicates that the particle model captures the dominant dynamics of the observed floes.

Figures~\ref{fig:ex3d1}--\ref{fig:ex3d3} compare the coarse-grained particle fields with the continuum solution on the $50\times50$ grid. The particle mass field contains visible cell-to-cell fluctuations, as expected from binning a finite number of discrete floes, while the continuum mass field is smoother.  Nevertheless, the two models produce the same large-scale redistribution pattern.  The velocity fields agree well. The particle and continuum velocities have consistent direction and magnitude over most of the domain.
At 2017-07-05 12:00, the relative cell-mass $L^2$-error is 0.2057 and the velocity $L^2$-error is 0.0115 m s$^{-1}$.  At 2017-07-06 12:00, the corresponding errors are 0.2262 and 0.0185 m s$^{-1}$.  At 2017-07-10 12:00, the errors are 0.2388 and 0.0406 m s$^{-1}$. These values are consistent with the visual agreement in the plotted fields.

In summary,
Examples~1--3 provide complementary validation of the particle dynamics and the particle--kinetic--hydrodynamic hierarchy developed in this work. In Example~1, we simulate $n=100$ rotating, colliding floes in the periodic domain $\Omega=[-\pi,\pi]^2$ under a constant ocean velocity $\bfu\equiv(0.3,0)^{T}$. The results confirm the theoretical long-time velocities and angular velocity alignment with the ocean velocity. The corresponding kinetic energies exhibit the expected dissipative trends, while momentum and angular momentum satisfy the correct balance laws in the presence of ocean drag. 
In Example~2, we use the same domain, time stepping, and physical parameters, but impose a spatially varying rotational ocean flow and place $n=14400$ identical floes on a $120\times 120$ lattice with zero initial velocities. Solving the continuum system \eqref{eq:LI2H} on a $30\times 30$ grid using a linear finite element mesh and coarse-graining the particle data onto the same grid, we find that the particle and continuum fields match well.
Example 3 is a realistic simulation that shows agreement with the buoy observations and with the continuum model.
Overall, these examples demonstrate that the hydrodynamic closure reproduces the large-scale statistics of the particle model and thereby support the consistency of the proposed multiscale hierarchy.

\section{Concluding remarks} \label{sec:conclusion}
In this paper, we have extended the particle–kinetic–hydrodynamic hierarchy developed in Part~I to a richer and more realistic setting by incorporating rotational degrees of freedom, nonlinear contact interactions, Coriolis force, and ocean tilt among sea-ice floes. 
Starting from a rigid-body particle description with force–torque coupling, we derived an associated kinetic equation on an extended phase space and obtained macroscopic hydrodynamic balance laws for mass, linear momentum, and angular momentum through moment closures. 
We established the total energy dissipation, which captures the physics of energy loss due to the floe-floe collision.
The resulting framework reveals how nonlinear collisions, frictional effects, and rotational dynamics generate additional stress and dissipation mechanisms at the macroscopic level, thereby providing a systematic and physically consistent multiscale description of fragmented sea ice in the marginal ice zone.

Several important directions remain open for future research. One natural extension is to couple the present mechanical framework with a temperature field, allowing floe sizes, masses, and moments of inertia to evolve through melting and refreezing processes. Such a thermo-mechanical coupling would introduce additional transport and source terms at all scales and raise fundamental questions about energy consistency, scale separation, and closure strategies, leading to more realistic sea ice rheology. 
A further important extension is to incorporate wave–ice interactions, including wave scattering and attenuation, which are fundamental to modeling the MIZ. 
This direction is also consistent with the perspective of \cite{squire2022prognosticative}, which emphasizes the need for sea-ice rheologies that better represent the MIZ under diverse stress regimes through DEM particle models with suitably aggregated continuum descriptions.
Future work may also consider out-of-plane floe motions such as rafting or ridging, as well as edge crumbling during collisions, which would introduce additional deformation, fragmentation, and dissipation mechanisms beyond planar rigid-floe dynamics.
Another challenging direction concerns floe fracture and bonding, corresponding to a dynamically varying number of particles due to breakup, aggregation, or refreezing-induced bonding. From a modeling and analytical perspective, this leads to nontrivial difficulties in tracking collective behavior, conservation laws, and statistical descriptions, when the underlying particle number changes in time. Developing a coherent multiscale theory that accommodates variable particle numbers while retaining tractable kinetic and hydrodynamic limits remains an open and promising problem for future study.

\section*{Acknowledgements}
Q.D. is partially supported by the Tsinghua University Dushi program, the start-up funding from the Yau Mathematical Sciences Center, Tsinghua University, and the Australian National Computing Infrastructure (NCI) national facility under grant zv32. The work of S.-Y. Ha is supported by National Research Foundation(NRF) grant funded by the Korea government(MSIT) (RS-2025-00514472). LGB is funded by the Australian Research Council (FT190100404, DP240100325).

\bibliographystyle{siam}
\bibliography{ref}

\vspace{1cm}
\appendix{\noindent \textbf{\Large{Appendix A: Local Homogeneity Assumption}}}
\label{app:locassum}

In the derivation of macroscopic balance laws \eqref{eq:LI2H} from the kinetic description of sea ice, we adopted the local homogeneity assumption to further simplify the integrals on contact operator in \eqref{eq:T5j1} and \eqref{eq:T5j2}. 
This assumption formalizes the notion that, at sufficiently small scales around a given macroscopic position, the ice floe distribution exhibits negligible spatial gradients. Under this somewhat strong assumption, contact-operator related terms vanish in analogy with \cite{bardos1991fluid,golse2003mean,golse2005boltzmann} for gas particles, enabling the simplified hydrodynamic equations \eqref{eq:LI2H}.


\paragraph{(Local homogeneity assumption)}
We say that \(F(\bfx,\bfv,\theta, \omega, r,h)\) is locally homogeneous at \(\bfx\) if, for \(\bfss\) in a neighborhood of zero with \(|\bfss|\ll L\) for some macroscopic scale \(L>0\),
\begin{equation*} \label{eq:lochomo}
F(t,\bfx+\bfss,\bfv,\theta, \omega, r,h)\approx F(t,\bfx,\bfv,\theta, \omega, r,h),
\end{equation*}
and similarly for its moments.
This condition asserts that the distribution varies slowly relative to the characteristic length scales of the collision/contact interactions.

The local homogeneity assumption reflects the physical idea that, at scales comparable with the interaction range of floes (e.g., contact kernel support), the macroscopic fields vary negligibly. In other words, the spacing between colliding or interacting floes is small relative to macroscopic gradients in density and velocity. This justifies treating \(F\) and its moments as essentially constant over the support of the collision/contact operator.



Recall \eqref{eq:T52} that 
\begin{align*}
\T_{52}
& = \int_{\hat{\bbr}_3} m \bff_{c,\bfn}[F](t,\bfz) F(t,\bfy) d\bfy \\
& = \int_{\hat{\bbr}_3}\int_{\hat{\bbr}_4}
m \kappa_1\big(|\bfx^*-\bfx|-(r+r^*)\big) \bfn(\bfx,\bfx^*) 
F(t,\bfz^*) d\bfz^* F(t,\bfy) d\bfy, \nonumber
\end{align*}
where $\bfy=(\bfv,\theta, \omega, r,h)$, $\bfz^*=(\bfx^*,\bfv^*,\theta^*, \omega^*,r^*, h^*)$.
Using the local homogeneity assumption and exchange of variables, 
the internal integration is approximated as
\begin{align*}
m \bff_{c,\bfn}[F](t,\bfz) & =
\int_{\hat{\bbr}_4}
m \kappa_1\big(|\bfx^*-\bfx|  -(r+r^*)\big) \bfn(\bfx,\bfx^*) 
F(t,\bfx^*, \bfy^*) d\bfx^* d\bfy^* \\
& = \int_{\hat{\bbr}_4}
m \kappa_1\big(|\bfx-\bfx^*|  -(r^*+r)\big) \bfn(\bfx^*,\bfx) 
F(t,\bfx, \bfy^*) d\bfx d\bfy^* \\
& = - \int_{\hat{\bbr}_4}
m \kappa_1\big(|\bfx^*-\bfx|  -(r+r^*)\big) \bfn(\bfx,\bfx^*) 
F(t,\bfx, \bfy^*) d\bfx d\bfy^* \\
& = \frac{1}{2}\int_{\hat{\bbr}_4}
m \kappa_1\big(|\bfx^*-\bfx|  -(r+r^*)\big) \bfn(\bfx,\bfx^*) 
\big( F(t,\bfx^*, \bfy^*) d\bfx^* - F(t,\bfx, \bfy^*) d\bfx \big) d\bfy^* \\
& \approx 0,
\end{align*}
where the last approximation is based on the local homogeneity assumption of moments. Thus, $\T_{52}\approx 0.$ 
This may be interpreted as the collisional force being approximately zero (forces surrounding a floe at $\bfx$ cancel) in the distribution sense, as the number of floes goes to infinity, assumed in Section \ref{sec:kmodel} for kinetic description.
Similarly, one can derive $\T_{53} \approx 0$ and using this local assumption and that $(\bfv  - \bfv^*) \cdot \bfn(\bfx, \bfx^*) = (\bfv^*  - \bfv) \cdot \bfn(\bfx^*, \bfx)$ and $\T_{54} \approx 0$ using $\bft(\bfx, \bfx^*) = - \bft(\bfx^*, \bfx)$.
The term $\bfx \times \T_{54} + \T_{56}$ vanishes following a similar derivation in \eqref{eq:Ta}. This leads to the simplified hydrodynamic model \eqref{eq:LI2H} which was studied in the numerical experiments in Section \ref{sec:num}.

On the other hand, the local homogeneity assumption have limitations. 
It is a heuristic approximation that ensures the contact operator conserves mass and momentum as in  \cite{bardos1991fluid,golse2003mean,golse2005boltzmann} for gas particles. 
However, sea ice floe particles, with different Knudsen number \cite{karniadakis2005microflows}, are different from gas particles (typically for dilute regime with small rare collisions, i.e., Boltzmann–Grad limit).  
They are most justified when macroscopic gradients vary on scales much larger than the contact interaction range. 
In regimes with strong shear or boundary effects, deviations from local homogeneity should be considered and this would lead to more realistic sea ice rheology \cite{shen1987role,feltham2008sea,herman2022granular}.

We take the momentum balance law \eqref{eq:MK_mom} as an example to derive the contact stress tensor as in sea ice rheology discussed in \cite{feltham2008sea}.
We recall the balance law for momentum in \eqref{eq:MK_mom}:
\begin{equation*}\label{eq:MK2H_recall}
\partial_t(\rho \bfuu)+\nabla_{\bfx}\cdot(\rho \bfuu\otimes \bfuu)
=\int_{\hat{\bbr}_3} m\bff[F]Fd\bfy, \qquad \bff[F]=\bff_o+\bff_c[F],
\end{equation*}
where $\bff_o=\gamma_o(\bfu-\bfv)|\bfu-\bfv|$ and
$
\bff_c[F](t,\bfz)=\int_{\hat{\bbr}_4}\bff_c(\bfz,\bfz^*)F(t,\bfz^*)d\bfz^*
$
with $\bff_c(\bfz,\bfz^*)$ containing the normal, damping, and tangential components defined in \eqref{eq:hnote}.
With mono-kinetic ansatz \eqref{eq:monoans}, we define and calculate the contact contribution
\begin{align*}
    \bfb_c(t,\bfx) & :=\int_{\hat{\bbr}_3} m\bff_c[F](t,\bfx,\bfy)F(t,\bfx,\bfy)d\bfy \\
    & = \int_{\hat{\bbr}_3}\int_{\hat{\bbr}_4}
m(\bfy)\bff_c(\bfz,\bfz^*)
F(t,\bfz)F(t,\bfz^*)d\bfz^*d\bfy \\
& = \rho(t,\bfx) \int_{\hat{\bbr}_3}\int_{\hat{\bbr}_4} 
\Phi \frac{\rho(t,\bfx^*)}{m(r^*,h^*)}\Phi^* \bff_c(\bfz,\bfz^*) d\bfz^*d\bfy,
\end{align*}
where $\Phi = \Phi(t,\bfx,\theta,r,h)$ and $\Phi^* = \Phi(t,\bfx^*,\theta^*,r^*,h^*)$.
To obtain the contact stress tensor, we adopt the Irving--Kirkwood form and introduce the bond-localization kernel
\begin{equation*}
B(\bfx;\bfx,\bfx^*):=\int_0^1 \hat\delta\!\big(\bfx-(1-s)\bfx-s\bfx^*\big)ds,
\end{equation*}
and use the standard Irving--Kirkwood identity
\[
\nabla_{\bfx}\cdot\big((\bfx^*-\bfx)B(\bfx;\bfx,\bfx^*)\big)=\hat\delta(\bfx-\bfx^*)-\hat\delta(\bfx-\bfx).
\]
Then the contact force density admits the stress representation
\begin{equation*}
\bfb_c(t,\bfx)=\nabla_{\bfx}\cdot\sigma_c^{\rm MK}(t,\bfx),
\end{equation*}
where the monokinetic contact stress tensor is
\begin{equation*}
\sigma_c^{\rm MK}(t,\bfx)
=
-\frac12 \int_{\hat{\bbr}_4^2}  (\bfx^*-\bfx) \otimes \big(\rho(t,\bfx)\Phi \big)
\frac{\rho(t,\bfx^*)}{m(r^*,h^*)}\Phi^* \bff_c(\bfz,\bfz^*) B(\bfx;\bfx,\bfx^*) d\bfz^*d\bfy
\end{equation*}
Moreover, one may decompose the terms as 
$
\sigma_c^{\rm MK}=\sigma_{\bfn}^{\rm MK}+\sigma_{\bfv}^{\rm MK}+\sigma_{\bft}^{\rm MK},
$
obtained by replacing $\bff_c$ by $\bff_{\bfn}$, $\bff_{\bfv}$, and $\bff_{\bft}$, respectively.
Using the contact stress tensor, the momentum balance law \eqref{eq:MK_mom} reduces to 
\begin{equation*}
\partial_t(\rho\bfuu)+\nabla_{\bfx}\cdot(\rho\bfuu\otimes\bfuu)
=
\nabla_{\bfx}\cdot\sigma_c^{\rm MK}
+\bar\alpha_o (\bfu-\bfuu)|\bfu-\bfuu|+\bar\alpha_w (\bfw-\bfuu)|\bfw-\bfuu| - \rho f_E \bfz \times \bfuu + \rho \bfuu_g.
\end{equation*}
Herein, the right-hand side consists of a contact stress divergence and an effective environmental-force induced stress as in sea ice rheology discussed in \cite{hibler1979dynamic,shen1987role,feltham2008sea,herman2022granular}.
We remark that the balance law for angular momentum can be developed similarly. Incorporating rotational effects enriches the continuum description of sea ice rheology in the literature; a detailed development will be pursued in future work.

\end{document}